\newcommand{\tit}{A Practical Oblivious Map Data Structure\texorpdfstring{\\}{}
with Secure Deletion and History Independence}
\newif\ifpreprint
\def\maxwidth#1{\ifdim\Gin@nat@width>#1 #1\else\Gin@nat@width\fi}
\newcommand{\BT}{\begin{theorem}}
\newcommand{\ET}{\end{theorem}}
\newcommand{\BD}{\begin{definition}}
\newcommand{\ED}{\end{definition}}
\newcommand{\BCR}{\begin{corollary}}
\newcommand{\ECR}{\end{corollary}}
\newcommand{\BEX}{\begin{example}}
\newcommand{\EEX}{\end{example}}
\newcommand{\BL}{\begin{lemma}}
\newcommand{\EL}{\end{lemma}}
\newcommand{\BP}{\begin{proposition}}
\newcommand{\EP}{\end{proposition}}
\newcommand{\BCM}{\begin{claim}}
\newcommand{\ECM}{\end{claim}}
\newcommand{\BPF}{\begin{proof}}
\newcommand{\EPF}{\end{proof}}
\newcommand{\BEN}{\begin{enumerate}}
\newcommand{\EEN}{\end{enumerate}}
\newcommand{\BI}{\begin{itemize}}
\newcommand{\EI}{\end{itemize}}
\newcommand{\BO}{\begin{observation}}
\newcommand{\EO}{\end{observation}}
\newcommand{\BDS}{\begin{description}}
\newcommand{\EDS}{\end{description}}
\def\EE{{\mathbb E}}       %
\def\AAA{{\mathcal A}}     %
\def\DDD{\ensuremath{\mathcal{D}}}     %
\def\zo{{\{0,1\}}}    %
\def\from{{\leftarrow\,}}
\def\E{\mathop{\mathbb{E}}\displaylimits}   %
\newcommand{\me}{\ensuremath{\mathrm{e}}}
\newcommand{\Hash}{{\sf Hash}}
\newcommand{\secparam}{\ensuremath{\lambda}}
\newcommand{\hashparam}{\ensuremath{\gamma}}
\newcommand{\nodesize}{\ensuremath{\mathsf{nodesize}}}
\newcommand{\gconst}{\ensuremath{c_0}}
\newcommand{\ignore}[1]{}
\def\Enc{\mathsf{Enc}}
\newcommand{\negl}{{\sf negl}}
\newcounter{defcounter}
\newlength{\protowidth}
\newcommand{\advan}{\ensuremath{\mathbf{Adv}}}
\newcommand{\olrk}[1]{%
   \ifx\nursymbol#1\else\!\!\mskip4.5mu plus 0.5mu\left(#1\right)\fi}
\newcommand{\elrk}[1]{%
   \ifx\nursymbol#1\else%
        \!\!\mskip4.5mu plus0.5mu\left[\mskip2.5mu plus0.5mu #1\right]\fi}
\newcommand{\st}{\textsc{st}}
\def\init{\ensuremath{\mathsf{init}}}
\def\delete{\ensuremath{\mathsf{delete}}}
\newcommand{\set}{\ensuremath{\mathsf{set}}}
\newcommand{\get}{\ensuremath{\mathsf{get}}}
\newcommand{\hirbinit}{\ensuremath{\mathsf{hirbinit}}}
\newcommand{\chooseheight}{\ensuremath{\mathsf{chooseheight}}}
\newcommand{\hirbpath}{\ensuremath{\mathsf{HIRBpath}}}
\newcommand{\dskey}{\ensuremath{\mathsf{label}}}
\newcommand{\dsval}{\ensuremath{\mathsf{value}}}
\newcommand{\child}{\ensuremath{\mathsf{child}}}
\def\mem{\ensuremath{\mathsf{em}}}
\def\hdd{\ensuremath{\mathsf{ps}}}
\def\op{\ensuremath{\mathsf{op}}}
\def\acc{\ensuremath{\mathsf{acc}}}
\def\data{\ensuremath{\mathsf{data}}}
\def\sdel{\ensuremath{{\mathsf{sdel}}}}
\def\obh{\ensuremath{{\mathsf{obl\mbox{-}hi}}}}
\def\ob{\ensuremath{{\mathsf{obl}}}}
\def\init{\ensuremath{\mathsf{Init}}}
\def\hi{\ensuremath{{\mathsf{hi}}}\xspace}
\newcommand{\tupsep}{\ensuremath{,\ }}
\def\exphi{\mathsf{EXP}_{\AAA}^{\obh}(\DDD, \secparam, n, 1, b)}
\def\exphiz{\mathsf{EXP}_{\AAA}^{\obh}(\DDD, \secparam, n, 1, 0)}
\def\exphio{\mathsf{EXP}_{\AAA}^{\obh}(\DDD, \secparam, n, 1, 1)}
\def\advhi{\advan_\AAA^\hi(\DDD,\secparam, n)}
\def\expsdel{\mathsf{EXP}_{\AAA_1, \AAA_2, \AAA_3}^{\sdel}(\DDD, \secparam, n, b)}
\def\expsdelz{\mathsf{EXP}_{\AAA}^{\sdel}(\DDD, \secparam, n, 0)}
\def\expsdelo{\mathsf{EXP}_{\AAA}^{\sdel}(\DDD, \secparam, n, 1)}
\def\advsdel{\advan_\AAA^\sdel(\DDD,\secparam, n)}
\def\expob{\mathsf{EXP}_{\AAA}^{\obh}(\DDD, \secparam, n, 0, b)}
\def\expobz{\mathsf{EXP}_{\AAA}^{\obh}(\DDD, \secparam, n, 0, 0)}
\def\expobo{\mathsf{EXP}_{\AAA}^{\obh}(\DDD, \secparam, n, 0, 1)}
\def\advob{\advan_\AAA^\ob(\DDD,\secparam, n)}
\def\expobh{\mathsf{EXP}_{\AAA_1, \AAA_2}^{\obh}(\DDD, \secparam, n, h, b)}
\renewcommand{\paragraph}[1]{\medskip \noindent \textbf{#1}~}
\renewcommand{\Vec}[1]{\overrightarrow{{#1}}}
\def\evict{\ensuremath{{\sf evict}}}
\def\writeback{\ensuremath{{\sf writeback}}}
\def\idgen{\ensuremath{{\sf idgen}}}
\def\stash{\ensuremath{{stash}}}
\def\path{\ensuremath{{\sf loc}}}
\def\nil{\ensuremath{{\sf nil}}}
\newcommand{\visoram}[0]{%
\begin{figure}[t]
\centering
\includegraphics[width=\minof{\linewidth}{4 in}]{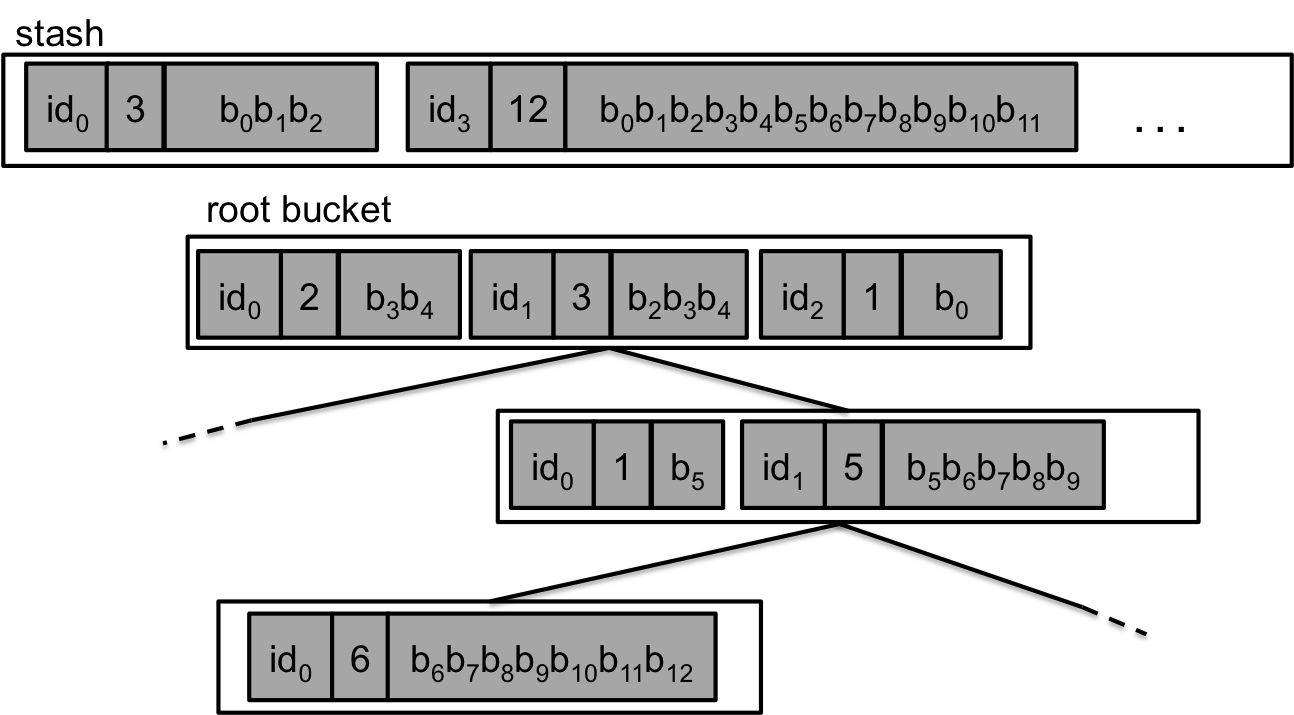}
\caption{A sample vORAM state with partial blocks with $id_0$, $id_1$,
  $id_2$, $id_3$: Note that the partial blocks for $id_0$ are
  opportunistically filled up the vORAM from leaf to root and then
  remaining partial blocks are placed in the stash.}
\label{fig:voram-vis}
\end{figure}%
}
\newcommand{\graphloadutil}[0]{%
\begin{figure}[t]
\centering
\includegraphics[width=\minof{0.88\linewidth}{5 in}]{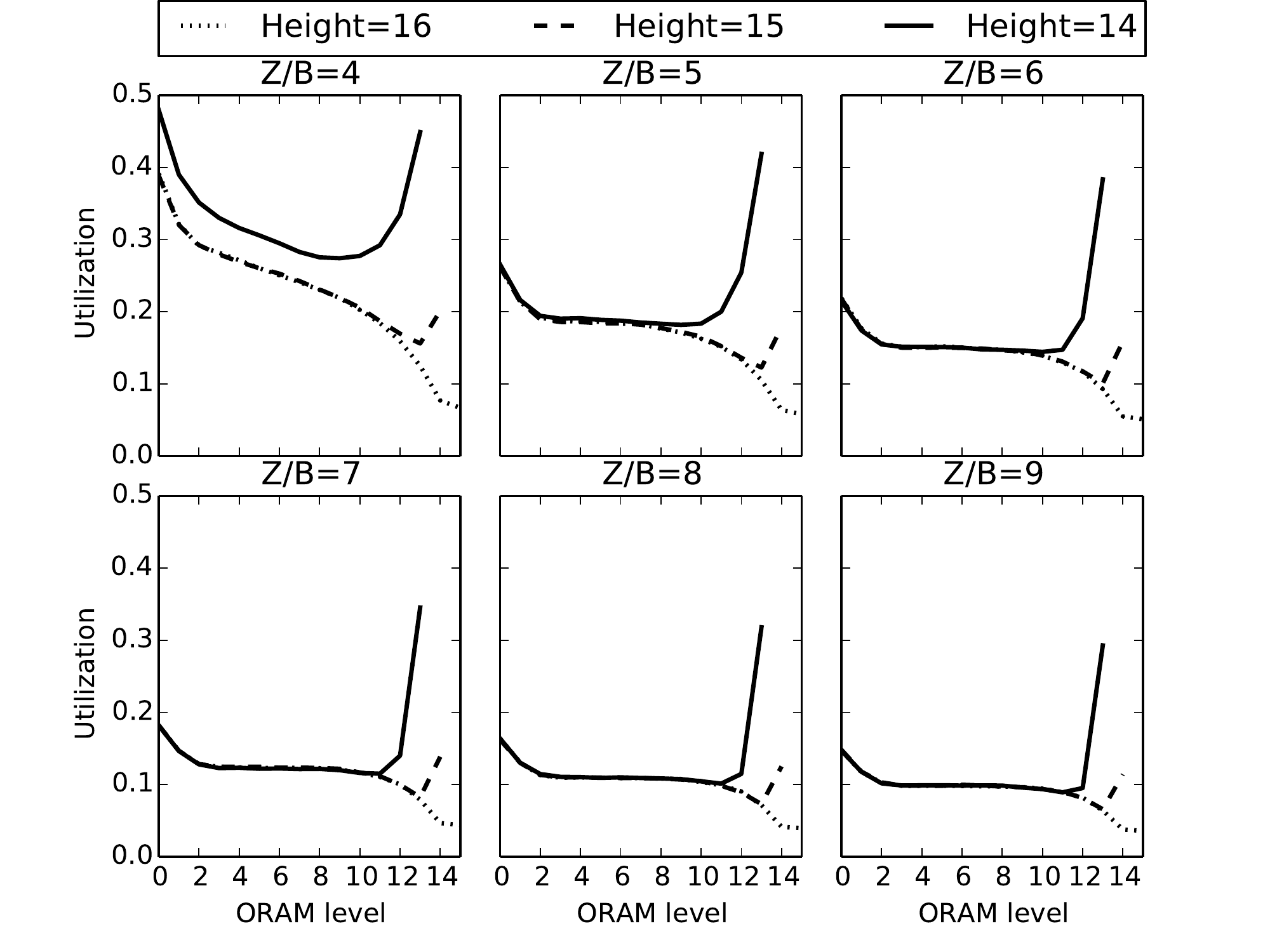}
\caption{Utilization at different levels of ORAM}
\label{fig:loadutil}
\end{figure}%
}
\newcommand{\graphmaxstash}[0]{%
\begin{figure}
\centering
\includegraphics[width=\minof{4.5in}{\linewidth}]{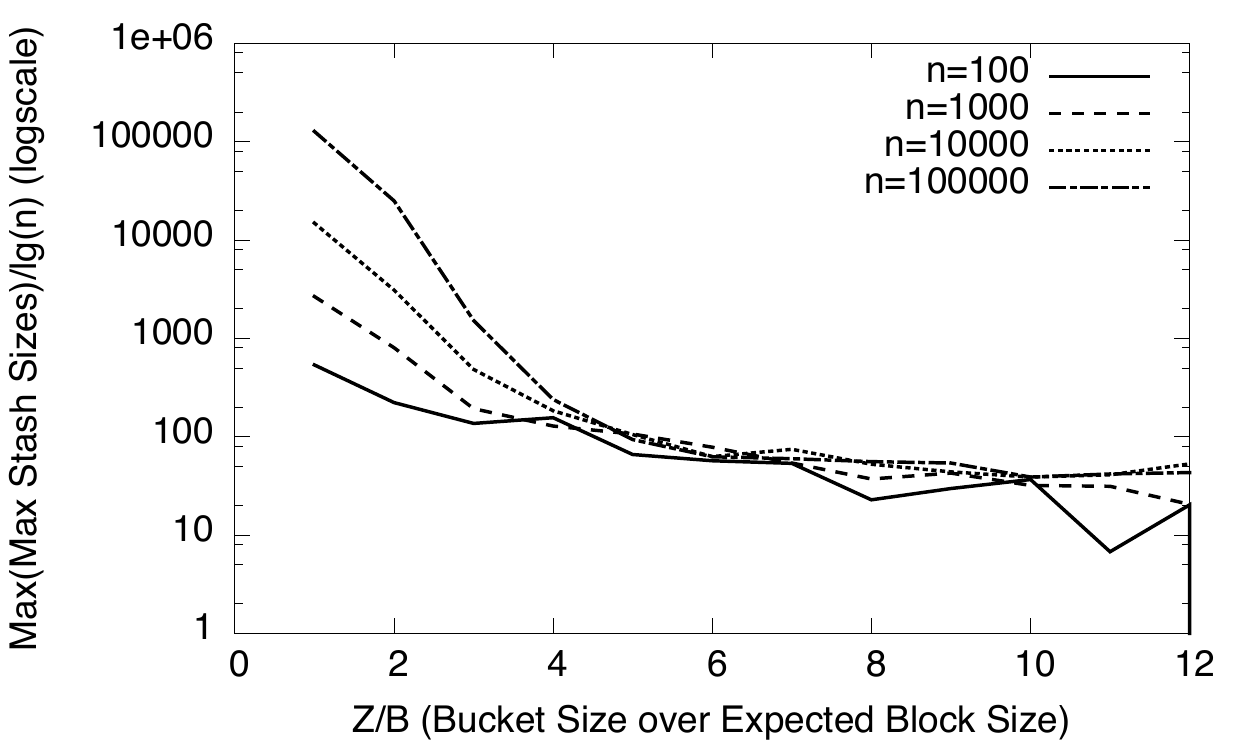}
\caption{Maximum stash size, scaled by $\log n$, observed across 50
  simulations of a vORAM for various $Z/B$ values.}
\label{fig:maxstash}
\end{figure}%
}
\newcommand{\graphmttf}[0]{
\begin{figure}
\centering
\includegraphics[width=\minof{4.5in}{\linewidth}]{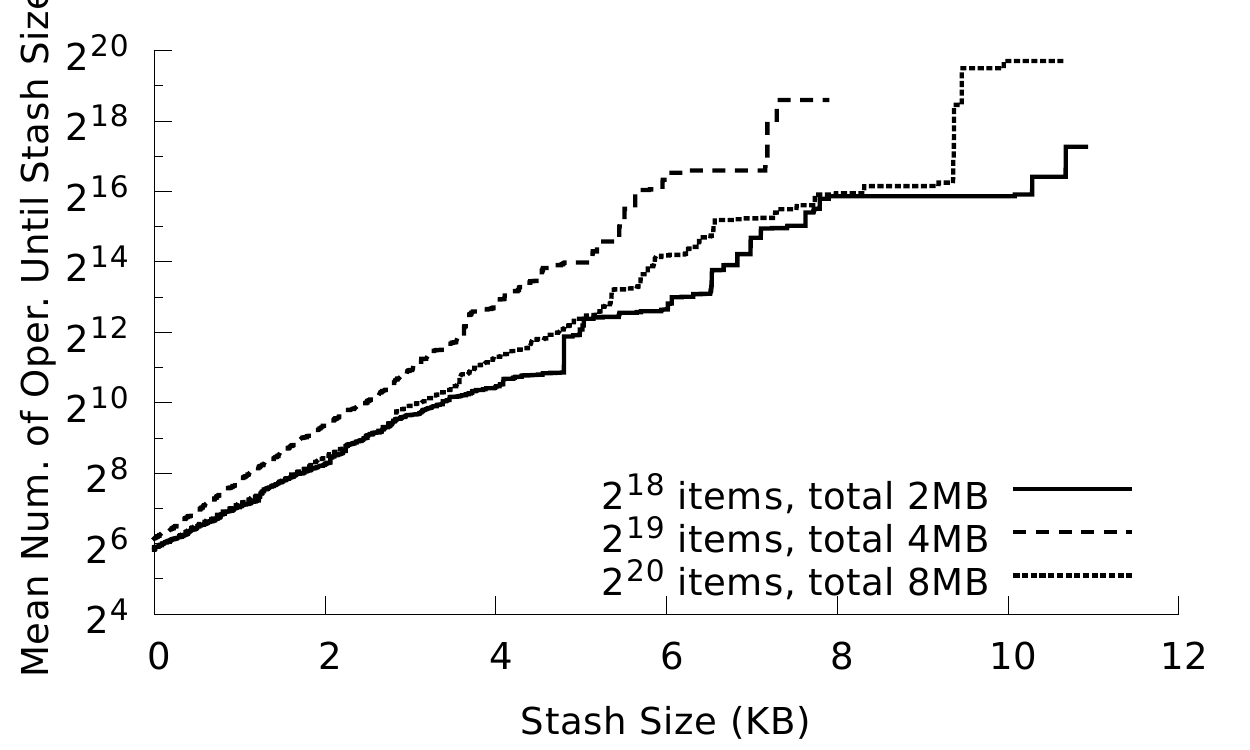}
\caption{Average time until stash overflow, for varying vORAM and stash
sizes. Stash size is linear-scale, number of operations in log-scale.
Higher is better.
For each vORAM size $n$, we performed $2n$ operations to gather
sufficient experimental data.}
\label{fig:mttf}|
\end{figure}%
}
\newcommand{\graphaccesstime}[0]{%
\begin{figure}[t]
\centering
\includegraphics[width=\minof{4.5in}{\linewidth}]{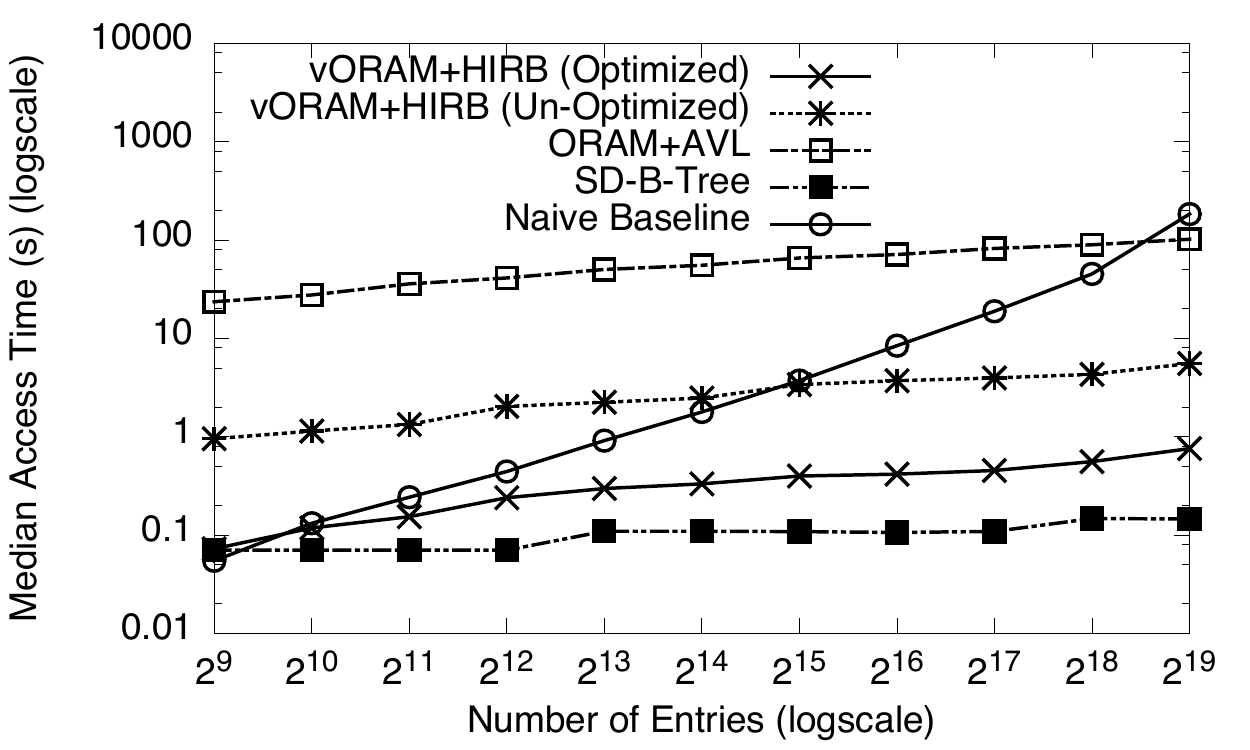}
\caption{Median of 100 access times for different number of entries}
\label{fig:accesstime}
\end{figure}%
}
\newcommand{\tablecomm}[0]{%
\begin{table}[tbp]
\centering

\begin{tabular}{|l|rrr|}
\hline \multicolumn{4}{|c|}{} \\[-3 mm]
\multicolumn{4}{|c|}{Size $2^{10}$} \\
\hline
\multicolumn{1}{|c|}{}
  & \multicolumn{1}{c}{Total storage}
  & \multicolumn{1}{c}{Bandwidth}
  & \multicolumn{1}{c|}{Rounds}
  \\
\hline
Naive baseline& 8.2 KB& 8.2 KB& 1\\
Secure deletion B-tree& 36.9 KB& 12.3 KB& 2\\
ORAM+AVL& 8.4 MB& 4.0 MB& 968\\
\textbf{vORAM+HIRB}& \textbf{127.0 KB}& \textbf{102.4 KB}& \textbf{3}\\
\hline
\hline \multicolumn{4}{|c|}{} \\[-3 mm]
\multicolumn{4}{|c|}{Size $2^{15}$} \\
\hline
\multicolumn{1}{|c|}{}
  & \multicolumn{1}{c}{Total storage}
  & \multicolumn{1}{c}{Bandwidth}
  & \multicolumn{1}{c|}{Rounds}
  \\
\hline
Naive baseline& 262.1 KB& 262.1 KB& 1\\
Secure deletion B-tree& 1.1 MB& 20.5 KB& 3\\
ORAM+AVL& 268.4 MB& 8.6 MB& 2096\\
\textbf{vORAM+HIRB}& \textbf{4.2 MB}& \textbf{286.7 KB}& \textbf{4}\\
\hline
\hline \multicolumn{4}{|c|}{} \\[-3 mm]
\multicolumn{4}{|c|}{Size $2^{20}$} \\
\hline
\multicolumn{1}{|c|}{}
  & \multicolumn{1}{c}{Total storage}
  & \multicolumn{1}{c}{Bandwidth}
  & \multicolumn{1}{c|}{Rounds}
  \\
\hline
Naive baseline& 8.4 MB& 8.4 MB& 1\\
Secure deletion B-tree& 33.8 MB& 20.5 KB& 3\\
ORAM+AVL& 8.6 GB& 15.1 MB& 3675\\
\textbf{vORAM+HIRB}& \textbf{134.2 MB}& \textbf{553.0 KB}& \textbf{5}\\
\hline
\hline \multicolumn{4}{|c|}{} \\[-3 mm]
\multicolumn{4}{|c|}{Size $2^{25}$} \\
\hline
\multicolumn{1}{|c|}{}
  & \multicolumn{1}{c}{Total storage}
  & \multicolumn{1}{c}{Bandwidth}
  & \multicolumn{1}{c|}{Rounds}
  \\
\hline
Naive baseline& 268.4 MB& 268.4 MB& 1\\
Secure deletion B-tree& 1.1 GB& 28.7 KB& 4\\
ORAM+AVL& 274.9 GB& 23.2 MB& 5668\\
\textbf{vORAM+HIRB}& \textbf{4.3 GB}& \textbf{901.1 KB}& \textbf{6}\\
\hline
\hline \multicolumn{4}{|c|}{} \\[-3 mm]
\multicolumn{4}{|c|}{Size $2^{30}$} \\
\hline
\multicolumn{1}{|c|}{}
  & \multicolumn{1}{c}{Total storage}
  & \multicolumn{1}{c}{Bandwidth}
  & \multicolumn{1}{c|}{Rounds}
  \\
\hline
Naive baseline& 8.6 GB& 8.6 GB& 1\\
Secure deletion B-tree& 34.6 GB& 36.9 KB& 5\\
ORAM+AVL& 8.8 TB& 33.3 MB& 8122\\
\textbf{vORAM+HIRB}& \textbf{137.4 GB}& \textbf{1.5 MB}& \textbf{8}\\
\hline
\end{tabular}

\smallskip
\caption{Storage and communication cost comparisons.
Total storage is the amount of space required for the server,
and the bandwidth and rounds are counted \emph{per operation}.
Each stored item consists of a 4-byte label and 4-byte value.%
\label{tab:comm}}
\end{table}%
}
\newcommand{\comment}[1]{}
\newtheorem{definition}{Definition}
\newtheorem{theorem}{Theorem}
\newtheorem{lemma}[theorem]{Lemma}
\newcommand{\ins}{\ensuremath{\mathsf{insert}}}
\newcommand{\remove}{\ensuremath{\mathsf{remove}}}
\newcommand{\upd}{\ensuremath{\mathsf{update}}}
\tikzset{
short/.style={draw,rectangle,text height=3pt,text depth=13pt,
  text width=7pt,align=center,fill=gray!30},
long/.style={short,text width=1.5cm},
medium/.style={short,text width=7mm}
}
\newcommand{\fullversion}[1]{#1}
\begin{document}

\title{\tit}

\author{Daniel S.\ Roche}
\author{Adam J.\ Aviv}
\author{Seung Geol Choi}
\affil{United States Naval Academy\\
{\tt [roche,aviv,choi]@usna.edu}}

\date{{\origdate\today}}

\maketitle

\begin{abstract}
  We present a new oblivious RAM that supports variable-sized storage
  blocks (vORAM), which is the first ORAM to allow varying block sizes
  without trivial padding.
  We also present a new history-independent data structure (a HIRB
  tree) that can be stored within a vORAM.
  Together, this construction provides an efficient and practical
  oblivious data structure (ODS) for a key/value map, and goes further
  to provide an additional privacy guarantee as
  compared to prior ODS maps: even upon client compromise,
  deleted data and the history of old operations remain hidden to the
  attacker. 
  We implement and measure the performance of our system using Amazon
  Web Services, and the single-operation time for a realistic
  database (up to $2^{18}$ entries) is less than 1 second. This
  represents a 100x speed-up compared to
  the current best oblivious map data structure
  (which provides neither secure deletion nor history independence)
  by Wang et al.~(CCS 14).
  
\end{abstract}

\section{Introduction}
\label{sec:intro}

\subsection{Motivation}

Increasingly, organizations and individuals are storing large amounts
of data in remote, shared cloud servers. For sensitive data, it is
important to protect the privacy not only of the data itself but also of the
access to the metadata that may contain \emph{which} records have been
accessed and \emph{when},
thereby revealing properties of the underlying data, even if that data
is encrypted.  There are multiple points of potential information
leakage in this setting: an adversary could observe network communication
between the client and server; an adversary could compromise the cloud itself,
observing the data stored at the server, possibly including mirrored copies or
backups; an adversary could observe the
computations performed by the remote server; the adversary may compromise the
locally-stored client data; or, finally, the adversary may compromise the data
in multiple ways, e.g., a complete compromise of both the remotely stored cloud
storage and locally-stored client storage\footnote{We assume an
honest-but-curious server throughout, and leave achieving an ODS with
malicious servers as an open problem.}.

While a complete compromise will inevitably reveal private data, we
seek data storage mechanisms which maximize privacy while maintaining
reasonable, practical efficiency, at any level of compromise. For
generality, we assume a computationally-limited server which may only
store and retrieve blocks of raw data, and we focus on the most basic
(and perhaps most important) data structure: a key/value map.

\paragraph{Oblivious RAM (ORAM).}
With a computationally-limited server, the \emph{access pattern} of
client-server communication reveals the entire history of the remote
data store.
This access pattern, even if the actual data is encrypted, may
leak sensitive information about the underlying stored data, such as
keyword search queries or encryption keys~\cite{ASPLOS:ZhuZhaPan04,
  NDSS:IslKuzKan12, dautrich2013compromising}.

A {\em generic} solution to protect against access pattern leakage is oblivious
RAM (ORAM)~\cite{GO96}, which obscures the operation being
performed (read/write), the address on which it operates, and the
contents of the underlying data.
Any program (with
the possible necessity of some padding) can be executed using an ORAM to hide
the access patterns to the underlying data. 

A great number of ORAM schemes have been recently proposed, most aiming
to improve the efficiency as it relates to the recursive index
structure, which is typically
required to store the hidden locations of items within the ORAM
(for example
\cite{TCC:DamMelNie11,SODA:GMOT12,SODA:KusLuOst12,NDSS:SSS12,CCS:SDSFRY13,usenix:DSS14}
and references therein). 
However, an important aspect overlooked by previous work is the
\emph{size} of data items themselves. The vORAM construction we propose
provides an affirmative answer to the following question:

\begin{quotation}
  \noindent\em
  Can an oblivious RAM hide the size of varying-sized items, with
  greater efficiency than that achieved by trivial padding?
\end{quotation}

\paragraph{Oblivious data structure (ODS).} 
Recently, Wang et al.~\cite{CCS:WNLCSS14} showed that it is possible to provide
obliviousness more efficiently if the specifics of the target program are
considered.
In particular, among other results, Wang et al.~achieved an oblivious data
structure (ODS) scheme for a key-value map, by constructing an AVL tree {\em on
a non-recursive ORAM without using the position map}. Their scheme requires
$\tilde O(\log n)$ ORAM blocks of client storage, where $n$ is the maximum
number of allowable data items. More importantly, due to lack of position map
lookups, the scheme requires only $O(\log^2 n)$ blocks of communication
bandwidth, which constituted roughly an $O(\log n)$-multiplicative improvement
in communication bandwidth over the generic ORAM solution. We will briefly
explain ``the pointer-based technique" they introduced to eliminate the position
map in Section~\ref{sec:ourwork}. 

The practicality of oblivious data structures are challenging, however,
owing to the combination of inefficiencies in the data structures
compounded with that of the underlying ORAM.
In our experimental results presented in \cref{sec:eval}, and
\cref{tab:comm} specifically, we found that the AVL ODS suffers greatly
from a high round complexity, and also that the per-operation bandwidth
exceeds the total database size (and hence a trivial alternative
implementation) until the number of entries exceeds 1 million.

Similar observations for ORAMs more generally were made recently by
Bindschaedler et
al.~\cite{CCS:BNP+15}, who examined existing ORAM alternatives in a realistic
cloud setting, and found many theoretical results lacking in practice.
We ask a related question for ODS, and answer it in the affirmative
with our HIRB data structure stored in vORAM:

\begin{quotation}
  \noindent\em
  Can an oblivious map data structure be made practically useful
  in the cloud setting?
\end{quotation}

\paragraph{Catastrophic attack.}
In the cloud storage scenario, obliviousness will
protect the client's privacy from any observer of network traffic or
from the cloud server itself.
However, if the attacker compromises the client and obtains
critical information such as the encryption keys used in the ODS, all
the sensitive information stored in the cloud will simply be revealed
to the attacker. 

We call this scenario a {\em catastrophic attack}, and it is important
to stress that this attack is quite realistic. The client machine may
be stolen or hacked, or it may even be legally seized due to a
subpoena. 

Considering the increasing incidence of high-profile catastrophic
attacks in practice
(e.g., \cite{photoleak, lastpass}),
and that even government
agencies such the CIA are turning to third-party cloud storage
providers~\cite{cia}, it is important to provide some level of privacy
in this attack scenario. Given this reality, we ask and answer the
following additional question:

\begin{quotation}
\noindent \em Can we provide any privacy guarantee even under a catastrophic
attack?
\end{quotation}

Specifically, our vORAM+HIRB construction will provide strong security
for deleted data, as well as a weaker (yet optimal) 
security for the history of past operations, after complete client
compromise.

\subsection{Security Requirements}
Motivated by the goals outlined previously, we aim to construct a
cloud database system that provides the following two security
properties:
\begin{itemize}
\item Obliviousness: The system should hide both the data and the
    access patterns from an observer of all 
    client-server communication (i.e., be an ODS).
\item Secure Deletion and History Independence: The system, in the
    face of a catastrophic attack, should ensure that no previously
    deleted data, the fact that previous data existed, or the order in
    which extant data has been accessed, is revealed to an attacker.
\end{itemize}
Additionally, we require that the system be practically useful, meaning
it should be {\em more efficient
(w.r.t.~communication cost, access time, and round complexity) than 
previous ODS schemes}, even those
that do not necessarily provide secure deletion nor history independence.

Each required security notion has individually been the focus of
numerous recent research efforts (see Section~\ref{sec:relwork}). To
the best of our knowledge, however, {\em there is no previous work
  that considers all the properties simultaneously.} We aim at
combining the security properties from \emph{obliviousness},
\emph{secure deletion}, and \emph{history independence} into a new,
unified system for secure remote cloud storage. The previous ODS
schemes do not provide history-independence nor secure deletion 
and are inefficient for even small data stores. Previous mechanisms
providing secure deletion or history independence are more efficient,
but do not hide the access pattern in remote cloud storage (i.e., do not
provide obliviousness). And unfortunately, the specific requirements of
these constructions means they cannot trivially be combined in a
straightforward way.

To better understand the necessity of each of the security
requirements, consider each in kind.

\BDS
\item[\it Obliviousness:] The network traffic to a remote server reveals
to an attacker, or to the server itself, which raw blocks are being
read and written. Even if the block contents are encrypted, an attacker
may be able to infer sensitive information from this access pattern itself. 
Like previous ODS schemes, our system will ensure this
is not the case; the server-level access pattern reveals nothing about
the underlying data operations that the user is performing.

\item[\it History independence:] By inspecting the internal structure of the
currently existing data in the cloud after a catastrophic attack, the attacker
may still be able to infer information about which items were recently accessed
or the likely prior existence of a record even if that record was previously
deleted~\cite{ICDE:BajSio13}. However, if an ODS scheme provided {\em perfect
history independence}, the catastrophic attacker cannot infer which sequence of
operations was applied, among all the sequences that could have resulted in the
current set of the data items.
Interestingly, we show that it is impossible to achieve perfect
history independence in our setting with a computationally-limited
server;
nonetheless, providing $\ell$-history independence is still desirable, where
only the most recent $\ell$ operations are revealed but nothing else.

\item[\it Secure deletion:] Given that only bounded history independence is possible,
the privacy of deleted data must be considered. It is desirable that the
catastrophic attacker should not be able to guess information about deleted
data. In practice, data deleted from persistent media, such as hard disk drives,
is easily recoverable through standard forensic tools. In the cloud setting, the
problem is compounded because there is normally no direct control of how and
where data is stored on physical disks, or backed up and duplicated in servers
around the globe. We follow a similar approach as \cite{CCS:RRBC13}, where
secure deletion is accomplished by re-encrypting and deleting the old encryption
key from local, erasable memory such as RAM.
\EDS

\subsection{Our Work}
\label{sec:ourwork}
\paragraph{Pointer-based technique.}
Wang et al.~\cite{CCS:WNLCSS14} designed an ODS scheme for map by storing an AVL
tree on top of the non-recursive Path ORAM~\cite{CCS:SDSFRY13} using the
pointer-based technique, in which the ORAM position tags act as pointers, and
the pointer to each node in the AVL tree is stored in its parent node. With this
technique, when the parent node is fetched, the position tags of its children
are immediately obtained.  Therefore, {\em the position map lookups are no more
necessary}. 

Similarly, in our ODS scheme, we will overlay a data structure on a
non-recursive ORAM using a pointer-based technique for building the data
structure.

We stress that the non-recursive Path ORAM still remains {the best choice} when
we would like to embed our data structure in an ORAM with the pointer-based
technique, in spite of all the recent improvements on ORAM techniques. This is
mainly because all ORAM improvement techniques consider the setting where an
ORAM runs in a stand-alone fashion, unlike our setting where the ORAM actions,
in particular with position map lookups, depend on the upper-layer data
structure. 
In particular, with the non-recursive Path ORAM, each ORAM operation takes only
a {\em single round} of communication between the client and server, since there
is no position map lookup; moreover, each operation transfers $O(\log n)$ blocks
where the size of each block can be arbitrarily small up to $\Omega(\log n)$.
To compare the non-recursive Path ORAM with the most recent stand-alone ORAMs,
each operation of the constant communication ORAM~\cite{MMB15} transfers $O(1)$
blocks each of which should be of size $\Omega(\log^4 n)$, and it additionally
uses computation-intensive homomorphic encryptions. For Ring
ORAM~\cite{RFK+15}, it still refers to the position map, and although its
online stage may be comparable to the non-recursive Path ORAM, it still has the
additional offline stage. The non-recursive version of these ORAMs has
essentially the same efficiency as the non-recursive Path ORAM. 

\paragraph{Impracticality of existing data structures.} 
Unfortunately, no current data structure exists that can meet our
security and efficiency requirements:

\begin{itemize}
\item It should be a {\em rooted tree.} This is necessary, since we
  would like to use the pointer-based technique. 
  Because the positions are randomly re-selected on
  any access to that node, the tree structure is important in order to
  avoid dangling references to old pointers.

\item The height of the tree should be $O(\log n)$ {\em in the worst
    case}.  To achieve obliviousness, all operations must execute with
  the same running time, which implies all operations will be
  padded to some upper bound that is dependent on the height of the
  tree.

\item The data structure itself should be {\em (strongly)
  history-independent},
  meaning the organization of nodes depends only on the current
  contents, and not the order of operations which led to the current
  state.
  As a negative example, consider an AVL tree, which is not history
  independent. Inserting the records A, B, C,
  D in that order; or B, C, D, A in that order; or
  A, B, C, D, E and then deleting E; will each result in a different
  state of the data structure, thereby revealing (under a
  catastrophic attack) information on
  the insertion order and previous deletions. 
\end{itemize}

To the best of our knowledge, there is no data structure satisfying all of the
above conditions. Most tree-based solutions, including AVL trees and
B-trees, are not history independent. Treaps and B-treaps are rooted trees with
history independence, but they have linear height in the worst case. Skip-lists
and B-Skip-lists are history independent and tree-like, but technically they are
not rooted trees and thereby not amenable to the pointer-based technique. That
is, Skip-lists and B-Skip-lists have multiple incoming links, requiring linear
updates in the ORAM to maintain the pointers and position tags in the worst
case. 

\paragraph{HIRB.}
We developed a new data structure, called a HIRB tree (history
independent, randomized B-tree), that satisfies all the aforementioned
requirements.  Conceptually, it is a {\em fixed height} B-tree such that when
each item is inserted, the level in HIRB tree is determined by $\log_\beta n$
trials of (pseudorandom) biased coin flipping where $\beta$ is the block factor.
The tree may split or merge depending on the situation, but {\em it
  never rotates}.
The fixed height of the tree, i.e. $H = 1 + \log_\beta n$, is
very beneficial for efficiency. In particular, {\em every operation}
visits at most $2H$ nodes, which greatly saves on padding costs, compared to
the ODS scheme of~\cite{CCS:WNLCSS14} where each AVL tree operation must be
padded up to visiting $3\cdot 1.44 \cdot \lg n$ nodes. 
 
The HIRB is described more carefully in \cref{sec:ds}, with full details
in the appendix.

\paragraph{vORAM.}
One challenge with HIRB trees is that number of items that each tree
node contains are variable, and in the unlucky case, it may become too
large for an ORAM bucket to store. 

This challenge is overcome by
introducing vORAM (ORAM with variable-size blocks). The design of
vORAM is based on the non-recursive version of Path ORAM where the
bucket size remains fixed, but each bucket may contain as many
variable-size blocks (or parts of blocks) as the bucket space allows. Blocks
may also be stored across multiple buckets (in the same path). 

We observe that
the irregularity of the HIRB node sizes can be smoothed over $O(\log n)$ buckets
from the vORAM root to an vORAM leaf, and we prove that the stash size on the
client can still be small $\tilde O(\log n)$ with high probability. 
We note that vORAM is {\em the
first ORAM that deals with variable size blocks}, 
and may be of independent interest. 

The vORAM is described carefully in \cref{sec:oram}, and
the full details are provided in the appendix.

\paragraph{Secure deletion.}
Finally, for secure deletion, a parent vORAM bucket contains the encryption keys
of both children. When a bucket is modified, it is encrypted with a fresh key;
then the encryption keys in the parent is accordingly modified, which
recursively affects all its ancestors. 
However, we stress that in each vORAM operation, {\em leaf-to-root refreshing
takes place anyway, and adding this mechanism is bandwidth-free.}
Additionally, instead of using the label of each item directly in HIRB, we use
the hash of the label.  This way, we can remove the dependency between the item
location in HIRB and its label (with security proven in the random oracle
model).

\paragraph{Imperfect history independence.} 
Our approach does not provide perfect history independence. Although
the data structure in the vORAM is history independent, the vORAM is
not. Indeed, in any tree-based or hierarchical ORAM, the items near
the root have been more likely recently accessed as compared to items
near the leaves. The catastrophic adversary can observe all the ORAM
structure, and such leakage breaks perfect history independence. We
show a formal lower bound for the amount of leakage in 
Section~\ref{sec:prelim}.

\paragraph{Experiments and efficiency of our scheme.}
In order to empirically measure the performance of our construction,
we first performed an analysis to determine the smallest constant
factor overhead to achieve high performance with negligible likelihood of
failure. Following this, we implemented our system in the cloud with Amazon Web
Services as the cloud provider and compared it to alternatives that
provide some, but not all of the desired security properties.
To the best of our knowledge, {\em there has been no previous work that implements
and tests any ODS system in the actual cloud setting}. As argued in
Bindschaedler et al.~\cite{CCS:BNP+15}, who independently compared various ORAM
systems in the cloud, it is important to see how systems work in the actual
intended setting. 
As comparison points, we compare our system with the following implementations: 
\BI
\item ORAM+AVL: We reimplemented the ODS map by Wang et al.~\cite{CCS:WNLCSS14}
    that provides obliviousness but not secure deletion nor history independence.
\item SD-B-Tree: We implemented a remotely stored
block-level, encrypted B-Tree (as recommend by the secure deletion
community~\cite{CCS:RRBC13}) that provides secure deletion but not history
independence nor obliviousness.  
\item Naive approach: We implemented a naive approach that achieves all the
    security properties by transferring and re-encrypting the entire database on
    each access.
\EI

In all cases the remotely stored B-Tree is the fastest as it requires
the least amount of communication cost (no obliviousness). For similar
reasons, vORAM+HIRB is much faster than the baseline as the number of
items grows (starting from $2^{14}$ items), since the baseline
requires communication that is linear in the number of items. We also
describe a number of optimizations (such as concurrent connections and
caching) that enables vORAM+HIRB to be competitive with the baseline
even when storing as few as $2^9$ items. It should be noted, without
optimizations, the access time is on the order of a few seconds, and
with optimizations, access times are less than one second.

Surprisingly, however, the vORAM+HIRB is {\em 20x faster than
  ORAM+AVL}, irrespective of the number of items, even though ORAM+AVL
does not support history independence or secure deletion. We believe
this is mainly because vORAM+HIRB requires much smaller round complexity. Two
factors drive the round complexity improvement:
\begin{description}
\item[\it Much smaller height:] While each AVL tree node contains only
  one item, each HIRB node contains $\beta$ items on average, and is
  able to take advantage of slightly larger buckets which optimize the
  bandwidth to remote cloud storage by storing the same amount of data in
  trees with smaller height.

\item[\it Much less padding:] AVL tree operations sometimes get complicated
    with balancing and rotations, due to which each operation should be padded
    up to $3 \cdot 1.44 \lg n$ node accesses. However, HIRB operations are simple,
    do not require rotations, and thus, each operation accesses at most $2
    \log_\beta n$ nodes.
\end{description}
Although the Path-ORAM bucket for ORAM+AVL is four times smaller than the vORAM bucket
in our implementation, it affects bandwidth but not the round complexity.  The fully
optimized vORAM+HIRB protocol is about {\em 100x faster than ORAM+AVL}. We
describe the details of our experiments in Section~\ref{sec:eval}.

\paragraph{Summary of our contributions.}
To summarize, the contributions of this paper are:
\begin{itemize}[noitemsep]
\item New security definitions of history independence and secure deletion under
    a catastrophic attack. 
\item The design and analysis of an oblivious RAM with variable size
  blocks, the vORAM;
\item The design and analysis of a new history independent and
  randomized data structure, the HIRB tree;
\item A lower bound on history independence for any
  ORAM construction with sub-linear bandwidth;
\item Improvements to the performance of mapped data structures stored
  in ORAMs;
\item An empirical measurement of the settings and performance of the
  vORAM in the actual cloud setting;
\item The implementation and measurement of the vORAM+HIRB
  system in the actual cloud setting. 
\end{itemize}

\section{Related Work}
\label{sec:relwork}

We discuss related work in oblivious data structures, history independence, and
secure deletion. Our system builds upon these prior results and combines the
security properties into a unified system.

\paragraph{ORAM and oblivious data structures.}
ORAM protects the access pattern from an observer such that it
is impossible to determine which operation is occurring, and on which
item. The seminal work on the topic
is by Goldreich and Ostrovsky~\cite{GO96}, and since then, many works have
focused on improving efficiency of ORAM in both the space, time, and
communication cost complexities (for example
\cite{TCC:DamMelNie11,SODA:GMOT12,SODA:KusLuOst12,NDSS:SSS12,CCS:SDSFRY13,usenix:DSS14}
just to name a few; see the references therein).
 
There have been works addressing individual oblivious data structures to
accomplish specific tasks, such as priority queues~\cite{PODC:Toft11}, stacks
and queues~\cite{STACS:MZ14}, and graph algorithms~\cite{ASIACCS:BlaSteAli13}.
Recently, Wang et al.~\cite{CCS:WNLCSS14} achieved oblivious data structures
(ODS) for maps, priority queues, stacks, and queues much more efficiently than
previous works or naive implementation of the data structures on top of
ORAM. 

Our vORAM construction builds upon the non-recursive Path
ORAM~\cite{CCS:WNLCSS14} and allows \emph{variable sized} data items to be
spread across multiple ORAM buckets.  
Although our original motivation was to store differing-sized B-tree
nodes from the HIRB, there may be wider applicability to any context
where the size (as well as contents and access patterns) to data needs
to be hidden. 

Interestingly, based on our experimental results, we believe the ability
of vORAM to store partial blocks in each bucket may even improve the
performance of ORAM when storing uniformly-sized items. However, we will
not consider this further in the current investigation.

\paragraph{History independence.} 
History independence of data structures requires that the current organization
of the data within the structure reveals nothing about the prior operations
thereon. 
Micciancio~\cite{STOC:Micciancio97} first considered history independence in the
context of 2-3 trees, and the notions of history independence were formally
developed in~\cite{STOC:NaoTea01,Algo:HHMPR05,IC:BP06}.  The notion of
\emph{strong history independence}~\cite{STOC:NaoTea01} holds 
if for any two sequences of
operations, the distributions of the memory representations are
identical at all time-points
that yield the same storage content. Moreover, a data structure is
strongly history independent if and only if it has a {\em unique
representation}~\cite{Algo:HHMPR05}. There have been uniquely-represented
constructions for hash functions~\cite{FOCS:BleGol07,ICALP:NaoSegWie08} and
variants of a B-tree (a B-treap~\cite{ICALP:Golovin09}, and a
B-skip-list~\cite{arxiv:Gol10}). 
We adopt the notion of unique representation for history independence when
developing our history independent, randomized B-tree, or HIRB tree.

We note that history independence of these data structures considers a setting
where a single party runs some algorithms on a single storage medium, which
doesn't correctly capture the actual cloud setting where client and server have
separate storage, execute protocols, and exchange messages to maintain the data
structures. Therefore, we extend the existing history independence and give a
new, augmented notion of history independence for the cloud setting with a
catastrophic attack. 

Independently, the recent work of \cite{arxiv:BCS15} also considers a limited
notion of history independence, called $\Delta$-history independence,
parameterized with a function $\Delta$ that describes the leakage. 
Our definition of history independence has a similar notion, where the
leakage function $\Delta$ captures the number of recent operations which
may be revealed in a catastrophic attack.

\paragraph{Secure deletion.}
Secure deletion means that data deleted cannot be recovered, even by the
original owner. It has been studied in many contexts \cite{SP:RBC13},
but here we focus on the cloud setting, where the user has little or no
control over the physical media or redundant duplication or backup
copies of data.
In particular, we build upon secure deletion techniques from the applied
cryptography community.  
The approach is to encrypt all data stored in the cloud with encryption
keys stored locally in erasable memory, so that deleting the keys will
securely delete the remote data by rendering it non-decryptable.

Boneh and Lipton~\cite{usenix:BL96} were the first to use encryption
to securely remove files in a system with backup tapes. The challenge
since was to more effectively manage encrypted content and the
processes of re-encryption and erasing decryption keys. For example,
Di Crescenzo et al.~\cite{STACS:DFIJ99} showed a more efficient method
for secure deletion using a tree structure applied in the setting of a
large non-erasable persistent medium and a small erasable medium.
Several works considered secure deletion mechanisms for a versioning
file system~\cite{FAST:PBHSR05}, an inverted index in a
write-once-read-many compliance storage~\cite{EDBT:MWB08}, and a
B-tree (and generally a mangrove)~\cite{CCS:RRBC13}.

\section{Preliminaries}
\label{sec:prelim}

We assume that readers are familiar with security notions of
standard cryptographic primitives~\cite{KatzLindell2007}. 
Let $\secparam$ denote the security parameter. 

\paragraph{Modeling data structures.}
Following the approach from the secure deletion literature, we use
two storage types: {\em erasable memory} and {\em persistent storage}.
Contents deleted from erasable memory are non-recoverable, while the
contents in persistent storage cannot be fully erased.  We assume 
the size of erasable memory is small while the persistent storage has a much
larger capacity. This mimics the cloud computing setting where cloud storage is
large and persistent due to lack of user control, and local storage is
more expensive but also controlled directly.

We define a data structure $\DDD$ as a collection of data that supports
initialization, insertion, deletion, and lookup, using both the erasable memory
and the persistent storage. Each operation may be parameterized by some operands
(e.g., lookup by a label).
For a data structure $\DDD$ stored in this model, let $\DDD.\mem$ and
$\DDD.\hdd$ denote the contents of the erasable memory and persistent storage,
respectively. 
For example, an encrypted graph structure may be stored in $\DDD.\hdd$ while the
decryption key resides in $\DDD.\mem$. 
For an operation $\op$ on $\DDD$, let $\acc \from \DDD.\op()$ denote executing
the operation $\op$ on the data structure $\DDD$ where $\acc$ is the access
pattern over the persistent storage during the operation. The access pattern to
erasable memory is assumed to be hidden. 
For a sequence of operations $\Vec{\op} = (\op_1, \ldots, \op_m)$, let
$\Vec{\acc} \from \DDD.\Vec{\op}()$ denote applying the operations on
$\DDD$, that is,
$\acc_1 \from \DDD.\op_1(), ~~\ldots~~, \acc_m \from \DDD.\op_m(),$
with $\Vec{\acc} = (\acc_1, \ldots, \acc_m)$.
We note that the access pattern $\Vec{\acc}$ completely determines the state of
persistent storage $\DDD.\hdd$.

\paragraph{Obliviousness and history independence.}
Obliviousness requires that the adversary without access to erasable memory
cannot obtain any information about actual operations performed on data
structure $\DDD$ other than the number of operations.  This security notion is
defined through an experiment \obh, given in Figure~\ref{fig:secdef}, where
$\DDD$, $\secparam$, $n$, $h$, $b$ denote a data structure, the security
parameter, the maximum number of items $\DDD$ can contain, history independence,
and the challenge choice. 

In the experiment, the adversary
chooses two sequences of operations on the data structure and tries to guess
which sequence was chosen by the experiment with the help of access patterns.
The data structure provides obliviousness if every polynomial-time adversary has
only a negligible advantage.

\begin{figure}[t]
\centering
\begin{minipage}[t]{\minof{6 cm}{0.52\linewidth}}
\small
$\expobh$  \\
$\acc_0 \from \DDD.\init(1^\secparam, n);$ \\
$(\Vec{\op}^{(0)}, \Vec{\op}^{(1)}, \st) \from \AAA_1(1^\secparam,\acc_0);$ \\
$\Vec{\acc} \from \DDD.\Vec{\op}^{(b)}()$; \\
if $h=1$: \\
\phantom{x} return $\AAA_2(\st, \Vec{\acc}, \DDD.\mem);$ \\
else \\
\phantom{x} return $\AAA_2(\st, \Vec{\acc});$ \\
\end{minipage}%
\begin{minipage}[t]{\minof{6 cm}{0.48\linewidth}}
\small
$\expsdel$  \\
$\acc_0 \gets \DDD.\init(1^\secparam, n);$ \\
$d_0 \gets \AAA_1(1^\secparam, 0)$; \\
$d_1 \gets \AAA_1(1^\secparam, 1)$; \\
$(\Vec{\op}_{d_0, d_1}, S) \from \AAA_2(\acc_0, d_0, d_1)$; \\
$\Vec{\acc} \from \DDD.(\Vec{\op}_{d_0, d_1} \Join_S d_b)()$; \\ return
$\AAA_3(\acc_0, \Vec{\acc}, \DDD.\mem);$  
\end{minipage}
\caption{Experiments for security definitions \label{fig:secdef}}
\end{figure}

\BD For a data structure $\DDD$, consider the experiment $\expob$ with
adversary $\AAA = (\AAA_1, \AAA_2)$. 
We call the adversary $\AAA$ $\sf admissible$ if $\AAA_1$ always outputs two
sequences with the same number of operations storing at most $n$ items. 
We define the advantage of the adversary $\AAA$ in this experiment as: $$
\advob = \left | \begin{array}{l} \Pr[\expobz=1]  \\ ~~ -
\Pr[\expobo=1]\end{array} \right |.$$
We say that $\DDD$ provides $\sf obliviousness$ if for any sufficiently large
$\secparam$, any $n \in poly(\secparam)$, and any PPT admissible adversary
$\AAA$, we have $\advob \le \negl(\secparam).$
\ED

\if 0
Suppose the adversary manages to obtain erasable memory. In this case, the
entire data collection in $\DDD$ is leaked; however, it is desirable that no
additional information is leaked besides the current contents of $\DDD.\data$.
There exist several data structures that are history independent (see
Section~\ref{sec:relwork} and Section~\ref{sec:ds} for examples),
meaning the current state of the data structure reveals nothing about
the sequence of operations that resulted in that state. However, {\em
these data structures assume that the underlying storage is fully in erasable
memory}.  Our definition extends traditional history independence to the
cloud setting with a catastrophic attack. 
\fi

Now we define history independence.  As we will see, perfect history
independence is inherently at odds with obliviousness and sub-linear
communication cost. Therefore, we define \emph{parameterized} history
independence instead that allows for a relaxation of the security requirement.
The parameter determines the allowable leakage of recent history of operations.
One can interpret a history-independent data structure with leakage of $\ell$
operations as follows: Although the data structure may reveal some recent $\ell$
operations applied to itself, it does not reveal any information about older
operations, except that the total sequence resulted in the current state of data
storage. 

The experiment in this case is equivalent to that for obliviousness,
except that (1) the two sequences must result in the same state of the
data structure at the end, (2) the last $\ell$ operations in both
sequences must be identical, and (3) the adversary gets to view the
local, erasable memory as well as the access pattern to persistent
storage.

\if 0
The security definition is defined through an experiment \hi and is
parameterized with a value $\ell$.  In the experiment, the adversary
chooses two sequences of operations (of the same length) on the data
structure with the following restrictions: (i) the last $\ell$ operations must
be identical; (ii) the two sequences must result in the same data. 
The experiment then chooses one of the two sequences at random, executes the
operations, and provides the adversary the sequence of resulting access
patterns as well as the erasable memory. 
The adversary guesses which sequence has been executed by the experiment, and
the advantage of the adversary is defined as the probability that the guess is
right.  We say that the data structure provides history independence with
leakage of $\ell$ operations, if every polynomial-time adversary has only
negligible advantage.
\fi

\BD For a data structure $\DDD$, consider the experiment $\exphi$ with
adversary $\AAA = (\AAA_1, \AAA_2)$. 
We call the adversary $\AAA$ $\sf \ell\mbox{-}admissible$ if $\AAA_1$ always
outputs sequences $\Vec{\op}^{(0)}$ and $\Vec{\op}^{(1)}$ which have the same
number of operations and result in the same set storing at most $n$ data items,
and the last $\ell$ operations of both are identical. 
We define the advantage of an adversary $\AAA$ in this experiment above as: $$
\advhi = \left | \begin{array}{l} \Pr[\exphiz=1]  \\ ~~ -
\Pr[\exphio=1]\end{array} \right |.$$
We say that the data structure $\DDD$ provides $\sf history~independence$ with
leakage of $\ell$ operations  if for any sufficiently large $\secparam$, any $n
\in poly(\secparam)$, and any PPT $\ell$-admissible adversary $\AAA$, we have
$\advhi \le \negl(\secparam).$ 
\ED

\if 0
Observe that the definition of history independence is identical to that
of secure deletion, with two important differences in the capabilities
of the adversaries. The ``chooser'' $\AAA_1$ is further restricted in
that the operation sequences it returns must share the
last $\ell$ operations and result in the same size-$n$ data structure.
On the other hand, the ``guesser'' $\AAA_2$ is further enabled compared
to experiment \ob{}, in that $\AAA_2$ gains access to the erasable
memory as well as the access pattern.
\fi

\paragraph{Lower bound on history independence.}
Unfortunately, the history independence property is inherently at odds with the
nature of oblivious RAM. The following lower bound demonstrates that there is a
linear tradeoff between the amount of history independence and the communication
bandwidth of any ORAM mechanism.

\begin{restatable}{theorem}{oramhilb}\label{thm:oramhilb}
  Any oblivious RAM storage system with a bandwidth of $k$ bytes per
  access achieves at best history independence with leakage of
  $\Omega(n/k)$ operations in
  storing $n$ blocks.
\end{restatable}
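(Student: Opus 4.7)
The plan is a bandwidth-versus-storage counting argument: because the ORAM can rewrite at most $\ell k$ bytes during the last $\ell$ operations, but its persistent storage has size $\Omega(n)$, when $\ell k = o(n)$ a substantial region of storage must be inherited unchanged from the earlier, non-shared phase of the execution. A catastrophic adversary (who has the erasable memory) can then read this region and recover the differing history.

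Concretely, I would first observe that any ORAM storing $n$ blocks uses persistent storage of size $N = \Omega(n)$ bytes, and that over any window of $\ell$ operations only $\ell k$ bytes can be written, so the set $U$ of addresses \emph{not} touched during the last $\ell$ operations satisfies $|U| \ge N - \ell k$. I then have the adversary $\AAA_1$ output two equal-length sequences $\Vec{\op}^{(0)}$ and $\Vec{\op}^{(1)}$ that share their last $\ell$ operations and end in the same (e.g.\ empty) data set, but whose early prefixes insert-and-then-delete two disjoint $n$-element item sets $A_0$ and $A_1$. The distinguisher $\AAA_2$ receives the access pattern $\Vec{\acc}$ together with the erasable memory $\DDD.\mem$; together these suffice to decrypt the contents of every ORAM bucket at every point in the execution, so $\AAA_2$ outputs its guess for $b$ according to whether the plaintext content of the untouched region $U$ is more consistent with the early insertion of $A_0$ or of $A_1$.

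The main obstacle is to argue rigorously that the final state restricted to $U$ distinguishes the two sequences with non-negligible probability, despite the ORAM's randomized placement. The key claim is that no ORAM can ``canonicalize'' more than $\ell k$ bytes of state in $\ell$ operations: after the early phase, the $\Omega(n)$ bytes of storage necessarily encode information about the distinct sets $A_0$ versus $A_1$ (through items' stored position tags, the ciphertexts of stored items prior to deletion, dummy padding positions, or per-bucket key material inherited through parent buckets), and an $\ell$-operation tail touching only $\ell k$ bytes simply cannot overwrite enough of this to erase the difference once decryption keys are in hand. A standard averaging argument over the random choice of $A_0, A_1$ then boosts this per-pair distinguishing gap into a constant advantage $\advhi = \Omega(1)$ for the adversary, contradicting history independence with leakage $\ell$ whenever $\ell k = o(n)$. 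Rearranging gives $\ell = \Omega(n/k)$, as claimed.
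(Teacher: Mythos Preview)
Your high-level intuition --- that bandwidth limits how much of the persistent state can be ``canonicalized'' in $\ell$ operations --- is correct, but the concrete construction has a gap that makes the argument fail against precisely the class of ORAMs this paper studies.

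The problem is having $\AAA_1$ insert and then \emph{delete} the sets $A_0,A_1$, ending in an empty structure. You then assert that the untouched region $U$ (or the transcript plus current keys) must ``necessarily encode information about the distinct sets $A_0$ versus $A_1$.'' But this is exactly what a secure-deletion ORAM is designed to prevent, and the paper's own vORAM is a counterexample. Your claim that $\Vec{\acc}$ together with $\DDD.\mem$ ``suffice to decrypt the contents of every ORAM bucket at every point in the execution'' is false: in any key-rotating scheme the current erasable memory only lets you decrypt the \emph{current} bucket contents, not historical ciphertexts in $\Vec{\acc}$. And after all of $A_b$ has been deleted --- note your early phase is $2n$ operations long, more than enough bandwidth to rewrite every byte of storage --- the current plaintext of every bucket can legitimately be padded empties, identically distributed for $b=0$ and $b=1$. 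None of your suggested leakage channels survive: position tags of deleted items are gone, old ciphertexts are undecryptable, and padding positions and fresh key material are data-independent randomness. Your distinguisher therefore has advantage zero against vORAM, so the argument does not establish the lower bound.

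The paper avoids this by never deleting: both sequences insert the same $n$ items, differing only by swapping the $(n{-}\ell)$th insertion with a uniformly random earlier one (index $j$). Because every item is still present at the end, each is \emph{guaranteed} recoverable from the current keys, and in particular whatever item was inserted at step $n{-}\ell$ must have a decryptable ciphertext among the at most $(\ell{+}1)k$ bytes written during the last $\ell{+}1$ operations. The distinguisher simply decrypts those recent writes and checks whether $a_j$ appears; in the $b{=}1$ case it must, while in the $b{=}0$ case a counting argument (at most $(\ell{+}1)(k{-}1)$ ``extra'' slots versus $n{-}\ell{-}1$ random choices for $j$) bounds the hit probability away from $1$. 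The missing idea in your proposal is to anchor the argument to data that \emph{must} still be recoverable from the current state, rather than to deleted data whose traces a correct ORAM may legitimately erase.
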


The intuition behind the proof%
\footnote{Full proofs for the main theorems may be found in Appendix~\ref{app:proofs}.}
is that, in a catastrophic attack, an
adversary can observe which persistent storage locations were recently
accessed, and furthermore can decrypt the contents of those locations
because they have the keys from erasable memory. This will inevitably
reveal information to the attacker about the order and contents of
recent accesses, up to the point at which all $n$ elements have been
touched by the ORAM and no further past information is recoverable.

Admittedly this lower bound limits what may be achievable in terms of
history independence. But still, leaking only a known maximum number of
prior operations is better than (potentially) leaking all of them!

Consider, by contrast, an AVL tree implemented within a standard ORAM as
in prior work.
Using the fact that AVL tree shapes reveal
information about past operations, the adversary can come up with two sequences
of operations such that (i) the first operations of each sequence result in a
distinct AVL tree shape but the same data items, and (ii) the same read
operations, as many as necessary, follow at the end.  With the catastrophic
attack, the adversary will simply observe the tree shape and make a correct
guess. This argument holds for any data structure whose shape reveals
information about past operations, which therefore have \emph{no upper
bound} on the amount of history leakage.

\paragraph{Secure deletion.} 
Perfect history independence implies secure deletion. However, the above 
lower bound shows that complete history independence will not be possible in our
setting. So, we consider a complementary security notion that requires {\em
strong security for the deleted data}. 
Secure deletion is defined through an experiment \sdel, given in
Figure~\ref{fig:secdef}. In the experiment, $\AAA_1$ chooses two data items
$d_0$ and $d_1$ at random, based on which $\AAA_2$ outputs $(\Vec{\op}_{d_0, d_1}, S)$. Here,
$\Vec{\op}_{d_0, d_1}$ denotes a vector of operations containing neither $d_0$ nor
$d_1$, and $S = (s_1, s_2, \ldots, s_m)$ is a monotonically increasing sequence.
$\Vec{\op}_{d_0,d_1} \Join_S d_b$ denotes injecting $d_b$ into $\Vec{\op}_{d_0,
d_1}$ according to $S$. In particular, ``insert $d_b$" is placed at position
$s_1$; for example, if $s_1$ is 5, this insert operation is placed right before
the 6th operation of $\Vec{\op}_{d_0,d_1}$. Then, ``look-up $d_b$" is placed at
positions $s_2, \ldots, s_{m-1}$, and finally ``delete $d_b$" at $s_m$. 

\BD For a data structure $\DDD$, consider the experiment $\expsdel$
with adversary $\AAA = (\AAA_1, \AAA_2, \AAA_3)$. 
We call the adversary $\AAA$ $\sf admissible$ if for any data item $d$ that
$\AAA_1(1^\secparam, 0)$ (resp., $\AAA_1(1^\secparam, 1)$) outputs, the
probability that $\AAA_1$ outputs $d$ is negligible in $\secparam$, i.e., the
output $\AAA_1$ forms a high-entropy distribution; moreover, the sequence of
operations from $\AAA_2$ must store at most $n$ items. 
We define the advantage of $\AAA$ as: $$ \advsdel = \left | \begin{array}{l}
\Pr[\expsdelz=1]  \\ ~~ - \Pr[\expsdelo=1]\end{array} \right |.$$
We say that the data structure $\DDD$ provides $\sf secure~deletion$ if for any
sufficiently large $\secparam$, any $n \in poly(\secparam)$, and any PPT
admissible adversary $\AAA$, we have $\advsdel \le \negl(\secparam).$
\ED

Note that our definition is stronger than just requiring that the adversary
cannot recover the deleted item; for any two high entropy distributions chosen
by the adversary, the adversary cannot tell from which distribution the deleted item
was drawn.

\section{ORAM with variable-size blocks (vORAM)}
\label{sec:oram}

The design of vORAM is based on the non-recursive version of Path
ORAM~\cite{CCS:SDSFRY13}, but we are able to add more flexibility by allowing
each ORAM bucket to contain as many variable-size blocks (or parts of blocks) as
the bucket space allows. We will show that vORAM preserves obliviousness and
maintains a small stash as long as the size of variable blocks can be bounded by
a geometric probability distribution, which is the case for the HIRB that we
intend to store within the vORAM. To support secure deletion, we also store
encryption keys within each bucket for its two children, and these keys are
re-generated on every access, similarly to other work on secure deletion
\cite{STACS:DFIJ99,CCS:RRBC13}.

\paragraph{Parameters.} The vORAM construction is governed by the following
parameters: 

\BI
\item 
The height $T$ of the vORAM tree: The vORAM is represented as a complete binary
tree of buckets with height $T$ (the levels of the tree are numbered 0 to $T$),
so the total number of buckets is $2^{T+1} - 1$.  $T$ also controls the total
number of allowable data blocks, which is $2^{T}$. 

\item 
The bucket size $Z$: Each bucket has $Z$ bits, and this $Z$ must be at least
some constant times the \emph{expected block size} $B$ for what will be stored
in the vORAM.

\item
The stash size parameter $R$: Blocks (or partial blocks) that overflow from the
root bucket are stored temporarily in an additional memory bank in local
storage called the stash, which can contain up to $R \cdot B$ bits. 

\item 
Block collision parameter $\hashparam$: Each block will be assigned a
random identifier $id$; these identifiers will all be distinct at every
step with
probability $1 - \negl(\hashparam)$. 
\EI

\paragraph{Bucket structure.} Each bucket is split into two areas: header and
data. See Figure \ref{fig:bucket} for a pictorial description.  The header area
contains two encryption keys for the two child buckets.  The data area contains
a sequence of (possibly partial) blocks, each preceded by a unique identifier
string and the block data length. The end of the data area is filled with 0
bits, if necessary, to pad up to the bucket size $Z$.

\begin{figure}[t]
\begin{center}
\noindent\begin{tikzpicture}[node distance=0]

\def\inode#1#2{%
\node[short,label=center:{#2}] (#1) {}}
\def\shnode#1#2#3{%
\node[short,right=of #1, label=center:{#3}] (#2) {}}
\def\lnode#1#2{%
  \node[long,right=of #1] (#2) {}}
\def\llnode#1#2#3{%
\node[long,right=of #1, label=center:{#3}] (#2) {}}
\def\mnode#1#2#3{%
\node[long,right=of #1, label=center:{#3}] (#2) {}}

\inode{a}{$k_1$};
\shnode{a}{b}{$k_2$};
\shnode{b}{c}{$id_1$};
\shnode{c}{d}{$l_1$};
\llnode{d}{e}{$blk_1$};
\node[short,draw=none,fill=none,right=of e,text height=0pt,text depth=0pt,text
width=0.5cm] (g) {$\ldots$};
\shnode{g}{h}{$id_\ell$};
\shnode{h}{i}{$l_\ell$};
\mnode{i}{j}{$blk_\ell$};
\shnode{j}{k}{$0$};

\draw[decorate,decoration={brace,raise=2pt, mirror}] (a.south west) -- node[below=4pt]
{header} (b.south east);

\draw[decorate,decoration={brace,raise=2pt,mirror}] (e.south west) -- node[below=4pt]
{$l_1$ bytes} (e.south east);

\draw[decorate,decoration={brace,raise=2pt,mirror}] (j.south west) -- node[below=4pt]
{$l_\ell$ bytes} (j.south east);
\end{tikzpicture}
\end{center}
\vspace{-1em}
\caption{A single vORAM bucket with $\ell$ partial blocks\label{fig:bucket}.}
\vspace{-1em}
\end{figure}

\visoram
Each $id_i$ uniquely identifies a block and also encodes the path of buckets
along which the block should reside.  Partial blocks share the same
identifier with each length $l$ 
indicating how many bytes of the block are stored in
that bucket. Recovering the full block is accomplished by scanning from the
stash along the path associated with $id$ (see Figure~\ref{fig:voram-vis}). We
further require the first bit of each identifier to be always 1 in order to
differentiate between zero padding and the start of next identifier. Moreover,
to avoid collisions in identifiers, the length of each identifier is extended to
$2T+\gamma+1$ bits, where $\gamma$ is the collision parameter mentioned above.
The most significant $T+1$ bits of the identifier (including the fixed leading
1-bit) are used to match a block to a leaf, or equivalently, a path from root to
leaf in the vORAM tree.

\paragraph{vORAM operations.}
Our vORAM construction supports the following operations. 
\BI
\item $\ins(blk) \mapsto id$. Inserts the given block $blk$ of data into the
    ORAM and returns a new, randomly-generated $id$ to be used only once
    at a later time to retrieve the original contents.

\item $\remove(id) \mapsto blk$. Removes the block corresponding to $id$ and
    returns the original data $blk$ as a sequence of bytes.  

\item $\upd(id, callback) \mapsto id^+$. Given $id$ and a user-defined function
    $callback$, perform $\ins( callback( \remove(id)))$ in a single step. 
\EI

Each vORAM operation involves two phases:

\BEN
\item $\evict(id)$. Decrypt and read the buckets along the path from the
    root to the leaf encoded in the identifier $id$, and remove 
    all the partial blocks along the path, merging partial blocks
    that share an identifier, and storing them in the stash.
    
  \item $\writeback(id)$. Encrypt all blocks along the path encoded by
    $id$ with new encryption keys and
    opportunistically store any partial blocks from stash,
    dividing blocks as necessary, filling from the leaf to the root.
\EEN

An \ins{} operation first evicts a randomly-chosen path, then inserts
the new data item into the stash with a second randomly-chosen
identifier, and finally writes back the originally-evicted path. A
\remove{} operation evicts the path specified by the identifier, then
removes that item from the stash (which must have had all its partial
blocks recombined along the evicted path), and finally writes back the
evicted path without the deleted item.
The \upd{} operation evicts the path from the initial $id$, retrieves the
block from stash, passes it to the $callback$ function, 
re-inserts the result to the stash with a new random $id^+$,
and finally
calls \writeback{} on the original $id$.
A full pseudocode description of all these operations is provided in
Appendix~\ref{app:voram}.

\paragraph{Security properties.}
For obliviousness, any $\ins$, $\remove$, $\upd$ operation is
computationally indistinguishable based on its access pattern because
the identifier of each block is used only once to retrieve that item
and then immediately discarded. Each $\remove$ or $\upd$ trivially discards the
identifier after reading the path, and each $\ins$ evicts buckets along a bogus,
randomly chosen path before returning a fresh $id^+$ to be used as the new
identifier for that block.

\begin{restatable}{theorem}{oramob} \label{thm:oramob}
The vORAM provides obliviousness.
\end{restatable}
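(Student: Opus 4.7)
The plan is to adapt the standard non-recursive Path ORAM obliviousness argument to the vORAM setting. What the obliviousness adversary sees on persistent storage during each operation consists of (i) the sequence of root-to-leaf bucket paths that are read and written, and (ii) the ciphertexts written back into each visited bucket. I would show that both components can be replaced by distributions that do not depend on the sequence of logical operations, without any PPT admissible adversary noticing.

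First, I would argue that the sequence of accessed paths is statistically independent of the operation sequence. Every vORAM operation touches at most two root-to-leaf paths: one selected by an input identifier $id$ (for $\remove$ and $\upd$) and one freshly sampled path (for $\ins$'s initial eviction, as well as the path encoded in the new $id^+$ returned by $\ins$ and $\upd$). The crucial invariant is that each identifier is used at most once on the server side. When an $id$ is created, its top $T{+}1$ bits (the leaf label) are drawn uniformly at random; from that moment until the $id$ is consumed, it appears only inside encrypted buckets, which are invisible to an adversary that has no access to erasable memory. When the $id$ is consumed by $\remove$ it is destroyed, and when it is consumed by $\upd$ it is replaced by a brand-new uniform $id^+$. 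Conditioned on no identifier collisions ever occurring, the joint distribution of the at most $2m$ accessed paths over $m$ operations is therefore exactly $2m$ i.i.d.\ uniform leaves, regardless of which admissible operation sequence was chosen. Since identifiers have length $2T+\hashparam+1$, a union bound over the $\poly(\secparam)$ identifiers that can ever be generated bounds the collision probability by $\negl(\hashparam)$.

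Second, I would replace, via an IND-CPA hybrid argument, every ciphertext produced by $\writeback$ with an encryption of an all-zero string of the same length $Z$. This substitution is sound because each encryption key in the vORAM is refreshed on every write to its bucket: the key for a bucket appears only inside (the freshly re-encrypted ciphertext of) its parent, or inside erasable memory at the root, and is used to produce exactly one on-the-wire ciphertext before being discarded. A standard hybrid over the $O(m\log n)$ bucket writes then reduces distinguishability to the IND-CPA advantage of the underlying symmetric encryption scheme. Combining this with the path-independence argument, the adversary's view becomes computationally indistinguishable from a transcript produced by a simulator that is given only the number of operations $m$, and hence is identical in $\expobz$ and $\expobo$.

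The main subtlety, and the place where I expect to spend the most care, is the bookkeeping around identifiers and keys: one must verify inductively that no $id$ and no child-bucket key ever appears in the clear on the server except at the instant it is being replaced by a fresh uniform value, so that each hybrid step cleanly reduces to an independent use of fresh randomness and the union bound for $id$ collisions covers every point at which two $id$s could conflict (including partial blocks that share an $id$, whose paths are automatically consistent since they use the same leaf label). Once this invariant is established, the remaining calculations are essentially those of the non-recursive Path ORAM proof, so the overall bound is $\advob \le \negl(\secparam)$.
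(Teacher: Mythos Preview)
Your proposal is correct and follows essentially the same approach as the paper: a hybrid argument that uses the one-time-key property together with IND-CPA security of the bucket encryption to replace all \writeback{} ciphertexts by encryptions of $0^Z$, and uses the one-time-use property of identifiers to argue that the server-visible paths are i.i.d.\ uniform regardless of the operation sequence. The paper orders the hybrids as $\expobz \to$ (zero-ciphertexts, $b{=}0$) $\to$ (zero-ciphertexts, $b{=}1$) $\to \expobo$, with the path-distribution argument in the middle, and each vORAM operation actually accesses exactly one root-to-leaf path (the returned $id^+$ is not itself evicted), but these are minor presentational points that do not affect the soundness of your plan.
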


Secure deletion is achieved via key management of buckets. Every
$\evict$ and $\writeback$ will result in a path's worth of buckets to be
re-encrypted and re-keyed, including the root bucket. Buckets
containing any removed data may persist, but the decryption keys are
erased since the root bucket is re-encrypted, rendering the data
unrecoverable. Similarly, recovering any previously deleted data reduces to
acquiring the old-root key, which was securely deleted from local,
erasable memory. 

However, each $\evict$ and $\writeback$ will disclose the vORAM path
being accessed, which must be handled carefully to ensure no leakage
occurs.  Fortunately, identifiers (and therefore vORAM paths as well)
are uniformly random, independent of the deleted data and revealing no
information about them.

\begin{restatable}{theorem}{oramsd} \label{thm:oramsd}
The vORAM provides secure deletion.
\end{restatable}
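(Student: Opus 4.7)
The plan is to reduce secure deletion of the vORAM to the IND-CPA security of the per-bucket symmetric encryption. In $\expsdel$, the final view of $\AAA_3$ consists of the full access pattern $\Vec{\acc}$---which reveals the sequence of every ciphertext ever written to persistent storage---together with the final erasable memory $\DDD.\mem$. The goal is to show this joint distribution is computationally close for $b=0$ and $b=1$.

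The crucial structural fact about the vORAM is its chained key hierarchy: only the \emph{current} root key lives in $\DDD.\mem$, while the decryption key of every non-root bucket is stored inside the header of its parent. Every $\evict$/$\writeback$ pair re-encrypts the entire root-to-leaf path with freshly sampled keys and overwrites the old root key in erasable memory. Consequently, for any bucket $B$ and any earlier ciphertext of $B$ appearing in $\Vec{\acc}$, its decryption key sits inside some older ciphertext of $B$'s parent, which in turn was encrypted under an even older key, and so on up to an old root key that has already been erased. I would formalize this with a forward hybrid over the operation sequence $\Vec{\op}_{d_0,d_1}\Join_S d_b$: each time a bucket is re-encrypted, retroactively replace its \emph{previous} ciphertext with an encryption of all zeros. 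Each swap reduces to one IND-CPA challenge on a key that is never again used by the honest client and never reaches $\DDD.\mem$ at the end; consistency is immediate because stale ciphertexts are write-only from the adversary's standpoint---the honest protocol never re-decrypts a bucket with an expired key.

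After the hybrid, every past ciphertext in $\Vec{\acc}$ that ever carried a piece of $d_b$ has been blanked, so any dependence of the view on $b$ survives only in the current persistent storage and the current $\DDD.\mem$. By construction, the last $d_b$-operation is the deletion of $d_b$, hence no partial block of $d_b$ remains in either the tree or the stash, and the fresh uniformly-random identifier assigned to $d_b$ by \idgen{} (collision-free up to $\negl(\hashparam)$) has been removed from the live-identifier table. Since path choices and encryption coins are drawn independently of the data, the surviving joint distribution is identical in the two worlds up to the negligible IND-CPA loss accumulated across the hybrid.

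The step I expect to be the main obstacle is cleanly handling the case $|d_0|\ne|d_1|$. Different sizes of $d_b$ can produce different greedy partial-block packings during writebacks, shifting other items between the stash and the tree at intermediate times, so that the final stash contents---which are visible via $\DDD.\mem$---could in principle differ between the two worlds even after the blanking. I would discharge this by induction over the operation sequence, arguing that once past ciphertexts are replaced by zeros the only $b$-dependent residue is the identifier-to-bucket assignment of currently live items, and this assignment is determined by the random coins of \idgen{} and \evict{}/\writeback{} together with the opaque byte-strings of items other than $d_b$, not by any observable function of $d_b$ itself; combined with the high-entropy admissibility condition on $d_0$ and $d_1$, this reduces the remaining distinguishing advantage to a negligible quantity.
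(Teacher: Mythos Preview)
Your plan is essentially the paper's. The paper runs a four-hybrid argument: $\expsdelz$; the same experiment with every \emph{stale} bucket ciphertext (i.e., every ciphertext that is not the most recent one at that bucket location) replaced by $\Enc_{key}(0^z)$; the symmetric blanked world for $b=1$; and $\expsdelo$. The two outer gaps are closed by IND-CPA via exactly the key-chaining observation you describe---each key is used once, old root keys are erased, hence no stale key ever reaches $\AAA_3$. Your per-rewrite forward hybrid is simply a finer decomposition of that same outer step. The middle gap is then discharged by arguing the two blanked experiments are identically distributed: identifiers are uniform and can be coupled, stale ciphertexts are all $\Enc(0^z)$, and the current buckets and stash are asserted to coincide.

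Where you go beyond the paper is in isolating the $|d_0|\neq|d_1|$ issue. The paper disposes of the middle gap in one sentence---``all the most recent buckets contain identical items in both hybrids''---without arguing why differing block sizes cannot leave different greedy packings of the surviving items. Your instinct is right and your resolution points in the correct direction, but it can be made sharper than an appeal to random coins and admissibility: the structural invariant is that partial pieces of $d_b$ can only occupy buckets along $d_b$'s current identifier path (or the stash), so $d_b$'s size influences only the packing on that path; buckets off it are filled by writebacks in which $d_b$ does not match at that level and are therefore byte-for-byte identical across the two worlds. The final $\remove$ evicts exactly that path, pulling every bucket $d_b$ could have perturbed into the stash, drops $d_b$, and repacks from a stash whose per-item byte totals are now independent of $d_b$'s size. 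This yields identical (not merely indistinguishable) live states, so the high-entropy admissibility of $d_0,d_1$ is not actually needed at this point of the argument.
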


Regarding history independence, although any removed items are
unrecoverable, the height of each item in the vORAM tree, as well as the
history of accesses to each vORAM tree bucket, may reveal some
information about the \emph{order}, or timing, of when each item was
inserted. Intuitively, items appearing closer to the root level of the
vORAM are more likely to have been inserted recently, and vice versa.
However, if an item is inserted and then later has its path entirely
\evict{}ed due to some other item's insertion or removal, then any
history information of the older item is essentially wiped out; it is as
if that item had been removed and re-inserted. Because the identifiers
used in each operation are chosen at random, after some $O(n\log n)$
operations it is likely that \emph{every} path in the vORAM has been
evicted at least once.

\begin{restatable}{theorem}{oramhi} \label{thm:oramhi}
The vORAM provides history independence with
leakage of $O(n \log n + \secparam n)$ operations.
\end{restatable}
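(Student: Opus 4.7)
The plan is to show that after $\ell = c(n\log n + \secparam \cdot n)$ operations (for a suitable constant $c$), the vORAM sufficiently re-randomizes its physical state that it retains no information about operations executed before the common suffix. Let $\Vec{\op}^{(0)}$ and $\Vec{\op}^{(1)}$ be the two $\ell$-admissible sequences output by $\AAA_1$: they share the same final multiset of items and the same last $\ell$ operations. The adversary's view is $(\Vec{\acc}, \DDD.\mem)$, so indistinguishability must be argued for both components.

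The first step handles $\Vec{\acc}$: by Theorem~\ref{thm:oramob}, the full access-pattern sequences produced by $\Vec{\op}^{(0)}$ and $\Vec{\op}^{(1)}$ are computationally indistinguishable, since the two sequences have equal length and obliviousness needs nothing more. So it suffices to argue indistinguishability of $\DDD.\mem$ conditioned on any fixed access-pattern transcript. For that, I define the event $E$ that every leaf path in the vORAM tree is evicted and written back at least once during the common $\ell$-suffix. Each operation rewrites exactly one root-to-leaf path --- either a uniformly random one (in the case of \ins) or one indexed by an identifier that was chosen uniformly at random the last time it was handed out. A union bound over the $2^T \le n$ leaves together with the geometric estimate $(1 - 1/2^T)^{\ell}$ yields $\Pr[\neg E] \le n\cdot e^{-\ell/n} = \negl(\secparam)$ once $c$ is sufficiently large, and this is exactly where the $O(n\log n + \secparam \cdot n)$ leakage bound enters.

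Next, conditional on $E$, the plan is to show that the distribution of $\DDD.\mem$ is identical in both executions. Once $E$ holds, every bucket has been re-encrypted under a fresh key after the start of the suffix, and its contents are a deterministic function of (i) the current item set, (ii) the current identifier of each item, and (iii) the fresh writeback randomness consumed during the suffix. The current item set is identical in the two executions by $\ell$-admissibility; each identifier is resampled uniformly at random on every insertion or update, and conditional on the no-collision event (which fails only with probability $\negl(\hashparam)$ by the choice of $\hashparam$), the joint distribution of identifiers depends only on the current item set; and the suffix coin tosses can be coupled between the two runs. The stash contents and the root key stored in $\DDD.\mem$ are then the same deterministic function of these ingredients in both executions.

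The main obstacle is the last step: verifying that no subtle state from the non-common prefix --- for instance, a correlation between which items currently sit deep in the tree and which reside in the stash --- survives once every leaf path has been refreshed. I would handle this with an explicit coupling of the suffix randomness, showing that conditioned on $E$ the deterministic \writeback{} procedure produces identical bucket contents and identical stash in the two experiments. Combining $\Pr[\neg E] = \negl(\secparam)$, perfect equality of $\DDD.\mem$ conditioned on $E$, and Theorem~\ref{thm:oramob} for the access pattern via a standard three-hybrid argument then gives $\advhi \le \negl(\secparam)$, as required.
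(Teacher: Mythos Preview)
Your overall architecture---coupon collector bound to ensure every path is rewritten during the common suffix, followed by a coupling argument that the resulting physical state depends only on the current item set---matches the paper's proof. The coupon-collector step and its role in fixing $\ell = \Theta(n\log n + \secparam n)$ are essentially identical.

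The genuine gap is in your first step. You invoke Theorem~\ref{thm:oramob} to handle $\Vec{\acc}$ and then say ``it suffices to argue indistinguishability of $\DDD.\mem$ conditioned on any fixed access-pattern transcript.'' This decomposition does not go through. Obliviousness is proved by appealing to IND-CPA security of the bucket encryptions, and that reduction collapses the moment you hand the adversary $\DDD.\mem$: the root key in $\DDD.\mem$ lets the adversary decrypt the most recent ciphertext of every bucket appearing in $\Vec{\acc}$, so $\Vec{\acc}$ and $\DDD.\mem$ cannot be decoupled as you propose. Concretely, the ``three-hybrid argument'' you allude to would need an intermediate world in which $\Vec{\acc}$ comes from one experiment and $\DDD.\mem$ from the other, but those two objects are cryptographically bound together (keys in one decrypt ciphertexts in the other), so such a hybrid is not well-defined.

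The paper avoids this by \emph{not} using obliviousness as a black box. Instead it builds the hybrid directly at the encryption level: replace every \emph{superseded} bucket ciphertext in $\Vec{\acc}$ by $\Enc_{key}(0^z)$, which is justified because the keys for those old ciphertexts have been erased and are unreachable even given the current $\DDD.\mem$. After this replacement, the prefix portion of $\Vec{\acc}$ consists only of random path indices and semantically meaningless ciphertexts, and only then does one argue (via your same coupling idea) that the decryptable current state is identical across $b\in\{0,1\}$. Your coupling sketch for the final step is fine and matches what the paper does, but to make the whole argument sound you must replace the appeal to Theorem~\ref{thm:oramob} with this explicit encryption-level hybrid.
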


In fact, we can achieve asymptotically optimal leakage with only a
constant-factor blowup in the bandwidth. Every vORAM operation
involves reading and writing a single path. Additionally, after each
operation, we can evict and then re-write a complete \emph{subtree} of
size $\lg n$ which contains $(lg n)/2 -1$ leaf buckets in a
deterministicly chosen dummy operation that simply reads the buckets
into stash, then rewrites the buckets with no change in contents but
allowing the blocks evicted from the dummy operation and those evicted
from the access to all move between levels of the vORAM as usual. The
number of nodes evicted will be less than $2\lg n$, to encompass the
subtree itself as well as the path of buckets to the root of the
subtree, and hence the total bandwidth for the operation remains
$O(\log n)$.

The benefit of this approach is that if these dummy subtree evictions
are performed \emph{sequentially} across the vORAM tree on each
operation, any sequence of $n/\lg n$ operations is guaranteed to have
evicted every bucket in the vORAM at least once.  Hence this would
achieve history independence with only $O(n/\log n)$ leakage, which
matches the lower bound of Theorem~\ref{thm:oramhilb} and is therefore
optimal up to constant factors.

\paragraph{Stash size.}
Our vORAM construction maintains a small stash as long as the size of variable
blocks can be bounded by a geometric probability distribution, which is the case
for the HIRB that we intend to store within the vORAM. 
\begin{restatable}{theorem}{oramstash} \label{thm:oramstash}
Consider a vORAM 
with $T$ levels, collision parameter $\hashparam$,
storing at most $n = 2^T$ blocks, where the length $l$ of each block is
chosen independently from a distribution such that $\EE[l] = B$ and 
$\Pr[l > mB] < 0.5^m$.
Then, if the bucket size $Z$ satisfies $Z \ge 20 B$, 
for any $R \ge 1$, and after any single access to the vORAM,
we have
$$\Pr[|\stash| > R\,B] < 28\cdot (0.883)^R.$$
\end{restatable}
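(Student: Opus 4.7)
The plan is to mirror the stash-overflow analysis of Path ORAM by Stefanov et al., generalized from a count of blocks to a sum of byte-lengths so it applies in the variable-size setting. First I would introduce an ``infinite vORAM'': a conceptual vORAM with the same tree topology and block-to-leaf assignment but unbounded bucket capacity, into which the entire sequence of accesses is processed greedily (every block placed as deep as possible on its assigned path without exceeding the actual capacity $Z$). A coupling argument, analogous to Lemma~1 of Stefanov et al., shows that after any single access to the real vORAM the stash (in bytes) is bounded by the total byte-weight of blocks that the infinite vORAM fails to store within the finite buckets, so it suffices to bound the latter.

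Next I would characterize overflow via a Hall-type / min-cut argument on the vORAM tree. By the greedy assignment, more than $R\,B$ bytes overflow if and only if there exists a rooted subtree $\mathcal{T}'$ (containing the root of the vORAM, say with $k$ leaves and thus $2k-1$ buckets, total capacity $Z(2k-1)$) whose assigned byte-weight $W(\mathcal{T}') = \sum_{i\in \mathrm{blocks}(\mathcal{T}')} l_i$ satisfies
\[
W(\mathcal{T}') \;>\; Z(2k-1) + R\,B.
\]
Here ``assigned to $\mathcal{T}'$'' means the block's random identifier encodes a leaf inside $\mathcal{T}'$. Since identifiers are drawn independently and uniformly, each of the $n$ blocks falls into $\mathcal{T}'$ independently with probability $k/n$, so $\mathbb{E}[W(\mathcal{T}')] = kB$; because $Z \ge 20B$, the capacity $\ge 20B(2k-1) \ge 20Bk$ leaves a large multiplicative slack over the mean.

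The third step is a Chernoff bound on $W(\mathcal{T}')$ tailored to the block-length distribution. Writing $W(\mathcal{T}') = \sum_{i=1}^n Y_i\, l_i$ with $Y_i \sim \mathrm{Bernoulli}(k/n)$ independent of $l_i$, and using $\Pr[l_i > mB] < 2^{-m}$ to bound the moment generating function by something of the form $\mathbb{E}[e^{t l_i}] \le (\ln 2)/(\ln 2 - tB)$ for $t < (\ln 2)/B$, one gets
\[
\mathbb{E}\!\left[e^{t W(\mathcal{T}')}\right] \;\le\; \left(1 + \tfrac{k}{n}\!\left(\tfrac{\ln 2}{\ln 2 - tB} - 1\right)\right)^{\!n},
\]
and optimizing $t$ against the target $Z(2k-1) + RB$ yields an upper bound of the form $C_1 \cdot \alpha^{k} \cdot \beta^{R}$ with $\alpha,\beta < 1$. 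A union bound over all $O(2^T / 2^\ell)$ rooted subtrees of each ``shape'' (parametrized by $k \le 2^\ell$) and a geometric series in $k$ collapse the combinatorial factor into a constant, giving the stated bound $28 \cdot (0.883)^R$.

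The main obstacle will be step three: tuning the Chernoff exponent $t$ to reconcile three competing demands simultaneously, namely (i) staying below the radius of convergence $t < (\ln 2)/B$ imposed by the heavy-ish tail on $l_i$, (ii) extracting an $R$-exponent close to the claimed $0.883$, and (iii) absorbing the combinatorial union-bound factor over subtrees into the constant $28$. The slack $Z \ge 20 B$ is exactly what makes this feasible: it forces the subtree capacity to be large enough that the multiplicative Chernoff deficit dominates the branching factor of the tree, so the per-subtree failure probability decays faster than the number of subtrees grows. Once these constants are verified, the theorem follows by combining the bound over subtrees with the infinite-vORAM coupling from step one.
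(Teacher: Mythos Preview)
Your overall plan---$\infty$-ORAM coupling, subtree characterization of overflow, concentration bound, union bound over subtrees---matches the paper's four-step proof. The difference is in how you handle the concentration step, and this is where your proposal has a real gap.

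You model $W(\mathcal{T}') = \sum_i Y_i l_i$ with $Y_i \sim \mathrm{Bernoulli}(k/n)$ independent of everything, where $k$ is the number of leaves of $\mathcal{T}'$. This is not correct. In the $\infty$-ORAM, whether a block resides in a rooted subtree $\mathcal{T}'$ is \emph{not} determined solely by its random leaf assignment: it depends on the intersection of the block's path with all subsequently accessed paths (a block sits at the deepest bucket on its path that was touched by the most recent eviction since it was written). Blocks whose assigned leaf lies entirely outside $\mathcal{T}'$ can still reside inside $\mathcal{T}'$ if they have not yet been pushed deep enough, so your Bernoulli model undercounts and your mean $kB$ is too small; the true expected occupancy scales with $|\mathcal{T}'|$, not with the number of $\mathcal{T}'$-leaves. (Relatedly, ``$k$ leaves and thus $2k-1$ buckets'' only holds for full binary subtrees; the overflow characterization requires ranging over \emph{all} rooted subtrees, including chains.) Getting the right tail on the occupancy is exactly the nontrivial content of the Path ORAM stash analysis.

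The paper avoids this trap by \emph{decoupling} count from size: it first invokes the Path ORAM bound verbatim to show that the \emph{number} of blocks residing in $\tau$ exceeds $5|\tau| + R/4$ with probability at most $4^{-|\tau|}(0.9332)^{|\tau|}(0.881)^R$---this is where all the access-history dependence is absorbed, as a black box---and then separately bounds the probability that $5|\tau|+R/4$ blocks with sub-geometric lengths have total size exceeding $(20|\tau|+R)B$, via a Hoeffding-type lemma on sums of geometrics. A union bound on these two events gives a per-subtree bound of the form $\tfrac{2}{4^{|\tau|}}(0.9332)^{|\tau|}(0.883)^R$, and the $4^{|\tau|}$ denominator is exactly what cancels the $4^{|\tau|}$ count of rooted subtrees. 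Your direct-MGF route could in principle be made to work, but only after you replace the Bernoulli model with the correct (history-dependent) occupancy distribution---at which point you are essentially redoing the Path ORAM analysis rather than reusing it.
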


Note that the constants $28$ and $0.883$ are technical artifacts
of the analysis, and do not matter except to say that $0.883 < 1$ and
thus the failure probability decreases exponentially with the size of
stash.

As a corollary, for a vORAM storing at most $n$ blocks, 
the cloud storage requirement is $40Bn$ bits,
and the bandwidth for each operation amounts to $40 B \lg n$ bits. 
However, this is a theoretical upper bound, and our experiments in
Section~\ref{sec:eval} show a smaller
constants suffice. namely, setting $Z = 6B$ and $T = \lceil \lg n - 1
\rceil$ stabilizes the stash, so that the actual storage requirement
and bandwidth per operation are $6Bn$ and $12 B \lg n$ bits,
respectively.

Furthermore, to avoid failure due to stash overflow or collisions, 
the client storage $R$ and collision parameter $\hashparam$
should both grow slightly faster than $\log n$, i.e.,
$R,\hashparam \in \omega(\log n)$.

\section{HIRB Tree Data Structure}
\label{sec:ds}

We now use the vORAM construction described
in the previous section to implement a data structure supporting the operations
of a \emph{dictionary} that maps labels to values. In this paper, we
intentionally use the word ``labels'' rather than the word ``keys'' to
distinguish from the encryption \emph{keys} that are stored in the vORAM.

\paragraph{Motivating the HIRB.}
Before describing the construction and properties of the history
independent, randomized B-Tree (HIRB), we first wish to motivate the
need for the HIRB as it relates to the security and efficiently
requirements of storing it within the vORAM:
\begin{itemize}
\item The data structure must be easily partitioned into blocks
  that have expected size bounded by a geometric distribution for
  vORAM storage.

\item The data structure must be \emph{pointer-based}, and the
  structure of blocks and pointers must form a directed graph that is
  an \emph{arborescence}, such that there exists \emph{at most one}
  pointer to each block.
  This is because a non-recursive ORAM uses random identifiers for
  storage blocks, which must change on every read or write to that
  block.

\item The memory access pattern for an operation (e.g., \get{},
  \set{}, or \delete{}) must be bounded by a fix parameter to ensure
  obliviousness; otherwise the number of vORAM accesses could leak
  information about the data access.

\item Finally, the data structure must be \emph{uniquely represented}
  such that the pointer structures and contents are determined only by
  the set of $(\dskey,\dsval)$ pairs stored within, up to some
  randomization performed during initialization. Recall that {\em
    strong history independence} is provided via a unique
  representation, a sufficient and necessary
  condition~\cite{Algo:HHMPR05} for the desired
  security  property.

\end{itemize}

In summary, we require a uniquely-represented, tree-based data structure with
bounded height. While a variety of uniquely represented (or strongly history
independent) data structures have been proposed in the
literature~\cite{STOC:NaoTea01,ICALP:Golovin09}, we are not aware of any that
satisfy all of the requisite properties. 

While some form of hash table might seem like an obvious choice, we note
that such a structure would violate the second condition above; namely,
it would be impossible to store a hash table within an ORAM without
having a separate \emph{position map}, incurring an extra logarithmic
factor in the cost.  As it turns out, our HIRB
tree does use hashing in order to support secure deletion, but this is
only to sort the labels within the tree data structure.

\paragraph{Overview of HIRB tree.}
The closest data structure to the HIRB is the B-Skip List
\citep{arxiv:Gol10}; unfortunately, a skip list does not form a
tree. The HIRB is essentially equivalent to a B-Skip List after
sorting labels according to a hash function and removing pointers
between skip-nodes to impose a top-down tree structure.

Recall that a typical B-tree consists of large nodes, each with an array
of $(\dskey,\dsval)$ pairs and child nodes. A B-tree node has
branching factor of $k$, and we call it a $k$-node, if the node contains
$k-1$ labels, $k-1$ values, and $k$ children (as in
Figure~\ref{fig:knode}). In a typical B-tree, the branching factor
of each node is allowed to vary in some range $[B+1, 2B]$, where $B$
is a fixed parameter of the construction that controls the maximum
size of any single node.

\begin{figure}[htbp]
\begin{center}
\begin{tikzpicture}
    \tikzstyle{hirb}=[rectangle split, rectangle split horizontal,rectangle split ignore empty parts,draw]
    \tikzstyle{every node}=[hirb]
    \tikzstyle{level 1}=[sibling distance=40 pt]

    \node {$\dskey_1, \dsval_1$ 
           \nodepart{two} $\dskey_2, \dsval_2$
           \nodepart{three} $\cdots$
           \nodepart{four} $\dskey_{k-1}, \dsval_{k-1}$
           } [->]
      child {node {$\child_1$}}
      child {node {$\child_2$}}
      child {node {$\child_3$}}
      child {node[draw=none] {$\cdots$}}
      child {node {$\child_k$}}
    ;
\end{tikzpicture}
\vspace{-1em}
\end{center}
\caption{B-tree node with branching factor $k$\label{fig:knode}}
\end{figure}

HIRB tree nodes differ from typical B-tree nodes in two ways. First,
instead of storing the label in the node a cryptographic hash%
\footnote{
We need a random oracle for formal security. In practice, we used a SHA1
initialized with a random string chosen when the HIRB tree is instantiated.} 
of the label is stored.
This is necessary to support secure deletion of vORAM+HIRB
even when the nature of vORAM leaks some history of operations; namely,
revealing which HIRB node an item was deleted from should not reveal
the label that was deleted.

The second difference from a normal B-tree node is that the branching factor
of each node, rather than being limited to a fixed range, can take any
value $k \in [1,\infty)$. This branching factor will observe a
geometric distribution for storage within the vORAM. In particular,
it will be a random variable $X$ drawn
independently from a geometric distribution with \emph{expected} value
$\beta$, where $\beta$ is a parameter of the HIRB tree construction.

The \emph{height} of a node in the HIRB tree is defined as the length
of the path from that node to a leaf node; all leaf nodes are the same
distance to the root node for B-trees. The height of a new insertion of
$(\dskey,\dsval)$ in the HIRB is determined by a series of pseudorandom biased
coin flips based on the hash of the \dskey{}\footnote{Note that this choice of
    heights is more or less the same as the randomly-chosen node heights in a
skip list.}. The distribution of selected heights for insertions uniquely
determines the structure of the HIRB tree because the process is deterministic,
and thus the HIRB is uniquely-represented.

\paragraph{Parameters and preliminaries.}
Two parameters are fixed at initialization: the \emph{expected branching factor}
$\beta$, and the \emph{height} $H$. In addition, throughout this section we will
write $n$ as the maximum number of distinct labels that may be stored in the
HIRB tree, and $\hashparam$ as a parameter that affects the length of
hash digests%
\footnote{The parameter $\hashparam$ for HIRB and vORAM serves the
same purpose in avoiding collisions in identifiers so for simplicity
we assume they are the same}%
.

A HIRB tree node with branching factor $k$ consists of $k-1$ label
hashes, $k-1$ values, and $k$ vORAM identifiers which represent
pointers to the child nodes. This is described in
Figure~\ref{fig:hirbnode} where $h_i$ indicates $\Hash(\dskey_i)$.

\begin{figure}[htbp]
  \begin{center}\begin{tikzpicture}[node distance=0]
  \def\inode#1#2{%
  \node[short,label=center:{#2}] (#1) {}}
  \def\shnode#1#2#3{%
  \node[short,right=of #1, label=center:{#3}] (#2) {}}
  \def\lnode#1#2{%
    \node[long,right=of #1] (#2) {}}
  \def\llnode#1#2#3{%
  \node[long,right=of #1, label=center:{#3}] (#2) {}}
  \def\mnode#1#2#3{%
  \node[medium,right=of #1, label=center:{#3}] (#2) {}}

  \inode{a}{$id_0$};
  \shnode{a}{b}{$h_1$};
  \llnode{b}{c}{$\dsval_1$};
  \shnode{c}{d}{$id_1$};
  \node[short,draw=none,fill=none,right=of d,text height=0pt,text depth=0pt,text
  width=0.5cm] (g) {$\ldots$};
  \shnode{g}{i}{$h_k$};
  \llnode{i}{j}{$\dsval_k$};
  \shnode{j}{k}{$id_k$};
  \end{tikzpicture}\end{center}
  \vspace{-1em}
  \caption{HIRB node with branching factor $k$.\label{fig:hirbnode}}
\end{figure}

Similar to the vORAM itself, the length of the hash function should be
long enough to reduce the probability of collision below 
$2^{-\hashparam}$, so define
$|\Hash(\dskey)| = \max(2H\lg\beta + \hashparam, \secparam)$,
and define $\nodesize_k$ to be the size of a HIRB tree
node with branching factor $k$, given as
$$\nodesize_k = (k+1)(2T+\hashparam+1) + k( |\Hash(\dskey)| +|\dsval|),$$
where we write $|\dsval|$ as an upper bound on the size of the largest
value stored in the HIRB. (Recall that the size of each vORAM
identifier is $2T+\hashparam+1$.)  Each HIRB tree \emph{node} will be
stored as a single \emph{block} in the vORAM, so that a HIRB node with
branching factor $k$ will ultimately be a vORAM block with length
$\nodesize_k$.

As $\beta$ reflects the expected branching factor of a node,
it must be an integer greater than or equal to 1. This parameter
controls the efficiency of the tree and should be chosen according to
the size of vORAM buckets. In particular, using the results of
Theorem~\ref{thm:oramstash} in the previous section, and the HIRB node
size defined above,
one would choose $\beta$ according to the inequality
$20\nodesize_\beta \le Z$,
where $Z$ is the size of each vORAM bucket.  According to our
experimental results in Section~\ref{sec:eval}, the constant $20$ may
be reduced to $6$.

The height $H$ must be set so that $H \ge \log_\beta n$; otherwise we
risk the root node growing too large.  We assume that $H$ is fixed at
all times,
which is easily handled when an upper bound $n$ is known \emph{a priori}. 

\paragraph{HIRB tree operations.}
As previously described, the entries in a HIRB node are sorted by the
hash of the labels, and the \emph{search path} for a label is also
according to the label hashes. A lookup operation for a label requires
fetching each HIRB node along the search path from the vORAM and
returning the matching \dsval{}. 

Initially, an empty HIRB tree of height $H$ is created, as shown in
Figure~\ref{fig:emptyhirb}. Each node has a branching factor of 1 and
contains only the single vORAM identifier of its child.

\begin{figure}[htbp]
  \begin{center}\begin{tikzpicture}[level distance=25 pt, scale=0.9]
    \tikzstyle{hirb}=[rectangle split, rectangle split horizontal,rectangle split ignore empty parts,draw, fill=white]
    \tikzstyle{every node}=[hirb]

    \node (a) {\phantom{}{$\oslash$}} [->]
      child {node {\phantom{}{$\oslash$}} [->]
        child {node[draw=none] {$\vdots$} [->]
          child {node (z) {\phantom{}{$\oslash$}}}}};
    \draw [decorate,decoration={brace,amplitude=10pt}]
      (a.north east) -- node[draw=none,right=10 pt]{\small $H+1$ nodes} 
      (z.south east);
  \end{tikzpicture}\end{center}
  \vspace{-.1in}
  \caption{Empty HIRB with height $H$.\label{fig:emptyhirb}}

\end{figure}

Modifying the HIRB with a \set{} or \delete{} operation on some
\dskey{} involves first computing the height of the \dskey{}. The height is
determined by sampling from a geometric distribution with probability
$(\beta-1)/\beta$, which we derandomize by using a pseudorandom sequence based
on $\Hash(\dskey)$. The distribution guarantees that, in expectation, the number
of items at height 0 (i.e., in the leaves) is $\tfrac{\beta-1}{\beta}n$, the
number of items at height 1 is $\tfrac{\beta-1}{\beta^2}n$, and so on.

Inserting or removing an element from the HIRB involves (respectively)
splitting or merging nodes along the search path from the height of
the item down to the leaf. This differs from a typical B-tree in that
rather than inserting items at the leaf level and propagating up or
down with splitting or merging, the HIRB tree requires that the heights
of all items are fixed. As a result, insertions and deletions occur at
the selected height within the tree according to the label hash. A
demonstration of this process is provided in
Figure~\ref{fig:splitmerge}.

\begin{figure}[t]
  \begin{center}\begin{tikzpicture}[scale=0.8]
    \tikzstyle{hirb}=[rectangle split, rectangle split horizontal,rectangle split ignore empty parts,draw, fill=white]
    \tikzstyle{every node}=[hirb]
    \tikzstyle{level 1}=[sibling distance=33 pt]
    \tikzstyle{level 2}=[sibling distance=53 pt]
    \tikzstyle{level 3}=[sibling distance=23 pt]

    \node (a) {\phantom{X}\nodepart{two}\phantom{X}%
               \nodepart{three}\phantom{X}} [->]
      child {node[draw=none] {$\ldots$}}
      child {node[draw=none] {$\ldots$}}
      child {node [very thick] {\phantom{X}\nodepart{two}\phantom{X}} [->]
        child {node[draw=none] {$\ldots$}}
        child {node [very thick] {\phantom{X}\nodepart{two}\phantom{X}%
                     \nodepart{three}\phantom{X}\nodepart{four}\phantom{X}}
                     [->]
          child {node[draw=none] {$\ldots$}}
          child {node [very thick] {\phantom{X}}}
          child {node[draw=none] {$\ldots$}}
          child {node[draw=none] {$\ldots$}}
          child {node[draw=none] {$\ldots$}}
        }
        child {node[draw=none] (ay) {$\ldots$}}
      }
      child {node[draw=none] (ax) {$\ldots$}}
    ;

    \tikzstyle{level 1}=[sibling distance=43 pt]
    \tikzstyle{level 2}=[sibling distance=43 pt]

    \node [right=2cm of a] (b) {\phantom{X}\nodepart{two}\phantom{X}%
                                \nodepart{three}\phantom{X}} [->]
      child {node[draw=none, right=3pt of ax] {$\ldots$}}
      child {node[draw=none] {$\ldots$}}
      child {node [very thick] {\phantom{X}\nodepart{two}$X$%
                   \nodepart{three}\phantom{X}} [->]
        child {node[draw=none, right=3pt of ay] {$\ldots$}}
        child {node [very thick] (c1) {\phantom{X}}
          child {node[draw=none] {$\ldots$}}
          child {node[very thick] {\phantom{X}}}
        }
        child {node [very thick,right=24pt of c1] (c2) {\phantom{X}\nodepart{two}\phantom{X}%
                     \nodepart{three}\phantom{X}}
                     [->]
          child {node[very thick] {\phantom{}{$\oslash$}}}
          child {node[draw=none] {$\ldots$}}
          child {node[draw=none] {$\ldots$}}
          child {node[draw=none] {$\ldots$}}
        }
        child {node[draw=none,right=6pt of c2] {$\ldots$}}
      }
      child {node[draw=none] {$\ldots$}}
    ;
  \end{tikzpicture}\end{center}
\caption{HIRB insertion/deletion of $X = (\Hash(\dskey),\allowbreak\dsval)$: On
  the left is the HIRB without item $X$, displaying only the nodes
  along the search path for $X$, and on the right is the state of the
  HIRB with $X$ inserted. Observe that the insertion operation (left
  to right) involves splitting the nodes below $X$ in the HIRB, and
  the deletion operation (right to left) involves merging the nodes
  below $X$.  }
    \label{fig:splitmerge}
\end{figure}

In a HIRB tree with height $H$, each \get{} operation requires reading {\em
exactly} $H+1$ nodes from the vORAM, and each \set{} or \delete{} operation
involves reading and writing at most $2H+1$ nodes. To support obliviousness,
each operation will require exactly $2H+1$, accomplished by padding with
``dummy'' accesses so that every operation has an indistinguishable access
pattern.

One way of reading and updating the nodes along the search path 
would be to read all $2H+1$ HIRB nodes from
the vORAM and store them in temporary memory and then write back the
entire path after any update.
However, properties of the HIRB tree enable better performance because the
height of each HIRB tree element is uniquely determined, which means we can
perform the updates \emph{on the way down} in the search path. This only
requires 2 HIRB tree nodes to be stored in local memory at any given time.

Facilitating this extra efficiency requires
considerable care in the implementation due to the nature of vORAM
identifiers; namely, each internal node must be written back to vORAM
before its children nodes are fetched.  Fetching children nodes will
{\em change} their vORAM identifiers and invalidate the pointers in
the parent node. The solution is to \emph{pre-generate} new random
identifiers of the child nodes before they are even accessed from the
vORAM. 

The full details of the HIRB operations can be found in Appendix~\ref{app:hirb}.

\paragraph{HIRB tree properties.}
For our analysis of the HIRB tree, we first need to understand the
distribution of items among each level in the HIRB tree. We assume a subroutine
$\chooseheight(\dskey)$ evaluates a random function on $\dskey$ to generate
random coins, using which it samples from a truncated geometric distribution
with maximum value $H$ and probability $(\beta-1)/\beta$.

\begin{restatable}{assumption}{hirblevelass}\label{ass:hirblevel}
  If ~$\dskey_1,\ldots,\dskey_n$ are any $n$ distinct labels stored in a HIRB,
  then the heights $$\chooseheight(\dskey_1),\ldots,\allowbreak
  \chooseheight(\dskey_n)$$ are independent random samples from a truncated
  geometric distribution over $\{0,1,\ldots,H\}$ with probability
  $(\beta-1)/\beta$, where the randomness is determined entirely by the
  the random oracle and the random function upon creation of the
  HIRB.  
\end{restatable}

In practice, the random coins for $\chooseheight(\dskey)$ are prepared by
computing $\sf coins$ = PRG(SHA1($seed \| \dskey$)), where $seed$ is a global
random seed, and PRG is a pseudorandom generator. With SHA1 modeled as a random
oracle, the $\sf coins$ will be pseudorandom.

\begin{theorem}\label{thm:hirb}
  The HIRB tree is a dictionary data structure that associates arbitrary
  \dskey{}s to \dsval{}s. If it contains $n$ items, and has height
  $H \ge \log_\beta n$, and the nodes are stored in a vORAM,
  then the following properties hold:
  \begin{itemize}[noitemsep,topsep=0pt]
    \item The probability of
      failure in any operation is at most $2^{-\hashparam}$.
    \item Each operation requires exactly $2H+1$ node accesses, only 2
      of which need to be stored in temporary memory at any given time.
    \item The data structure itself, \emph{not} counting the pointers, is
      strongly history independent.
  \end{itemize}
\end{theorem}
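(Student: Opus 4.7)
The plan is to address each of the three bullets separately, since they rely on quite different arguments. For the failure-probability bullet, I observe that the only way any HIRB operation can misbehave is if two distinct stored labels hash to the same digest (in which case lookup, split, and merge can confuse distinct items) or if a vORAM identifier collides---but by the vORAM parameter choice of Section~\ref{sec:oram} the latter is already absorbed into the bound $2^{-\hashparam}$. Since $n\le \beta^H$ and $|\Hash(\dskey)| \ge 2H\lg\beta + \hashparam$, a union bound over the $\binom{n}{2}$ unordered pairs gives collision probability at most $2^{2H\lg\beta - 1}\cdot 2^{-(2H\lg\beta+\hashparam)} = 2^{-\hashparam-1}$, and combined with the vORAM identifier-collision probability the total remains at most $2^{-\hashparam}$.

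For the access-count bullet, the argument is structural. A \get{} traverses the unique root-to-leaf path dictated by $\Hash(\dskey)$, touching exactly $H+1$ nodes. For \set{} and \delete{}, the same root-to-leaf path is traversed, but if the target height $h=\chooseheight(\dskey)$ lies above the leaves then the operation must additionally split (on insertion) or merge (on deletion) the nodes on the path strictly below level $h$; a straightforward induction on the appendix pseudocode shows at most $H$ additional node writes, for a total bounded by $2H+1$. Padding with dummy vORAM accesses makes the count exact in every case. To bound local memory to 2 nodes, I would exploit the fact that fresh vORAM identifiers for the children are drawn in advance via $\idgen$ and written into the parent \emph{before} any child is fetched; hence the parent can be finalized and pushed back through \upd{} as soon as its updated contents (including the new child pointers) are known, so only the parent and the currently-loaded child ever coexist in erasable memory.

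For strong history independence, the goal is to exhibit a unique representation and invoke the characterization of~\cite{Algo:HHMPR05}. By Assumption~\ref{ass:hirblevel}, once the random seed is fixed at initialization, both $\Hash(\dskey)$ and $\chooseheight(\dskey)$ are deterministic functions of \dskey. Given any set $S = \{(\dskey_1,\dsval_1),\dots,(\dskey_n,\dsval_n)\}$, the multiset of triples $(\Hash(\dskey_i),\chooseheight(\dskey_i),\dsval_i)$ is determined by $S$ alone. Sorting these triples by hash at each level then uniquely partitions them into HIRB nodes: items at height $h$ act as separators, and the contents of each node at height $h' < h$ are exactly the maximal run of lower-height items between consecutive separators. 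This pins down every node's hashes, values, and shape, independent of the order of operations that produced $S$; since the theorem's conclusion excludes the vORAM-level pointer identifiers (which are inherently re-randomized), the remaining data is uniquely represented.

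I expect the second bullet to be the main obstacle, not because any single step is deep, but because it requires a careful walk through the split/merge pseudocode to confirm that $2H+1$ is both an upper bound \emph{and} achievable with pre-generated identifiers consistent under dummy-access padding. The failure bound is a routine birthday calculation, and the history-independence argument reduces cleanly to the deterministic nature of $\chooseheight$ and $\Hash$ given a fixed seed.
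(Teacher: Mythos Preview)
Your proposal is correct and follows essentially the same three-part decomposition the paper uses: hash-collision birthday bound for the failure probability, a structural walk through the \get/\set/\delete\ search path (with dummy padding and pre-generated identifiers) for the $2H+1$ access count and 2-node memory bound, and unique representation via the determinism of $\Hash$ and \chooseheight\ for strong history independence. The paper's own argument is only a brief sketch; you supply the details it leaves implicit, and your birthday calculation in fact recovers the tighter $2^{-\hashparam-1}$ for label-hash collisions. One small deviation: you fold vORAM identifier collisions into the first bullet, whereas the paper attributes failure here solely to label-hash collisions and treats identifier collisions as a separate vORAM-level concern---this does not affect correctness, but be aware that your ``combined total remains at most $2^{-\hashparam}$'' would need the vORAM identifier-collision probability to also sit below $2^{-\hashparam-1}$, which you assert rather than verify.
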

  The first property follows from the fact that the only way the HIRB
  tree can fail to work properly is if there is a hash collision.
  Based on the hash length defined above,
  the probability that any $2$ keys collide amongst the $n$ labels in
  the HIRB is at most $2^{-\hashparam}$.
  The second property follows from the description of the operations
  \get{}, \set{}, and \delete{}, and is crucial not only for the
  performance of the HIRB but also for the obliviousness property.
  The third property is a consequence of the fact that the HIRB is
  uniquely represented up to the pointer values, 
  after the hash function is chosen at initialization.

\paragraph{vORAM+HIRB properties.}
We are now ready for the main theoretical results of the paper, which have to
do with the performance and security guaranteed by the vORAM+HIRB
construction. 
\fullversion{These proofs follow in a straightforward way from the
results we have already stated on vORAM and on the HIRB, so we leave
their proofs to Appendix~\ref{app:hirb}.}

\begin{restatable}{theorem}{hirbperf}
  \label{thm:hirbperf}
  Suppose a HIRB tree with $n$ items and height $H$
  is stored within a vORAM with $L$ levels, bucket size $Z$,
  and stash size $R$. Given choices for $Z$ and $\hashparam>0$,
  set the parameters as follows:
  \begin{align*}
    T &\ge \lg(4n + \lg n + \hashparam) \\
    \beta &= \max\{\beta | Z \ge 20 \cdot \nodesize_\beta\} \\
    R &\ge \gamma \cdot \nodesize_\beta \\
    H &\ge \log_\beta n
  \end{align*}
  Then the probability of failure due to stash overflow or collisions
  after each operation is at most
  \[\Pr[\text{vORAM+HIRB failure}] \le 30\cdot (0.883)^\hashparam.\]
\end{restatable}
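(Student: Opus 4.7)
The plan is to decompose the failure probability into two essentially independent sources---hash collisions and vORAM stash overflow---and combine previously established bounds via a union bound. The overall structure mirrors the layered design of the construction: the HIRB itself fails only on hash collisions among its labels, and given non-colliding HIRB nodes, the underlying vORAM fails only on stash overflow or identifier collision.

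First, I would invoke Theorem~\ref{thm:hirb} (the HIRB properties) to bound the probability of a label hash collision within the HIRB by $2^{-\hashparam}$. Separately, the parameter setting $T \ge \lg(4n + \lg n + \hashparam)$ forces identifiers to be $2T+\hashparam+1$ bits long, so a standard birthday argument across the at most $n + O(\log n)$ live vORAM identifiers yields an identifier-collision probability of at most $2^{-\hashparam}$.

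Next, I would apply Theorem~\ref{thm:oramstash} to the vORAM with HIRB nodes as its variable-size blocks. Three hypotheses must be verified: (i) the block-length distribution satisfies $\EE[l]=B$ and $\Pr[l > mB] < 0.5^m$; (ii) $Z \ge 20B$; (iii) there are at most $n = 2^T$ blocks. Setting $B = \nodesize_\beta$, items (ii) and (iii) are immediate from the parameter choices---$\beta$ is chosen maximally with $Z \ge 20\nodesize_\beta$, and $T \ge \lg n$. The main obstacle is (i): I need to argue that each HIRB node's size has geometric tails around $\nodesize_\beta$. Since $\nodesize_k$ is affine in the branching factor $k$, this reduces to showing that the branching factor of each HIRB node is stochastically dominated by (and has mean matching) a geometric random variable with expectation $\beta$. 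By Assumption~\ref{ass:hirblevel}, each label is independently assigned to a height via a truncated geometric distribution with parameter $(\beta-1)/\beta$; combined with the hash-based ordering of labels, the number of items residing in any fixed internal node is itself geometric-like with mean $\beta-1$, which gives the desired tail bound on $\nodesize_k$. With stash parameter $R \ge \hashparam \cdot \nodesize_\beta$ (so the normalized stash factor in Theorem~\ref{thm:oramstash} is at least $\hashparam$), the stash-overflow probability is at most $28 \cdot (0.883)^\hashparam$.

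Finally I would combine by union bound: the total failure probability is at most $2^{-\hashparam} + 2^{-\hashparam} + 28 \cdot (0.883)^\hashparam$. Because $0.883 > 1/2$, each $2^{-\hashparam}$ term is bounded above by $(0.883)^\hashparam$, yielding the claimed bound of $30 \cdot (0.883)^\hashparam$. The hard part is step (i)---justifying the geometric tail of the HIRB node-size distribution from the per-label height distribution---while the remaining steps are largely mechanical applications of parameter settings to the earlier theorems.
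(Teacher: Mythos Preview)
Your overall decomposition is close to the paper's, and your identification of the geometric tail on HIRB node sizes (hypothesis (i) of Theorem~\ref{thm:oramstash}) as the nontrivial ingredient is correct---the paper isolates this as Lemma~\ref{lem:hirbnode}. However, there is a genuine gap in your verification of hypothesis (iii).

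You claim that ``there are at most $n = 2^T$ blocks'' follows immediately from $T \ge \lg n$. But the blocks stored in the vORAM are HIRB \emph{nodes}, not HIRB \emph{items}. The number of HIRB nodes equals $H$ plus the sum of the heights of all $n$ items, which is a random variable that can exceed $n$ (each insertion at height $h$ splits $h$ existing nodes). The paper handles this with a separate lemma (Lemma~\ref{lem:hirbcount}) showing $\Pr[\text{\#nodes} \ge H + 4n + m] < (0.883)^m$; setting $m = \hashparam$ explains precisely why the theorem requires $T \ge \lg(4n + \lg n + \hashparam)$ rather than merely $T \ge \lg n$, and contributes an additional $(0.883)^\hashparam$ term to the union bound. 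Your proposed third failure event, vORAM identifier collision, is not what the parameter choice for $T$ is buying you here, and your assertion of ``at most $n + O(\log n)$ live vORAM identifiers'' is unjustified without this node-count argument. The paper's three failure events are: too many HIRB nodes ($(0.883)^\hashparam$), stash overflow ($28\cdot(0.883)^\hashparam$), and label-hash collision ($2^{-\hashparam}$), which combine to the stated $30\cdot(0.883)^\hashparam$.
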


The parameters follow from the discussion above. Again note that the
constants $30$ and $0.883$ are technical artifacts of the analysis.

\begin{restatable}{theorem}{hirbsec}
  Suppose a vORAM+HIRB is constructed with parameters as above.  The vORAM+HIRB
  provides obliviousness, secure deletion, and history independence with leakage
  of $O(n + n\secparam/(\log n))$ operations. 
\end{restatable}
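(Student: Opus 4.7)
The plan is to prove the three properties (obliviousness, secure deletion, history independence) separately, in each case reducing the vORAM+HIRB guarantee to the corresponding guarantee already established for the vORAM in \cref{thm:oramob,thm:oramsd,thm:oramhi} together with the structural properties of the HIRB established in \cref{thm:hirb}.

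\textbf{Obliviousness.} By \cref{thm:hirb}, every HIRB operation (be it \get{}, \set{}, or \delete{}) invokes \emph{exactly} $2H+1$ vORAM accesses, padded with dummy accesses when needed. Thus for any two sequences $\Vec{\op}^{(0)}, \Vec{\op}^{(1)}$ of the same length produced by an admissible obliviousness adversary, the induced sequences of vORAM operations have identical length $m(2H+1)$. A straightforward hybrid argument over these vORAM operations, appealing to \cref{thm:oramob} for each step, shows the access patterns are computationally indistinguishable, yielding the obliviousness bound $\advob \le \negl(\secparam)$.

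\textbf{Secure deletion.} The reduction follows the deletion experiment of \cref{fig:secdef}. The key observation is that the challenge bit $b$ affects the HIRB only through the \emph{insert}/\emph{lookup}/\emph{delete} calls involving $d_b$, which in turn manifest at the vORAM level as (i) vORAM accesses on identifiers derived from $\Hash(\dskey_b)$, and (ii) the contents of certain HIRB nodes written back to the vORAM. For (i), modeling $\Hash$ as a random oracle and using the admissibility of $\AAA_1$ (the labels of $d_0,d_1$ come from high-entropy distributions), the identifiers visited are indistinguishable from those of any other uninvolved item, except with negligible probability that $\AAA_2$ queries the oracle on $\dskey_b$. For (ii), once the delete of $d_b$ has completed, the HIRB tree is strongly history-independent by \cref{thm:hirb} and thus has identical distribution for $b=0$ and $b=1$ (conditioned on the same final set of labels); the corresponding vORAM then securely erases its residual encryption keys by \cref{thm:oramsd}. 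Assembling these two observations via a standard hybrid over the oracle queries and over the erasable-memory views gives $\advsdel \le \negl(\secparam)$.

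\textbf{History independence.} Let $\ell_{\text{HIRB}}$ denote the target HIRB leakage and $\ell_{\text{vORAM}}$ the vORAM leakage from \cref{thm:oramhi}, so $\ell_{\text{vORAM}} = O(n\log n + \secparam n)$. Since each HIRB operation triggers exactly $2H+1 = \Theta(\log n)$ vORAM operations, any sequence of the last $\ell_{\text{HIRB}}$ HIRB operations induces the last $\Theta(\ell_{\text{HIRB}} \log n)$ vORAM operations. Setting
\[
\ell_{\text{HIRB}} \;=\; \Theta\!\left(\frac{\ell_{\text{vORAM}}}{\log n}\right) \;=\; O\!\left(n + \frac{n\secparam}{\log n}\right),
\]
we can reduce any $\ell_{\text{HIRB}}$-admissible HIRB distinguisher to an $\ell_{\text{vORAM}}$-admissible vORAM distinguisher: the two HIRB operation sequences, agreeing on their final $\ell_{\text{HIRB}}$ operations and yielding the same final data set, induce two vORAM operation sequences that agree on their final $\ell_{\text{vORAM}}$ operations, and by the strong history-independence of the HIRB structure itself (\cref{thm:hirb}, third bullet) they also lead to the same final vORAM contents (up to the random pointer/identifier values that vORAM already refreshes). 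The result then follows from \cref{thm:oramhi}.

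\textbf{Main obstacle.} The subtlest step is the secure-deletion argument: unlike plain vORAM, the HIRB exposes \emph{structural} information at the vORAM-block level, so we must carefully argue that neither the identifier trace nor the leaked block contents depend on $b$ beyond what is already allowed by the high-entropy admissibility condition and the random-oracle treatment of $\Hash$. Once that reduction is carefully set up, the remainder is bookkeeping through the hybrid chain.
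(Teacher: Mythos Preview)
Your proposal is correct and follows essentially the same approach as the paper: obliviousness via \cref{thm:oramob} plus the fixed $2H+1$ access count from \cref{thm:hirb}; history independence by dividing the vORAM leakage bound of \cref{thm:oramhi} by the $\Theta(\log n)$ vORAM accesses per HIRB operation, using unique representation of the HIRB; and secure deletion via the random-oracle treatment of $\Hash$, the high-entropy admissibility of $d_0,d_1$, and the vORAM key-erasure argument. The only cosmetic difference is that the paper makes the secure-deletion hybrid explicit by \emph{programming} the random oracle so that $\Hash(\dskey_0)=\Hash(\dskey_1)$ (making the $b=0$ and $b=1$ executions literally identical once old ciphertexts are replaced by encryptions of zeros), whereas you argue the same indistinguishability more directly; both routes rely on the same observation that $\AAA_3$ receives no state from $\AAA_1,\AAA_2$ and hence cannot query the oracle on $\dskey_b$ except with negligible probability.
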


The security properties follow %
from the previous results on the vORAM and the HIRB. Note that the
HIRB structure itself provides history independence with no leakage,
but when combined with the vORAM, the pointers may leak information about recent
operations. The factor $O(\log n)$ difference from the amount of
leakage from vORAM in Theorem~\ref{thm:oramhi} arises because each
HIRB operation entails $O(\log n)$ vORAM operations. Following the
discussion after Theorem~\ref{thm:oramhi}, we could also reduce the
leakage in vORAM+HIRB to $O(n/\log^2 n)$, with constant-factor increase
in bandwidth, which again is optimal
according to Theorem~\ref{thm:oramhilb}.

\section{Evaluation}
\label{sec:eval}

We completed two empirical analyses of the vORAM+HIRB system. First,
we sought to determine the most effective size for vORAM buckets with
respect to the expected block size, i.e., the ratio $Z/B$.  Second, we
made a complete implementation of the vORAM+HIRB and measured its
performance in storing a realistic dataset of key/value pairs of 22MB in size. 
The complete source code of our implementation is available upon
request. 

\subsection{Optimizing vORAM parameters}
A crucial performance parameter in our vORAM construction is the ratio
$Z/B$ between the size $Z$ of each bucket and the expected size $B$ of
each block.  (Note that $B = \nodesize_\beta$ when storing HIRB nodes
within the vORAM.)  This ratio is a constant factor in the bandwidth
of every vORAM operation and has a considerable effect on performance.
In the Path ORAM, the best corresponding theoretical ratio is $5$,
whereas it has been shown experimentally that a ratio of $4$ will also
work, even in the worst case \cite{CCS:SDSFRY13}.

We performed a similar experimental analysis of the ratio $Z/B$ for
the vORAM. Our best theoretical ratio from Theorem~\ref{thm:oramstash}
is 20, but as in related work, the experimental performance is
better. The goal is then to find the optimal, empirical choice for the
ratio $Z/B$: If $Z/B$ is too large, this will increase the overall
communication cost of the vORAM, and if it is too small, there is a
risk of stash overflow and loss of data or obliviousness.

For the experiments described below, we implemented a vORAM structure without
encryption and inserted a chosen number of variable size blocks whose
sizes were randomly sampled from a geometric distribution with expected
size 68 bytes.
To avoid collisions, we ensured the identifier lengths
satisfied $\hashparam\ge40$.

\paragraph{Stash size.} 
To analyze the stash size for different $Z/B$ ratios, we ran a number of
experiments and monitored the maximum stash size observed at any point
throughout the experiment.  Recall, while the stash will typically be empty
after every operation, the max stash size should grow logarithmically with
respect to the number of items inserted in the vORAM. The primary results are
presented in Figure~\ref{fig:maxstash}.

\graphmaxstash
This experiment was conducted by running 50 simulations of a vORAM with $n$
insertions and a height of $T=\lg n$. The $Z/B$ value ranged from 1 to 50, and
results in the range 1 through 12 are presented in the graph for values of $n$
ranging from $10^2$ through $10^5$. 
The graph plots the ratio $R/\lg n$, where $R$ is the
largest max stash size at any point in any of the 50 simulations.
Observe that
between $Z/B=4$ and $Z/B=6$ the ratio stabilizes for all values of $n$,
indicating a maximum stash of approximately $100 \lg n$.

\graphmttf
In order to measure how much stash would be needed in practice for much
larger experimental runs, we fixed $Z/B=6$ and for three large database
sizes, $n=2^{18}, 2^{19}, 2^{20}$, For each size, we executed $2n$
operations, measuring the size of stash after each. In practice, as we
would assume from the theoretical results, the stash size is almost
always zero. However, the stash does occasionally become non-empty, and
it is precisely the frequency and size of these rare events that we wish
to measure.

\cref{fig:mttf} shows the result of our stash overflow experiment. We
divided each test run of roughly $2n$ operations into roughly $n$
overlapping windows of $n$ operations each, and then for each window,
and each possible stash size,
calculated the number of operations before the first time that stash
size was exceeded. The average number of operations until this occurred,
over all $n$ windows, is plotted in the graph. The data shows a linear
trend in log-scale, meaning that the stash size necessary to ensure low
overflow probability after $N$ operations is $O(\log N)$, as
expected. Furthermore, in all experiments we never witnessed a stash
size larger than roughly 10KB, whereas the theoretical bound of 
$100\lg n$ items would be 16KB for the largest test with $2^{20}$
8-byte items.

\paragraph{Bucket utilization.}
Stash size is the most important parameter of vORAM, but
it provides a limited view into the optimal bucket size ratio, in particular as
the stash overflow is typically zero after every operation,
for sufficiently large buckets.  We
measured the utilization of buckets at different levels of the vORAM with varied
heights and $Z/B$ values. The results are presented in Figure~\ref{fig:loadutil}
and were collected by averaging the final bucket utilization from 10
simulations. The utilization at each level is measured by dividing the total
storage capacity of the level by the number of bytes at the level. In all cases,
$n=2^{15}$ elements were inserted, and the vORAM height varied between 14, 15, and
16. The graph shows that with height $\lg n = 15$ 
or higher and $Z/B$ is 6 or higher,
utilization stabilizes throughout all the levels (with only a small  spike at the
leaf level). 

\graphloadutil

The results indicate, again, that when $Z/B=6$, the
utilization at the interior buckets stabilizes.  With smaller ratios, e.g.,
$Z/B=4$, the utilization of buckets higher in the tree dominates those lower in
the tree; essentially, blocks are not able to reach lower levels resulting in
higher stash sizes (see previous experiment). With larger ratios,
which we measured all the way to $Z/B=13$, we observed consistent
stabilization.

In addition, our data shows that decreasing the number of levels from
$\lg n$ to $\lg n - 1$ (e.g., from 15 to 14 in the figure) increases
utilization at the leaf nodes as expected (as depicted in the spike in
the tail of the graphs), but when $Z/B\ge 6$ the extra blocks in leaf
nodes do \emph{not} propagate up the tree and affect the stash. It
therefore appears that in practice, the number of levels $T$ could be
set to $\lg n - 1$, which will result in a factor of 2 savings in the
size of persistent (cloud) storage due to high utilization at the leaf
nodes. This follows a similar observation about the height of the Path 
ORAM made by~\cite{CCS:SDSFRY13}.

\subsection{Measuring vORAM+HIRB Performance}

We measured the performance of our vORAM+HIRB implementation on a real
data set of reasonable size, and compared to some alternative methods
for storing a remote map data structure that provide varying levels of
security and efficiency. All of our implementations used the same
client/server setup, with a Python3 implementation and AWS as the cloud
provider, in order to give a fair comparison.

\paragraph{Sample dataset.} 
We tested the
performance of our implementation on a dataset of 300,000 synthetic
key/value pairs where where keys were variable sizes (in the order of
bytes 10-20 bytes) and values were fixed at 16 bytes. The total
unencrypted data set is 22MB in size. In our experiments, we used some
subset of this data dependent on the size of the ORAM, and for each
size, we also assumed that the ORAM user would want to allow the
database to grow. As such, we built the ORAM to double the size of the
initialization.

\paragraph{Optimized vORAM+HIRB implementation.}
We fully implemented our vORAM+HIRB map data structure using Python3 and Amazon
Web Services as the cloud service provider. We used AES256 for encryption in
vORAM, and used SHA1 to generate labels for the HIRB. In our setting, we
considered a client running on the local machine that maintained the erasable
memory, and the server (the cloud) provided the persistent storage with a simple
get/set interface to store or retrieve a given (encrypted) vORAM bucket.

For the vORAM buckets, we choose $Z/B=6$ based on the prior
experiments, and a bucket size of 4K, which is the preferred back-end
transfer size for AWS, and was also the bucket size used by
\cite{CCS:BNP+15}. One of the advantages of the vORAM over other
ORAMs is that the bucket size can be set to match the storage
requirements with high bucket utilization. The settings for the HIRB
were then selected based on Theorem~\ref{thm:hirbperf} and based on
that, we calculated a $\beta=12$ for the sample data (labels and
values) stored within the HIRB. The label, value, and associated vORAM
identifiers total 56 bytes per item.

In our experiments, we found that 
the round complexity of protocols dominate performance and so we made a number
of improvements and optimizations to the vORAM access routines to compensate.
The result is an optimized version of the vORAM.  In particular:

\begin{itemize}
\item {\it Parallelization}: The optimized vORAM transfers buckets along a
        single path in parallel over simultaneous connections for both the
        \evict{} and \writeback{} methods. Our experiments used up to $T$
        threads in parallel to fetch and send ORAM block files, and each
        maintained a persistent {\tt sftp} connection. 

\item {\it Buffering}: A local buffer storing $2T$ top-most ORAM buckets  was used to facilitate
    asynchronous path reading and writing by our threads. Note the size of the
    client storage still remains with $O(\log n)$ since $T = O(\log n)$. This
    had an added performance benefit beyond the parallelization because the top
    few levels of the ORAM generally resided in the buffer and did not need to
    be transferred back and forth to the cloud after every operation.  
\end{itemize}

These optimizations had a considerable effect on the performance. We did not include the cost of the $\approx$ 2 second
setup/teardown time for these SSL connections in our results as these
were a one-time cost incurred at initialization. 
Many similar techniques to these
have been used in previous work to achieve similar performance gains
(e.g., \cite{lorch2013shroud,YRFKDD14superblock}), although they have not been
previously applied to oblivious data structures.

\paragraph{Comparison baselines.}
We compared our optimized vORAM+HIRB construction with four other
alternative implementations of a remote map data structure,
with a wide range of performance and security properties:

\begin{itemize}
\item \emph{Un-optimized vORAM+HIRB}. This is the same as our normal
vORAM+HIRB construction, but without any buffering of vORAM buckets and
with only a single concurrent sftp connection. This comparison allows us
to see what gains are due to the algorithmic improvements in vORAM and
HIRB, and which are due to the network optimizations.

\item {\it Naive Baseline}: We implemented a naive approach
  that provides all three security properties, obliviousness, secure
  deletion, and history independence. The method involves maintaining
  a single, fixed-size encrypted file transferred back and forth
  between the server and client and re-encrypted on each access. 
  While this solution is cumbersome for large sizes, it is the obvious
  solution for small databases and thus provides a useful baseline.
  Furthermore, we are not aware of any other method
  (other than vORAM+HIRB) to provide obliviousness, secure-deletion,
  and history independence.

\item {\it ORAM+AVL}: We implemented the ODS proposed by
  \cite{CCS:WNLCSS14} of
  an AVL embedded within an non-recursive Path ORAM. Note that
  ORAM+AVL does not provide secure deletion nor history
  independence. We used the same cryptographic settings as
  our vORAM+HIRB implementation, and used 256 byte blocks for each AVL
  node, which was the smallest size we could achieve without additional
  optimizations. As
  recommended by~\cite{CCS:SDSFRY13}, we stored $Z=4$ fixed-size blocks
  in each bucket, for a total of 1K bucket size.
  Note that this bucket size is less than the 4K transfer size
  recommended by the cloud storage, which reflects the limitation of
  ORAM+AVL in that it cannot effectively utilize larger buckets.
  We add the observation that, when the same experiments were run with
  4K size buckets (more wasted bandwidth, but matching the other experiments), 
  the timings did
  not change by more than 1 second, indicating that the 4K bucket size
  is a good choice for the AWS back-end.

\item {\it SD-B-Tree}: As another comparison point, we implemented a
  remotely stored B-Tree with secure deletion where each node is
  encrypted with a key stored in the parent with re-keying for each
  access, much as described by Reardon et al.~\cite{CCS:RRBC13}. 
  While this solution provides secure deletion, and stores all data
  encrypted, it does not provide obliviousness nor history independence.
  Again, we used AES256 encryption,
  with $\beta=110$ for the B-tree max internal node size in order to
  optimize 4K-size blocks.
\end{itemize}

In terms of security, only our vORAM+HIRB as well as the naive baseline
provide obliviousness, secure deletion, and history independence. The
ORAM+AVL provides obliviousness only, and the SD-B-Tree is most
vulnerable to leaking information in the cloud, as it provides secure
deletion only. 

In terms of asymptotic performance, the SD-B-Tree
is fastest, requiring only $O(\log n)$ data transfer per operation. The
vORAM+HIRB and ORAM+AVL both require $O(\log^2 n)$ data transfer per
operation, although as discussed previously the vORAM+HIRB saves a
considerable constant factor. The naive baseline requires $O(n)$
transfer per operation, albeit with the smallest possible constant factor.

\graphaccesstime

\paragraph{Experimental  results.}
The primary result of the experiment is presented in
Figure~\ref{fig:accesstime} where we compared the cost of a single
access time (in seconds) across the back end storage (note, graph is
log-log). Unsurprisingly, the SD-B-Tree implementation is fastest for
sufficiently large database sizes. However, our optimized vORAM+HIRB
implementation was competitive to the SD-B-Tree performance, both being
less than 1 second across our range of experiments.

Most striking is the access time of ORAM+AVL compared to the vORAM+HIRB
implementations. In both the optimized and un-optimized setting, the
vORAM+HIRB is orders of magnitude faster than ORAM+AVL, 20X faster
un-optimized and 100X faster when optimized. Even for a relatively small
number of entries such as $2^{11}$, a single access of ORAM+AVL takes
35 seconds, while it only requires 1.3 seconds of un-optimized
vORAM+HIRB and 0.2 second of an optimized implementation. It is not
until $2^{19}$ entries that ORAM+AVL even outperforms the naive $O(n)$
baseline solution.

As described previously, we attribute much of the speed to decreasing
the round complexity. The HIRB tree requires much smaller
height as compared to an AVL tree because each HIRB node contains
$\beta$ items on average as compared to just a single item for an
AVL tree. Additionally, the HIRB's height is fixed and does not require
padding to achieve obliviousness. Each AVL operation entails
$3\cdot1.44\lg N$ ORAM operations as compared to just
$2\log_\beta N$ vORAM operations for the HIRB.  This difference
in communication cost is easily observed in Table~\ref{tab:comm}.
Overall, we see that the storage and communication costs for vORAM+HIRB
are not too much larger than that for a secure deletion B-tree, which
does not provide any access pattern hiding as the oblivious
alternatives do.

\tablecomm

(The values in this table were generated by considering the worst-case
costs in all cases, for our actual implementations, but considering
only a single operation. Note that, for constructions providing
obliviousness, every operation must \emph{actually} follow this worst
case cost, and so the comparison is fair.)

Put simply, the vORAM+HIRB and SD-B-Tree are the only implementations
which can be considered practical for real data sizes, and the benefit
of vORAM+HIRB is its considerable additional security guarantees of
oblivious and bounded history independence.

\section{Conclusion}
\label{sec:conclusion}

In this paper, we have shown a new secure cloud storage system combining the
previously disjoint security properties of obliviousness, secure deletion, and
history independence. This was accomplished by developing a new variable block
size ORAM, or vORAM, and a new history independent, randomized data structure
(HIRB) to be stored within the vORAM.

The theoretical performance of our vORAM+HIRB construction is competitive to
existing systems which provide fewer security properties. Our implemented
system is up to 100X faster (w.r.t. access time) than current best oblivious
map data structure (which provides no secure deletion or history independence)
by Wang et al.~(CCS 14), bringing our single-operation time for a
reasonable-sized database ($>2^{19}$) to less than 1 second per access.

There much potential for future work in this area. For example, one
could consider data structures that support a richer set of
operations, such as range queries, while preserving obliviousness,
secure deletion, and history independence.  Additionally, the vORAM
construction in itself may provide novel and exciting new analytic
results for ORAMs generally by not requiring fixed bucket sizes. There
is a potential to improve the overall utilization and communication
cost compared to existing ORAM models that used fixed size blocks.

Finally, while we have demonstrated the practicality in terms of
overall per-operation speed, we did not consider some additional
practical performance measures as investigated by \cite{CCS:BNP+15},
such as performing asynchronous operations and optimizing upload vs
download rates.  Developing an ODS map considering these concerns as
well would be a useful direction for future work.

\section*{Acknowledgements}
\addcontentsline{toc}{section}{Acknowledgements}
The authors are supported by the Office of Naval Research (ONR),
as well as National Science Foundation awards
\#1406192 and \#1319994.

\renewcommand{\bibpreamble}{\addcontentsline{toc}{section}{References}}

{
\bibliographystyle{plainnat}

%\bibliography

}

\begin{appendices}
\section{vORAM Operation Details}
\label{app:voram}

\ifpreprint  %
  The full detail of the vORAM helper functions is provided in
  \cref{fig:voram1}, and the three main operations are shown in
  \cref{fig:voram2}.

  \begin{figure}[p]
  \begin{minipage}{\textwidth}
  \centering
  \begin{framed}
  \begin{minipage}{6 in}
\else  %
  We give the pseudocode of vORAM helper functions.

  \begingroup
\fi %

\newcommand{\espace}{\hspace{2em}}

\small
\hspace{-1em}$\underline{\idgen()}$ 

\begin{algorithmic}[1]
\State Choose $r \gets \zo^{2T+\hashparam}$.
\State \Return $1 \| r$. 
\end{algorithmic}

\hspace{-1em}$\underline{\path(id, t)}$ 

\begin{algorithmic}[1]
\State \Return the location of the node at level $t$ along the path from
the root to the leaf node identified by $id$. This is is simply the
index indicated by the $(t+1)$ most significant bits of $id$.
\end{algorithmic}

\hspace{-1em}$\underline{\evict(id)}$  

\begin{algorithmic}[1]
\State $key \gets rootkey$ \Comment $rootkey:$ enc key for root bucket 
\State $B \gets$  empty list
\For {$t = 0, 1, \ldots T$}
    \State remove bucket at $\path(id, t)$ from persistent storage
    \Statex \espace decrypt it with $key$
  \State Append all partial blocks in the bucket to the end of $B$
  \State $key \gets $ child key from bucket according to $\path(id,t+1)$
\EndFor

\For {each partial block $(id^*, \ell, blk)$ in $B$}
  \State \textbf{if} {$(id^*, \ell_0, blk_0)$ is in stash already}
  \State \textbf{then} replace with $(id^*, \ell_0+\ell, blk_0\, blk)$ \Comment{merge} 
  \State \textbf{else} Add $(id^*, \ell, blk)$ to stash
\EndFor
\end{algorithmic}

\hspace{-1em}$\underline{\writeback(id)}$  

\begin{algorithmic}[1]
\State $key \gets \nil$
\For { $t = T, T-1, \dots, 0$} 
    \State $W \from \{ (id^*, \ell, blk) \in \stash: \path(id^*, t) =
    \path(id,t) \}$ 
    \Statex \Comment $W$ is the partial blocks storable in the bucket
    \State create empty bucket with new child key $key$ 
    \Statex \espace (other child key remains the same)
    \While {$W$ is not empty and bucket is not full}
      \State $(id^*, \ell, blk) \gets $ arbitrary element from $W$
      \State $(id^*, \ell_1, blk_1) \gets$ largest partial block of the above,
      fitting \Statex \espace \espace in the bucket with $blk=blk_0\,blk_1$ and
      $|blk_1|=\ell_1$.
      \State Add $(id^*, \ell_1, blk_1)$ to the bucket
      \State \textbf{if} {$\ell_1 = \ell$} 
      \State \textbf{then} remove $(id^*,\ell,blk)$ from $W$ and from stash
      \State \textbf{else}
          replace $(id^*,\ell,blk)$ in stash with
          $(id^*,\ell-\ell_1, blk_0)$. 
      \Statex \Comment{split a partial block}
    \EndWhile
    \State $key \gets \zo^\secparam$ chosen uniformly at random
    \State insert $\Enc_{key}(\text{bucket})$ at $\path(id,t)$ in
    persistent storage.
\EndFor
\State $rootkey \gets key$
\end{algorithmic}

\ifpreprint  %
  \end{minipage}
  \end{framed}
  \end{minipage}
  \caption{vORAM helper functions\label{fig:voram1}}
  \end{figure}
\else  %
  \endgroup
\fi %

\ifpreprint  %
  \begin{figure}[tbp]
  \begin{minipage}{\textwidth}
  \centering
  \begin{framed}
  \begin{minipage}{6 in}
\else  %
  Now, we give the pseudocode of the vORAM operations: 

  \begingroup
\fi %

\small
\hspace{-1em}$\underline{\ins(blk)}$

\begin{algorithmic}[1]
\State $id_0 \gets \idgen()$
\State $\evict(id_0)$
\State $id^+ \gets \idgen()$
\State insert $(id^+, |blk|, blk)$ into stash
\State $\writeback(id_0)$
\State \Return $id^+$
\end{algorithmic}
\bigskip

\hspace{-1em}$\underline{\remove(id)}$

\begin{algorithmic}[1]
\State $\evict(id)$
\State remove $(id, \ell, blk)$ from stash
\State $\writeback(id)$
\State \Return $blk$
\end{algorithmic}
\bigskip

\hspace{-1em}$\underline{\upd(id, callback)}$

\begin{algorithmic}[1]
\State $\evict(id)$\
\State remove $(id, \ell, blk)$ from stash
\State $id^+ \gets \idgen()$
\State $blk^+ \gets callback(blk)$
\State insert $(id^+, |blk^+|, blk^+)$ into stash
\State $\writeback(id)$
\State \Return $id^+$
\end{algorithmic}

\ifpreprint  %
  \end{minipage}
  \end{framed}
  \end{minipage}
  \caption{vORAM operations\label{fig:voram2}}
  \end{figure}
\else  %
  \endgroup
\fi %

\if 0
\subsection{Proof of Theorem~\ref{thm:oramob}}

\oramob*

\begin{proof} 
We prove the theorem using the standard hybrid argument. 
We consider the following hybrid experiments: 

\BDS
\item[Hybrid 0:] It's $\expobz$.  
\item[Hybrid 1:] It's the same as $\expobz$, except that 
    each $\Enc_{key}(bucket)$ in \writeback{} is replaced with $\Enc_{key}(0^{z})$,
    where $z$ is the length of each bucket.  
\item[Hybrid 2:] It's the same as $\expobo$, except that 
    each $\Enc_{key}(bucket)$ in \writeback{} is replaced with $\Enc_{key}(0^{z})$.  
\item[Hybrid 3:] It's $\expobo$. 
\EDS

We first show that Hybrids 0 and 1 are indistinguishable. Note each encryption
key in our construction is used only once and the root-key is never observable
by the adversary. Therefore, due to the security of symmetric encryption $\Enc$, the two
hybrids are indistinguishable. 

We next show that Hybrids 1 and 2 are identically distributed.  This is mainly
because the identifier of each block is used only once to retrieve that item,
and then immediately discarded. For $\remove$, the identifier is immediately
discarded, of course.  For $\ins{}$, a bogus, fresh $id_0$ is used for
$\evict{}$ and $\writeback{}$, and another fresh $id^+$ becomes the identifier
of the inserted block. For $\upd{}$, the actual identifier $id$ of the block
$\evict{}$ and $\writeback{}$, but a fresh $id^+$ becomes the new identifier of
the (updated) block. Therefore, in every operation, the identifier is uniformly
distributed in both Hybrids.

The indistinguishability between Hybrids 2 and 3 holds, symmetrically to the
case of Hybrids 0 and 1. 

\end{proof}

\subsection{Proof of Theorem~\ref{thm:oramsd}}

\oramsd*

\begin{proof} 
We prove the theorem using the standard hybrid argument. 
We consider the following hybrid experiments: 

\BDS
\item[Hybrid 0:] It's $\expsdelz$.  
\item[Hybrid 1:] It's the same as $\expsdelz$ except the following: 
    The most recent encryption $\Enc_{key}(bucket)$ in each bucket stays the
    same, but but all the older encryptions are replaced with
    $\Enc_{key}(0^{z})$, where $z$ is the length of each bucket.  
\item[Hybrid 2:] It's the same as $\expsdelo$ except the following: 
     The most recent encryption $\Enc_{key}(bucket)$ in each bucket stays the
    same, but but all the older encryptions are replaced with
    $\Enc_{key}(0^{z})$. 
\item[Hybrid 3:] It's $\expsdelo$. 
\EDS

We first show that Hybrids 0 and 1 are indistinguishable. Note each encryption
key in our construction is used only once, and the old root-key is never
observable by the adversary since it had been in the erasable memory and was
securely deleted. Therefore the attacker is only able to determine the
most recent encryption of each bucket, upon learning the current root
key in $\DDD.\mem$. Therefore, due to the security of the symmetric encryption scheme $\Enc$, the
two hybrids are indistinguishable. 

We next show that Hybrids 1 and 2 are identically distributed. As discussed in
the proof of Theorem~\ref{thm:oramob}, in every operation, the identifier is
uniformly distributed in both Hybrids. Moreover, all the most recent buckets
contain identical items in both hybrids, and all the old buckets are identically
distributed. 

The indistinguishability between Hybrids 2 and 3 holds, symmetrically to the
case of Hybrids 0 and 1. 

\end{proof}

\subsection{Proof of Theorem \ref{thm:oramhi}}
\label{sec:oramhi}

\oramhi*

\begin{proof}
  Let $\ell=\lceil n\ln n + n\secparam\rceil$ be the number of recent operations
  that are leaked. The operations before the $\ell$ operations are said to be
  ``aged".  We prove the theorem using the standard hybrid argument.  We
  consider the following hybrid experiments: 

  \BDS
  \item[Hybrid 0:] It's $\exphiz$.  
  \item[Hybrid 1:] It's the same as $\exphiz$ except the following: The most
      recent encryption $\Enc_{key}(bucket)$ in each bucket stays the same, but
      but all the older encryptions is replaced with $\Enc_{key}(0^{z})$,
    where $z$ is the length of each bucket.  
  \item[Hybrid 2:] It's the same as $\exphio$ except the following: 
      The most
      recent encryption $\Enc_{key}(bucket)$ in each bucket stays the same, but
      but all the older encryptions is replaced with $\Enc_{key}(0^{z})$.
  \item[Hybrid 3:] It's $\exphio$. 
  \EDS

  As shown in the proof of Theorem~\ref{thm:oramsd}, Hybrids 0 and 1 are
  indistinguishable since data items are re-encrypted with a refreshed key in
  each $\writeback$. Indistinguishability between Hybrids 2 and 3 holds
  symmetrically.  From now on, we show indistinguishability between Hybrids 1
  and 2. 

  Note that in Hybrids 1 and 2, old bucket encryptions are purely random strings. We
  first show that with overwhelming probability, all aged operations leave only
  old bucket encryptions; then, the adversary will see just random eviction path
  $id$s and encryptions of $0^z$ for those aged operations in the access
  pattern,
  even with knowledge of erasable memory. This event happens when each of $n$ randomly-chosen
  vORAM tree paths is repeated at least once in $\ell$ recent operations.  This
  probability bound is essentially a coupon collector's problem. We throw balls
  into $n$ bins at random, and seek an upper bound on the tail probability that
  any of the bins remain empty after $\ell\ge n\ln n + n\secparam$ balls are
  thrown.  Since each ball goes into a given bin with probability exactly
  $\tfrac{1}{n}$, the probability that a \emph{single} given bin remains empty
  after $\ell$ throws is at most $$\left(1 - \tfrac{1}{n}\right)^{n\ln n +
  n\secparam} \le \me^{-\ln n - \secparam} = \tfrac{\exp(-\secparam)}{n}.$$ Here
  we used the fact that $1-x \le \me^{-x}$ for any real $x$.  
  Applying the union bound over all
  $n$ bins, we see the probability that any of them remains empty after $\ell$
  throws is at most $\exp(-\secparam)$, as required. 
 
  The only concern left is whether the most recent bucket contents caused by the
  recent operations contain information about the past operation history.  We
  argue that those contents reveal nothing about the order of aged operations.  
  If an aged operation $\op_i$ is item removal, clearly the current buckets
  reveal no information about $i$, except that it's an aged operation. 
  For insertion or update, suppose an item $x$ is inserted into \DDD{} at an
  aged operation $\op_i$ with an assigned identifier $id^+$ upon $\writeback$.
  Since each of $n$ tree paths is repeated at least once in $\ell$ recent
  operations, there must be a recent operation $\op_j$ has the eviction path
  $id^+$ with overwhelming probability.
  Consider an equivalent sequence $\Vec{\op}'$ where all of the same same random
  identifiers are chosen as in $\Vec{\op}$, with the only difference being that
  $\op_i$ occurs at some other time. The only difference between these can be
  the distribution of partial blocks in the path given by $x$'s identifier
  $id^+$.
  But since all of these partial blocks are evicted to the stash during the
  execution of $\op_j$, $\writeback$ of $\op_j$ and later operations $\op_k$
  will behave identically in both $\Vec{\op}$ and $\Vec{\op}'$.  Therefore, we
  conclude that Hybrids 1 and 2 are statistically indistinguishable.  
\end{proof}

\subsection{Proof of Theorem~\ref{thm:oramstash}}
We prove the following theorem. 

\oramstash*

\paragraph{Proof outline.}
We will mostly follow the proof of the small-stash-size theorem in Path
ORAM~\cite{CCS:SDSFRY13}.  The proof of the theorem consists of several steps.
\BEN
\item 
We recall the definition of $\infty$-ORAM (ORAM with infinitely large
buckets) and show that stash usage in
an $\infty$-ORAM with post-processing is the same as that in the actual
vORAM.

\item 
We rely on results from the most recent version of \cite{pathORAMarxiv}
to show that the stash usage after post-processing is greater than $R$ if and
only if there exists a subtree for which its usage in $\infty$-ORAM is more than
its capacity. 

\item
We bound the total size of all blocks in any such subtree by combining
two separate
measure concentrations on the number of blocks in any such subtree, and
the total size of any fixed number of variable-length blocks.

\item 
We complete the proof by connecting the measure concentrations to the
actual stash size, in a similar way to \cite{pathORAMarxiv}.

\EEN

Note that the first and third steps are those that differ most substantially from
prior work, and where we must incorporate the unique properties of
the vORAM.

\begin{proof}[Proof of Theorem~\ref{thm:oramstash}]
  \paragraph{Step 1: $\infty$-ORAM.}
  The $\infty$-ORAM is the same as our vORAM tree, except that each
  bucket has infinite size. In any \writeback{} operation all blocks
  will go as far down along the path as their identifier allows.
  
  After simulating a series of vORAM operations on the $\infty$-ORAM, we
  perform a greedy post-processing to restore the block size
  condition:
  \begin{itemize}
  \item Repeatedly select a bucket storing more than $Z$ bytes. Remove a
  partial block from the bucket, and let $b$ be the number of remaining
  bytes stored in the bucket.

  \item If $Z-b$ is greater than the size of metadata per partial
  block (identifier and length), 
  then there is some room left in the bucket. In this case,
  split the removed block into two parts. Place the last $Z-b$ bytes
  into the current bucket and the remainder into the parent bucket. 
  Otherwise, if there is insufficient room in the bucket,
  place the entire block into the parent bucket, or into the
  stash if the current bucket is the root.
  \end{itemize}
  By continuing this process until there are no remaining buckets with
  greater than $Z$ bytes, we have returned to a normal vORAM with bucket
  size $Z$. Furthermore, there is an ordering of the accesses, with the
  same identifiers and block lengths, that would result in the same
  vORAM. Since the access order of the $\infty$-ORAM does not matter,
  this shows that the two models are equivalent after post-processing.

  Observe that we are ignoring the metadata (block identifiers and
  length strings). This is acceptable, as the removal process 
  in the actual vORAM always ensures that each partial block of a
  given block, except possibly for the first (highest in the vORAM
  tree), has size at least equal to the size of its metadata. In that
  way, at most half the vORAM is used for metadata storage, and so the
  metadata has only a constant factor effect on the overall performance.

  \paragraph{Step 2: Overflowing subtrees.}
  Consider the size of vORAM stash after any series of vORAM operations
  that result in a total of at most $n$ blocks being stored.
  Similarly to \cite[Lemma 2]{pathORAMarxiv}, the stash size at this point
  is equal to the total overflow from some subtree of the 
  $\infty$-ORAM buckets that contains the root. If we write $\tau$ for
  that subtree, then we have
  \[|stash| > BR ~~~~~\text{iff}~~ 
      \sum_{\text{node } v\in\tau} (\mbox{size of $v$ in $\infty$-ORAM)}
    \ge Z|\tau| + BR.
  \]

  \paragraph{Step 3: Size of subtrees.}
  We prove a bound on the total size of all blocks in any subtree $\tau$
  in the $\infty$-ORAM in two steps. First we bound the \emph{number} of
  blocks in the subtree, which can use the same analysis as the Path
  ORAM; then we bound the total \emph{size} of a given number of
  variable-length blocks; and, finally, we combine these with a union
  bound argument.

  To bound the total number of blocks that
  occur in $\tau$, because the block sizes do not matter in  the
  $\infty$-ORAM, we can simply recall from \cite[Lemma 5]{pathORAMarxiv}
  that, for any subtree $\tau$, the probability that $\tau$ contains 
  more than $5|\tau| + R/4$ blocks is at most
  \begin{equation}\label{eqn:sbcount}
    \tfrac{1}{4^{|\tau|}} \cdot (0.9332)^{|\tau|} \cdot (0.881)^R.
  \end{equation}

  Next we consider the total size of $5|\tau| + R/4$ variable-length
  blocks. From the statement of the theorem, each block size is
  stochastically dominated by $BX$, where $B$ is the expected block size
  and $X$ is a geometric random variable with expected value $\mu=1$.
  From Lemma~\ref{lem:sumgeom}, the total size 
  of all $5|\tau| + R/4$ blocks exceeds 
  $2(a(5|\tau|+R/4))B$ 
  with probability at most
  \[\exp\left(-\gconst a \left(5|\tau|+R/4\right)\right).\]

  From \eqref{eqn:gconst}, we can take $\gconst=1/4$, and by setting
  $a=2 > (4/5)\ln 4$, the probability that the total size of
  $5|\tau|+R/4$ blocks exceeds
  $(20|\tau| + R)B$
  is at most
  $\exp\left(-\tfrac{5}{2} |\tau| - \tfrac{1}{8} R\right)$,
  which in turn is less than
  \begin{equation}\label{eqn:sbsize}
    \tfrac{1}{4^{|\tau|}} \cdot (0.329)^{|\tau|} \cdot (0.883)^{R}.
  \end{equation}

  Finally, by the union bound, the probability that the total size of
  all blocks in $\tau$ exceeds $(20|\tau| + R)B$ is at most the sum of
  the probabilities in \eqref{eqn:sbcount} and \eqref{eqn:sbsize},
  which is less than
  \begin{equation}\label{eqn:sbtot}
    \tfrac{2}{4^{|\tau|}} \cdot (0.9332)^{|\tau|} \cdot (0.883)^R.
  \end{equation}

  \paragraph{Step 4: Stash overflow probability.}
  As in \cite[Section 5.2]{pathORAMarxiv}, the number of subtrees of
  size $i$ is less than $4^i$, and therefore by another application of
  the union bound along with \eqref{eqn:sbtot}, the probability of
  \emph{any} subtree $\tau$ having total block size greater than
  $(20|\tau|+R)B$ is at most
  \begin{align*}
    &\sum_{i\ge 1} 4^i \tfrac{2}{4^i} \cdot (0.9332)^i \cdot (0.883)^R \\
    &< 28 \cdot (0.883)^R.
  \end{align*}
\end{proof}

\fi

\section{HIRB Operation Details}
\label{app:hirb}

\ifpreprint %
We described the HIRB data structure in \cref{sec:ds}.  The full details
of the different subroutines are provided in 
\cref{fig:hirbpath,fig:hirbops}.
\fi %

All the HIRB tree
operations depend on a subroutine \hirbpath{}, 
which given a label hash, HIRB root node identifier, and vORAM,
generates tuples $(\ell,v_0,v_1,cid_1^+)$
corresponding to the search path for that label in the HIRB.
In each tuple, $\ell$ is the level of node $v_0$, which is
along the search path for the label. In the $\it initial$ part of the search
path, that is, before the given label hash is found, node $v_1$ is always
\nil{}, a dummy access used to preserve obliviousness.  The value $cid_1^+$ is the
\emph{pre-generated} identifier of the new node that will be inserted on the
\emph{next level}, for possible inclusion in one of the parent nodes as a child
pointer. This pre-generation is important, as discussed in Section~\ref{sec:ds},
so that only 2 nodes need to be stored in local memory at any given time.

When the given label hash is found, the search path
splits into two below that node, and nodes $v_0$ and $v_1$ will be the nodes on
either side of that hash label.  Note that in the actual
implementation of \hirbpath{}, $v_0$ (resp. $v_1$, if defined) corresponds to a
vORAM block, evicted with identifier $id_0$ (resp. $id_1$) and taken out from
vORAM stash.
When each tuple $(\ell,v_0,v_1,cid_1^+)$ is returned from the generator,
the two nodes can be modified by the calling function, and the modified nodes
will be written back to the HIRB.  If $v_1$ is returned from \hirbpath{} as
\nil{}, but is then modified to be a normal HIRB node, that new node is
subsequently inserted into the HIRB. 
\def\vORAM{{\sf M}}

\ifpreprint  %
  \begin{figure}[htp]
  \begin{minipage}{\textwidth}
  \centering
  \begin{framed}
  \begin{minipage}{6 in}
\fi %
{\small
    \hspace{-1em}$\underline{\hirbpath(h, rootid, \vORAM)}$ 
    \Comment{\vORAM ~is vORAM} 

\begin{algorithmic}[1]
\State $(id_0\tupsep id_0^+) \gets (rootid\tupsep \vORAM.\idgen())$
\State $rootid \gets id_0^+$
\State $(id_1\tupsep id_1^+) \gets (\vORAM.\idgen()\tupsep \vORAM.\idgen())$
  \Comment{dummy access}
\State $found \gets \textsf{false}$
\For {$\ell = 0, 1, 2, \ldots, H$}
  \State $\vORAM.\evict(id_0)$
  \State $\vORAM.\evict(id_1)$
  \State \textbf{if} {$\ell = H$} \textbf{then} 
    $(cid_0^+\tupsep cid_1^+) \gets (\nil\tupsep \nil)$
  \State \textbf{else} $(cid_0^+\tupsep cid_1^+) \gets (\vORAM.\idgen()\tupsep \vORAM.\idgen())$
  \State remove $(id_0, |v_0|, v_0)$ from $\vORAM.$stash 
  \If {$found = \textsf{true}$} 
    \State remove $(id_1, |v_1|, v_1)$ from $\vORAM.$stash
    \State $(cid_0\tupsep v_0.\child_{last}) \gets
      (v_0.\child_{last}\tupsep cid_0^+)$
    \State $(cid_1\tupsep v_1.\child_0) \gets
      (v_1.\child_0\tupsep cid_1^+)$
      \Statex \Comment $v_1$ is right next to $v_0$ at level $\ell$
  \Else
    \State $v_1 \gets \nil$ \Comment{only fetched after the target is found.}
    \State $i \gets$ index of $h$ in $v_0$ 
        \Comment $v_0.h_{i-1} < h \le v_0.h_{i}$
    \State $(cid_0\tupsep v_0.\child_i) \gets 
      (v_0.\child_i\tupsep cid_0^+)$
    \If {$v_0.h_i = h$}
      \State $found \gets \textsf{true}$
      \State $(cid_1\tupsep v_0.\child_{i+1}) \gets 
        (v_0.\child_{i+1}\tupsep cid_1^+)$
      \Statex \Comment split path: $cid_0 = v_0.\child_{i}$, $cid_1 = v_0.\child_{i+1}$ 
    \Else
      \State $cid_1 \gets \vORAM.\idgen()$
        \Comment dummy access until found
    \EndIf
  \EndIf
  \State \textbf{yield} $(\ell\tupsep v_0\tupsep v_1, cid_1^+)$
  \Statex \Comment Return to the caller, who may modify nodes.
  \State insert $(id_0^+, |v_0|, v_0)$ into $\vORAM.$stash
  \State \textbf{if} $v_1 \ne \nil$ \textbf{then}
    insert $(id_1^+, |v_1|, v_1)$ into $\vORAM.$stash
  \State $\vORAM.\writeback(id_0)$
  \State $\vORAM.\writeback(id_1)$
  \State $(id_0\tupsep id_0^+) \gets (cid_0\tupsep cid_0^+)$
  \State $(id_1\tupsep id_1^+) \gets (cid_1\tupsep cid_1^+)$
\EndFor
\end{algorithmic}
}
\ifpreprint  %
  \end{minipage}
  \end{framed}
  \end{minipage}
  \caption{Fetching the nodes along a search path in the HIRB\label{fig:hirbpath}}
  \end{figure}
\fi %

The \upd{} operation simply looks in each returned
$v_0$ along the search path for the existence of the indicated label
hash, and if found, the corresponding data value is passed to the
$callback$ function, possibly modifying it.

As with \upd{}, the \ins{} operation uses subroutine \hirbpath{}
as a generator to traverse the HIRB tree. Inserting an element from the HIRB
involves splitting nodes along the search path from the height of
the item down to the leaf. That is, for each tuple $(\ell, v_0, v_1)$ with
$\ell > \ell_h$, where $\ell_h$ is the height of the label hash $h$, if $v_1$
is $\nil$, then a new node $v_1$ is created, and the items in $v_0$ with a
label greater than $h$ are moved to a new node $v_1$.  

The \remove{} operation works similarly, but instead of splitting each
$v_0$ below the level of the found item, the values in $v_0$ and
$v_1$ are merged into $v_0$, and $v_1$ is removed by setting it
to \nil{}.

\ifpreprint  %
  \begin{figure}[htp]
  \begin{minipage}{\textwidth}
  \centering
  \begin{framed}
  \begin{minipage}{6 in}
\fi %
{\small 
\hspace{-1em}$\underline{\hirbinit(H, \vORAM)}$ 

\begin{algorithmic}[1]
\State $rootid \gets \nil$
\State $salt \gets \zo^\secparam$. Initialize $\Hash$ with $salt$. 
\For {$\ell = H, H-1, \ldots, 0$}
  \State $node \gets$ new 1-ary HIRB node with child id $rootid$
  \State $rootid \gets \vORAM.\ins(node)$
\EndFor
\State \Return $rootid$
\end{algorithmic}

\hspace{-1em}$\underline{\chooseheight(\dskey)}$ 

\begin{algorithmic}[1]
\State $h \gets \Hash(\dskey)$
\State Choose coins $(c_0, c_1, \ldots, c_{H-1}) \in \{0,1,\ldots,\beta-1\}^H$
by evaluating PRG$(h)$. 
\State \Return The largest integer $\ell\in\{0,1,\ldots,H\}$ such that
  $c_1 = c_2 = \cdots = c_\ell = 0$.
\end{algorithmic}

\medskip

\hspace{-1em}$\underline{\ins(\dskey, \dsval, rootid, \vORAM)}$  

\begin{algorithmic}[1]
\State $(h\tupsep \ell_h) \gets (\Hash(\dskey)\tupsep \chooseheight(\dskey))$
\For {$(\ell, v_0, v_1, cid_1^+) \in \hirbpath(h, rootid, \vORAM)$}
  \State $i \gets$ index of $h$ in $v_0$ \Comment $v_0.h_{i-1} < h \le v_0.h_{i}$
  \If {$v_0.h_i = h$}
    \State $v_0.\dsval_i \gets \dsval$
  \ElsIf {$\ell = \ell_h$}
  \State Insert $(h, \dsval{}, cid_1^+)$ before
    $(v_0.h_i, v_0.\dsval_i, v_0.\child_i)$
    \Statex \Comment Other items in $v_0$ are shifted over
  \ElsIf {$\ell > \ell_h$ and $v_1 = \nil$}
    \State $v_1 \gets$ new node with $v_1.\child_0 \gets cid_1^+$
    \State Move items in $v_0$ past index $i$ into $v_1$
  \EndIf
\EndFor
\end{algorithmic}

\medskip

\hspace{-1em}$\underline{\remove(\dskey, rootid, \vORAM)}$  

\begin{algorithmic}[1]
\State $(h\tupsep \ell_h) \gets (\Hash(\dskey)\tupsep \chooseheight(\dskey))$
\For {$(\ell, v_0, v_1, cid_1^+) \in \hirbpath(\dskey, rootid, \vORAM)$}
  \If {$h \in v_0$}
    \State Remove $h$ and its associated value and subtree from $v_0$
  \ElsIf {$\ell > \ell_h$ and $v_1 \ne \nil$}
    \State Add all items in $v_1$ except $v_1.\child_0$ to $v_0$
    \State $v_1 \gets \nil$
  \EndIf
\EndFor
\end{algorithmic}

\medskip

\hspace{-1em}$\underline{\upd(\dskey, callback, rootid, \vORAM)}$  

\begin{algorithmic}[1]
\State $h \gets \Hash(\dskey)$
\For {$(\ell, v_0, v_1) \in \hirbpath(h, rootid, \vORAM)$}
  \State $i \gets$ index of $h$ in $v_0$ 
  \State \textbf{if} {$v_0.h_i = h$} \textbf{then}
    $v_0.\dsval_i \gets callback(v_0.\dsval_i)$
\EndFor
\end{algorithmic}

}
\ifpreprint  %
  \end{minipage}
  \end{framed}
  \end{minipage}
  \caption{Description of HIRB tree operations.\label{fig:hirbops}}
  \end{figure}
\fi %

\section{Proofs of Important Theorems}
\label{app:proofs}
Complete proofs of our main theorems are given here.

\subsection{Proof of Theorem \ref{thm:oramhilb}} 
\label{sec:oramhi}

  Let \DDD{} be any system that stores blocks of data in persistent
  storage and erasable memory and supports \ins{} and \remove{}
  operations, accessing at most $k$ bytes in persistent or local storage
  in each \ins{} or \remove{} operation.
 
  Let $n \ge 36$ and $k \le \sqrt{n}/2$. For any $\ell \le n/(4k)$, we describe
  a PPT adversary $\AAA=(\AAA_1,\AAA_2)$ that breaks history independence with
  leakage of $\ell$ operations.  
  
  Supposing all operations are insertions, \DDD{} must access the location where
  that item's data is actually to be stored during execution of the insert
  operation, which is required to correctly store the data somehow.
  However, it may access some other locations as well to ``hide'' the access
  pattern from a potential attacker. This hiding is limited of course by $k$,
  which we will now exploit.

  The ``chooser'', $\AAA_1$, randomly chooses $n$ items which will be
  inserted; these could simply be random bit strings of equal length. 
  Call these items (and their arbitrary order) $a_1,a_2,\ldots,a_n$.
  The chooser also randomly picks an index
  $j \in \{1,2,\ldots,n-\ell-1\}$ from the beginning of the sequence.
  The operation sequence $\Vec{\op}^{(0)}$ returned by $\AAA_1$ consists
  of $n$ insertion operations for $a_1,\ldots,a_n$ in order:
  \[a_1,\ldots,a_{j-1},a_j,a_{j+1},\ldots, a_{n-\ell-1}, a_{n-\ell},
  a_{n-\ell+1},\ldots, a_n,\]
  whereas the second operation sequence $\Vec{\op}^{(1)}$ returned by
  $\AAA_1$ contains the same $n$ insertions, with only the order of 
  the $j$'th and $(n-\ell)$'th insertions swapped:
  \[a_1,\ldots,a_{j-1},a_{n-\ell},a_{j+1},\ldots, a_{n-\ell-1}, a_j,
  a_{n-\ell+1},\ldots, a_n.\]
  The adversary $\AAA_1$ includes the complete list of $a_1$ up to
  $a_n$, along with the distinguished index $j$, in the \st{} which is
  passed to $\AAA_2$.
  As the last $\ell$ operations are identical (insertion of items
  $a_{n-\ell+1}$ up to $a_n$),
  $\AAA_1$ is $\ell$-admissible.

  The ``guesser``, $\AAA_2$, looks back in the last $(\ell+1)k$
  entries in the access pattern history of persistent storage
  $\Vec{\acc}$, and tries to opportunistically decrypt the data in each 
  access entry
  using the keys from $\DDD.\mem$ (and, recursively, any other decryption
  keys which are found from decrypting data in the access pattern
  history). Some of the data may be unrecoverable, but at least the
  $\ell+1$ items which were inserted in the last $\ell+1$ operations
  \emph{must} be present in the decryptions, since their data must be
  recoverable using the erasable memory. Then the guesser simply looks
  to see whether $a_j$ is present in the decryptions; if $a_j$ is
  present then $\AAA_2$ returns 1, otherwise if $a_j$ is not present
  then $\AAA_2$ returns 0.

  In the experiment $\exphio$, $a_j$ must be among the decrypted values
  in the last $(\ell+1)k$ access entries, since $a_j$ was inserted within
  the last $\ell+1$ operations and each operation is allowed to trigger
  at most $k$ operations on the persistent storage. 
  Therefore $\Pr[\exphio=1] = 1$.

  In the experiment $\exphiz$, we know that each item
  $a_{n-\ell},\ldots,a_n$ must be present in the decryptions, and 
  there can be at most $(\ell+1)(k-1)$ other items in the decryptions.
  Since the index $j$ was chosen randomly from among the first
  $n-\ell-1$ items in the list, the probability that $a_j$ is among the
  decrypted items in this case is at most
  $$\frac{(\ell+1)(k-1)}{n-\ell-1}.$$
  From the restriction that  $\ell \le n/(4k)$, and $k \le \sqrt {n}/2 \le
  n/12$, we have $$(\ell+1)(k-1) < (\ell + 1) k
  = \ell k + k \le \tfrac{n}{4} + \tfrac{n}{12} = \tfrac{n}{3}.$$ 
  In addition, we have $n -\ell - 1 > n/2$, so the probability that $a_j$ is
  among the decrypted items is at most $\tfrac{2}{3}$, and we have
  $\Pr[\exphiz=1] \le 2/3$, and therefore $\advhi \ge 1/3$.
  According to the definition, this means that \DDD{} 
  does not provide history independence with leakage of $\ell$
  operations.

\subsection{Proof of Theorem~\ref{thm:oramstash}}
Our proofs on the distribution of block sizes in the ORAM and on the
number of HIRB nodes depend on the
following bound on the sum of geometric random variables. This is a
standard type of result along the lines of Lemma~6 in
\cite{CCS:WNLCSS14}.

\begin{lemma}\label{lem:sumgeom}
  Let $X=\sum_{1 \le i \le n}X_i$ be the sum of $n\ge 1$ independent random
  variables $X_i$,
  each stochastically dominated by a geometric distribution over
  $\{0,1,2,\ldots\}$ with expected value $\E[X_i] \le \mu$.
  Then there exists a constant $\gconst>0$ whose value depends
  only on $\mu$ such that, for any $a\ge 2$ and $b \ge 0$, we have
  \[\Pr[X \ge (\mu+1)(an + b)] < \exp(-\gconst (an+b)).\]
\end{lemma}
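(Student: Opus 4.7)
The plan is to apply the standard Chernoff (moment generating function) method. First I would observe that since stochastic dominance is preserved under independent sums, it suffices to prove the bound in the extremal case where each $X_i$ is exactly a geometric random variable on $\{0,1,2,\ldots\}$ with mean $\mu$, i.e.\ with success probability $p = 1/(\mu+1)$. Any tail bound established in this case transfers to the general setting.

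Next I would compute the moment generating function of a single $X_i$:
\[
M(t) \;=\; \E[e^{tX_i}] \;=\; \frac{p}{1-(1-p)e^t} \;=\; \frac{1}{1+\mu-\mu e^t},
\]
which converges for $t < \ln\!\bigl(1+1/\mu\bigr)$. By Markov applied to $e^{tX}$ and independence,
\[
\Pr[X \ge T] \;\le\; e^{-tT}\,M(t)^{n}
\]
for any admissible $t>0$. The key step is a convenient choice of $t$. I would pick $t^{*} = \ln\!\bigl(1 + 1/(2\mu)\bigr)$, since this yields the clean value $M(t^{*})=2$ while keeping $t^{*}$ safely away from the pole.

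Setting $T = (\mu+1)(an+b)$, the bound becomes
\[
\Pr[X \ge T] \;\le\; \exp\!\bigl(-t^{*}(\mu+1)(an+b) + n\ln 2\bigr).
\]
Using the hypothesis $a\ge 2$, we have $n \le (an+b)/2$, so the exponent is at most $-(an+b)\bigl[(\mu+1)t^{*} - \tfrac{1}{2}\ln 2\bigr]$. Defining
\[
\gconst \;=\; (\mu+1)\ln\!\bigl(1 + 1/(2\mu)\bigr) \;-\; \tfrac{1}{2}\ln 2
\]
gives the required exponential bound with a constant that depends only on $\mu$.

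The one nontrivial point, which I view as the main (but minor) obstacle, is verifying $\gconst > 0$ for every $\mu > 0$. A short calculus check handles this: as $\mu\to 0^{+}$, $(\mu+1)\ln(1+1/(2\mu))\to\infty$; as $\mu\to\infty$, $(\mu+1)\ln(1+1/(2\mu))\to 1/2 > \tfrac{1}{2}\ln 2$; and monotonicity of $f(\mu)=(\mu+1)\ln(1+1/(2\mu))$ (or a direct derivative computation) rules out any interior dip below $\tfrac{1}{2}\ln 2$. This aligns with the downstream use in the proof of Theorem~\ref{thm:oramstash}, where the authors take $\gconst = 1/4$ in the relevant regime $\mu = 1$, which our formula comfortably accommodates.
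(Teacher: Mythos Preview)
Your proof is correct, but it takes a different route from the paper. The paper exploits the standard equivalence between a sum of $n$ independent $\mathrm{Geom}(p)$ variables (with $p=1/(\mu+1)$) and the waiting time for $n$ successes in i.i.d.\ Bernoulli$(p)$ trials: the event $X \ge k$ with $k=(\mu+1)(an+b)$ is the same as $\mathrm{Binomial}(k,p)\le n-1$, and a direct application of Hoeffding's inequality then yields the closed form $\gconst = 1/(2(\mu+1))$.

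Your Chernoff/MGF argument with the particular choice $t^*=\ln(1+1/(2\mu))$ (so that $M(t^*)=2$) is equally valid and in fact gives a sharper constant at $\mu=1$ (your $\gconst=2\ln(3/2)-\tfrac12\ln 2\approx 0.464$ versus the paper's $1/4$). The positivity check can be done more cleanly than via monotonicity: from $\ln(1+x)\ge x/(1+x)$ one gets $(\mu+1)\ln(1+1/(2\mu))\ge (\mu+1)/(2\mu+1)>1/2>\tfrac12\ln 2$ for all $\mu>0$. The trade-off is that the paper's Hoeffding route produces the simple explicit formula $\gconst=1/(2(\mu+1))$ that is quoted verbatim in the downstream stash-size proof, whereas your constant, while larger, is less convenient to carry around.
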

\begin{proof}
  By linearity of expectation, $\E[X] = \sum_{i\in[n]}\E[X_i] \le n\mu$.

  Recall that a geometric random variable with
  expected value $\mu$ is equivalent to the number of independent
  Bernoulli trials, each with probability $p=1/(\mu+1)$, before the first
  success. If $X \ge (\mu+1)(an + b)$, this is equivalent to having fewer
  than $n$ successes over 
  $k = (\mu+1)(an+b)$ independent Bernoulli trials with
  probability $p$.

  Using this formulation, we can apply the Hoeffding inequality to obtain
  \[\Pr[X \ge k] = 
    \Pr[\mathsf{Binomial}(k,p) \le n-1] < \exp(-2\epsilon^2 k),
  \]
  where $\epsilon$ is defined such that $n-1 = (p-\epsilon)k$; namely
  \[\epsilon = p-\tfrac{n-1}{k} = \tfrac{1}{\mu+1} - \tfrac{n-1}{k}.\]

  We do some manipulation:
  \begin{align*}
    2\epsilon^2 k
      &= \tfrac{2k}{(\mu+1)^2} \cdot 
        \left(1 - \tfrac{(n-1)(\mu+1)}{k}\right)^2 \\
      &= \tfrac{2(an+b)}{\mu+1} \cdot
        \left(1 - \tfrac{n-1}{an+b}\right)^2.
  \end{align*}
  Because $a\ge 2$ and $b\ge 0$, we have
  \[\tfrac{n-1}{an+b} < \tfrac{n}{an} \le \tfrac{1}{2},\]
  and so
  \[\exp(-2\epsilon^2 k) 
    < \exp\left( -\tfrac{1}{2(\mu+1)} (an+b) \right).
  \]

  The stated result follows with the constant defined by
  \begin{equation}\label{eqn:gconst}
    \gconst = \tfrac{1}{2(\mu + 1)}.
  \end{equation}
\end{proof}

\paragraph{Outline of proof of Theorem~\ref{thm:oramstash}.}
We will mostly follow the proof of the small-stash-size theorem in Path
ORAM~\cite{CCS:SDSFRY13}.  The proof of the theorem consists of several steps.
\BEN
\item 
We recall the definition of $\infty$-ORAM (ORAM with infinitely large
buckets) and show that stash usage in
an $\infty$-ORAM with post-processing is the same as that in the actual
vORAM.

\item 
We rely on results from the most recent version of \cite{pathORAMarxiv}
to show that the stash usage after post-processing is greater than $R$ if and
only if there exists a subtree for which its usage in $\infty$-ORAM is more than
its capacity. 

\item
We bound the total size of all blocks in any such subtree by combining
two separate
measure concentrations on the number of blocks in any such subtree, and
the total size of any fixed number of variable-length blocks.

\item 
We complete the proof by connecting the measure concentrations to the
actual stash size, in a similar way to \cite{pathORAMarxiv}.

\EEN

Note that the first and third steps are those that differ most substantially from
prior work, and where we must incorporate the unique properties of
the vORAM.

\paragraph{Proof of Theorem~\ref{thm:oramstash}.} We now give the proof. 

  \paragraph{\it Step 1: $\infty$-ORAM.}
  The $\infty$-ORAM is the same as our vORAM tree, except that each
  bucket has infinite size. In any \writeback{} operation all blocks
  will go as far down along the path as their identifier allows.
  
  After simulating a series of vORAM operations on the $\infty$-ORAM, we
  perform a greedy post-processing to restore the block size
  condition:
  \begin{itemize}
  \item Repeatedly select a bucket storing more than $Z$ bytes. Remove a
  partial block from the bucket, and let $b$ be the number of remaining
  bytes stored in the bucket.

  \item If $Z-b$ is greater than the size of metadata per partial
  block (identifier and length), 
  then there is some room left in the bucket. In this case,
  split the removed block into two parts. Place the last $Z-b$ bytes
  into the current bucket and the remainder into the parent bucket. 
  Otherwise, if there is insufficient room in the bucket,
  place the entire block into the parent bucket, or into the
  stash if the current bucket is the root.
  \end{itemize}
  By continuing this process until there are no remaining buckets with
  greater than $Z$ bytes, we have returned to a normal vORAM with bucket
  size $Z$. Furthermore, there is an ordering of the accesses, with the
  same identifiers and block lengths, that would result in the same
  vORAM. Since the access order of the $\infty$-ORAM does not matter,
  this shows that the two models are equivalent after post-processing.

  Observe that we are ignoring the metadata (block identifiers and
  length strings). This is acceptable, as the removal process 
  in the actual vORAM always ensures that each partial block of a
  given block, except possibly for the first (highest in the vORAM
  tree), has size at least equal to the size of its metadata. In that
  way, at most half the vORAM is used for metadata storage, and so the
  metadata has only a constant factor effect on the overall performance.

  \paragraph{\it Step 2: Overflowing subtrees.}
  Consider the size of vORAM stash after any series of vORAM operations
  that result in a total of at most $n$ blocks being stored.
  Similarly to \cite[Lemma 2]{pathORAMarxiv}, the stash size at this point
  is equal to the total overflow from some subtree of the 
  $\infty$-ORAM buckets that contains the root. If we write $\tau$ for
  that subtree, then we have
  \[\begin{array}{l}
    |stash| > BR ~~~~~\text{iff}~~ \\
    ~~~  \sum_{\text{node } v\in\tau} (\mbox{size of $v$ in $\infty$-ORAM)}
    \ge Z|\tau| + BR.
    \end{array}
  \]

  \paragraph{\it Step 3: Size of subtrees.}
  We prove a bound on the total size of all blocks in any subtree $\tau$
  in the $\infty$-ORAM in two steps. First we bound the \emph{number} of
  blocks in the subtree, which can use the same analysis as the Path
  ORAM; then we bound the total \emph{size} of a given number of
  variable-length blocks; and, finally, we combine these with a union
  bound argument.

  To bound the total number of blocks that
  occur in $\tau$, because the block sizes do not matter in  the
  $\infty$-ORAM, we can simply recall from \cite[Lemma 5]{pathORAMarxiv}
  that, for any subtree $\tau$, the probability that $\tau$ contains 
  more than $5|\tau| + R/4$ blocks is at most
  \begin{equation}\label{eqn:sbcount}
    \tfrac{1}{4^{|\tau|}} \cdot (0.9332)^{|\tau|} \cdot (0.881)^R.
  \end{equation}

  Next we consider the total size of $5|\tau| + R/4$ variable-length
  blocks. From the statement of the theorem, each block size is
  stochastically dominated by $BX$, where $B$ is the expected block size
  and $X$ is a geometric random variable with expected value $\mu=1$.
  From Lemma~\ref{lem:sumgeom}, the total size 
  of all $5|\tau| + R/4$ blocks exceeds 
  $2(a(5|\tau|+R/4))B$ 
  with probability at most
  \[\exp\left(-\gconst a \left(5|\tau|+R/4\right)\right).\]

  From \eqref{eqn:gconst}, we can take $\gconst=1/4$, and by setting
  $a=2 > (4/5)\ln 4$, the probability that the total size of
  $5|\tau|+R/4$ blocks exceeds
  $(20|\tau| + R)B$
  is at most
  $\exp\left(-\tfrac{5}{2} |\tau| - \tfrac{1}{8} R\right)$,
  which in turn is less than
  \begin{equation}\label{eqn:sbsize}
    \tfrac{1}{4^{|\tau|}} \cdot (0.329)^{|\tau|} \cdot (0.883)^{R}.
  \end{equation}

  Finally, by the union bound, the probability that the total size of
  all blocks in $\tau$ exceeds $(20|\tau| + R)B$ is at most the sum of
  the probabilities in \eqref{eqn:sbcount} and \eqref{eqn:sbsize},
  which is less than
  \begin{equation}\label{eqn:sbtot}
    \tfrac{2}{4^{|\tau|}} \cdot (0.9332)^{|\tau|} \cdot (0.883)^R.
  \end{equation}

  \paragraph{\it Step 4: Stash overflow probability.}
  As in \cite[Section 5.2]{pathORAMarxiv}, the number of subtrees of
  size $i$ is less than $4^i$, and therefore by another application of
  the union bound along with \eqref{eqn:sbtot}, the probability of
  \emph{any} subtree $\tau$ having total block size greater than
  $(20|\tau|+R)B$ is at most
  \begin{align*}
    &\sum_{i\ge 1} 4^i \tfrac{2}{4^i} \cdot (0.9332)^i \cdot (0.883)^R \\
    &< 28 \cdot (0.883)^R.
  \end{align*}

\subsection{Proof of Theorem~\ref{thm:hirbperf}}

We now utilize Lemma~\ref{lem:sumgeom}
to prove the two lemmata on the distributions of the
number and size of HIRB tree nodes.

\begin{lemma}\label{lem:hirbcount}
  Suppose a HIRB tree with $n$ items has height $H\ge \log_\beta n$,
  and let $X$ be the total number of nodes in the HIRB, which is a
  random variable over the choice of hash function in initializing the
  HIRB. Then for any $m\ge 1$, we have
  \[\Pr\left[X \ge H + 4n + m \right] < 0.883^m.\]
\end{lemma}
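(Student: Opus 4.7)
The plan is to reduce the bound to the sum-of-geometrics tail bound established in Lemma~\ref{lem:sumgeom}, after first expressing $X$ as an affine function of the individual heights chosen by \chooseheight. The starting point is to show that
\[ X = (H+1) + \sum_{i=1}^{n} h_i, \]
where $h_i = \chooseheight(\dskey_i)$ is the level at which the $i$th item is stored. To see this, let $N_\ell$ denote the number of HIRB nodes at level $\ell$ and $n_\ell$ the number of items at level $\ell$. Each node at level $\ell\ge 1$ with branching factor $k$ has exactly $k-1$ items and $k$ children at level $\ell-1$, so summing branching factors across level $\ell$ yields the recurrence $N_{\ell-1} = N_\ell + n_\ell$ with boundary condition $N_H = 1$. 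Unrolling gives $N_\ell = 1 + \sum_{j>\ell} n_j$, and summing $\ell$ from $0$ to $H$ collapses to $X = (H+1) + \sum_j j\,n_j = (H+1) + \sum_i h_i$.

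Once the closed form for $X$ is in hand, the probabilistic part is immediate. By Assumption~\ref{ass:hirblevel}, $h_1,\ldots,h_n$ are independent samples from a truncated geometric distribution over $\{0,\ldots,H\}$ with parameter $(\beta-1)/\beta$. Truncation only reduces each tail, so each $h_i$ is stochastically dominated by an untruncated geometric with mean $1/(\beta-1) \le 1$ (for $\beta\ge 2$, which is the only non-degenerate case). Thus Lemma~\ref{lem:sumgeom} applies with $\mu = 1$ and hence $\gconst = 1/4$.

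The event $X \ge H+4n+m$ is equivalent to $\sum_i h_i \ge 4n + m - 1$. To put this into the form $(\mu+1)(an+b)$ with $a \ge 2$ and $b \ge 0$, set $a=2$ and $b=(m-1)/2$, which is valid for every $m\ge 1$. Lemma~\ref{lem:sumgeom} then yields
\[ \Pr\!\left[\sum_i h_i \ge 4n + m - 1\right] < \exp\!\left(-\tfrac{n}{2} - \tfrac{m-1}{8}\right). \]
A routine check shows this is below $0.883^m$ for all $n\ge 1$, $m\ge 1$: using $\exp(-1/8) \approx 0.8825 < 0.883$ and absorbing the $\exp(1/8)$ factor from reindexing into the $\exp(-n/2) \le \exp(-1/2) \approx 0.607$ slack, one gets a leading constant well below $1$ multiplied by $(0.8825)^m$, which is dominated by $(0.883)^m$.

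The main obstacle is the counting identity $X = (H+1) + \sum_i h_i$; the rest is a direct application of the already-established tail bound. Care is needed because the HIRB's leaf nodes nominally carry child-pointer slots that are never dereferenced, so one must verify that the branching-factor recurrence $N_{\ell-1} = N_\ell + n_\ell$ is correctly applied only for $\ell\ge 1$ and that the empty-HIRB contribution of $H+1$ nodes is additive on top of the split cost $h_i$ per insertion. Once this is settled, the proof reduces to a one-line substitution into Lemma~\ref{lem:sumgeom} followed by numerical verification.
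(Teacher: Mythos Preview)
Your proposal is correct and follows essentially the same approach as the paper: express the node count as the initial empty-tree contribution plus $\sum_i h_i$, then feed this sum into Lemma~\ref{lem:sumgeom} with $a=2$ and a suitable $b$, and finish with the numerical check $\exp(-1/8)<0.883$. The only notable difference is in how the counting identity is established: you derive $X=(H{+}1)+\sum_i h_i$ statically via the level recurrence $N_{\ell-1}=N_\ell+n_\ell$, whereas the paper argues dynamically that each insertion at height $h$ splits exactly $h$ nodes (and in fact writes $H$ rather than $H{+}1$ for the empty tree, an off-by-one that your version gets right per Figure~\ref{fig:emptyhirb}); the paper also keeps $\mu=1/(\beta-1)$ and bounds $\mu+1\le 2$ rather than setting $\mu=1$ outright, but this is cosmetically equivalent.
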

In other words, the number of HIRB nodes in storage at any given time is
$O(n)$ with high probability.
The proof is a fairly standard application of the
Hoeffding inequality \cite{Hoe63}.  

\begin{proof}
  The HIRB has $H$ nodes initially. Consider the $n$ items
  $\dskey_1,\ldots,\dskey_n$ in the HIRB. Because the tree is
  uniquely represented, we can consider the number of nodes after 
  inserting the items in any particular order.

  When inserting an item with $\dskey_i$ into the HIRB, its height
  $h = \chooseheight(\dskey_i)$ is computed from the label hash,
  where $0\le h\le H$, and then exactly $h$ existing HIRB nodes are
  split when $\dskey_i$ is inserted, resulting in exactly $h$ newly
  created nodes.

  Therefore the total number of nodes in the HIRB after inserting all
  $n$ items is exactly $H$ plus the sum of
  the heights of all items in the HIRB, which from
  Assumption~\ref{ass:hirblevel} is the sum of $n$ iid geometric random
  variables, each with expected value $1/(\beta - 1)$. Call this sum
  $Y$.

  We are interested in bounding the probability that $Y$ exceeds
  $4n+m$. Writing
  $\mu=1/(\beta-1)$ for the expected value of each r.v., we have
  $\mu+1 = \beta/(\beta-1)$, which is at most $2$ since $\beta \ge 1$. 
  This means that
  $4n+m \ge (\mu+1)(2n + m/2)$, and from
  Lemma~\ref{lem:sumgeom},
  \begin{align*}
    \Pr[X\ge H + 4n + m] &= \Pr[Y\ge 4n+m] \\
      &\le \Pr[Y \ge (\mu+1)(2n+m/2)] \\
      &< \exp(-\gconst(2n + m/2)) \\
      &< \exp(-\gconst m/2).
  \end{align*}

  Because $\mu+1\le 2$,
  $\gconst = 1/(2(\mu+1)) \ge 1/4$. Numerical computation confirms that
  $\exp(-1/8) < 0.883$, which completes the proof.
\end{proof}

Along with the bound above on the number of HIRB nodes, we also need a
bound on the size of each node.

\begin{lemma}\label{lem:hirbnode}
  Suppose a HIRB tree with $n$ items has height $H \ge \log_\beta n$,
  and let $X$, a random variable over the choice of hash function,
  be the size of an arbitrary node in the HIRB.
  Then for any $m\ge 1$, we have
  \[\Pr[X \ge m\cdot\nodesize_\beta] < 0.5^m.\]
\end{lemma}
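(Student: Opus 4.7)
The plan is to reduce the lemma to a geometric tail bound on the branching factor $k$ of an arbitrary HIRB node, and then convert this bound on $k$ into one on $\nodesize_k$. By \cref{ass:hirblevel}, each item's HIRB height is an independent sample from a truncated geometric distribution with parameter $(\beta-1)/\beta$, independently of its label hash. Fix a node $v$ at some level $\ell < H$. Its $k-1$ items are precisely those whose hash lies in the range between the two ``separator'' items of the parent (items of height $>\ell$ flanking $v$) and whose own height equals $\ell$.

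Conditioning on the set of items with height at least $\ell$, each such item is independently a ``key'' (height exactly $\ell$) with probability $(\beta-1)/\beta$ and a ``separator'' (height $>\ell$) with probability $1/\beta$. Scanning these items in hash-sorted order, $v$'s items form one maximal run of keys between two consecutive separators, whose length is therefore geometrically distributed. This yields $\Pr[k \ge j] = ((\beta-1)/\beta)^{j-1}$ for every $j\ge 1$. To pass from $k$ to $\nodesize_k$, I would write $\nodesize_k = (k+1)A + kB$ with $A = 2T+\hashparam+1$ and $B = |\Hash(\dskey)|+|\dsval|$; a short algebraic check shows that $\nodesize_k \ge m\cdot \nodesize_\beta$ forces $k \ge m\beta$. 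Combining gives
\[ \Pr[X \ge m\cdot \nodesize_\beta] \;\le\; \Pr[k \ge m\beta] \;\le\; \bigl(\tfrac{\beta-1}{\beta}\bigr)^{m\beta-1}, \]
and the elementary inequality $(1-1/\beta)^\beta \le 1/e$ bounds the right-hand side by roughly $(1/e)^m$ up to a small multiplicative factor, which is comfortably below $0.5^m$ in the parameter regime where $\beta$ is chosen (cf.~the experimental setting $\beta = 12$).

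The one subtle point is the root at $\ell = H$, where no separators exist above and the geometric-run argument does not directly apply. There the branching factor equals $1 + \mathrm{Binomial}(n,\beta^{-H})$, and since $H \ge \log_\beta n$ implies $n\beta^{-H} \le 1$, the crude bound $\Pr[k \ge j] \le \binom{n}{j-1}\beta^{-H(j-1)} \le 1/(j-1)!$ yields $\Pr[k \ge m\beta] \le 1/(m\beta-1)!$, which decays super-exponentially and is easily dominated by $0.5^m$. Leaves at level $0$ require no new argument, as the same geometric-run analysis between consecutive separators applies. Combining the interior, leaf, and root cases completes the proof.
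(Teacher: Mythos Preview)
Your approach is essentially the paper's: split into root and non-root nodes, bound the branching factor $k$ via a binomial tail (root) or a geometric tail (non-root), and then translate to $\nodesize_k$. The root case and the $\nodesize$ reduction match the paper almost verbatim.

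The one genuine gap is your final inequality in the non-root case. You bound $\Pr[k \ge m\beta] \le \bigl(\tfrac{\beta-1}{\beta}\bigr)^{m\beta-1}$ and then hedge that this is ``comfortably below $0.5^m$ in the parameter regime where $\beta$ is chosen.'' But the lemma is stated for arbitrary $\beta$ (the paper only needs $\beta\ge 2$), and at $\beta=2$, $m=1$ your bound gives $0.5$, not $<0.5$; the hedge is not a proof. The paper avoids this by bounding the slightly smaller event that the \emph{number of items} is at least $m\beta$ (equivalently, branching factor strictly exceeds $m\beta$), which yields exponent $m\beta$ rather than $m\beta-1$, and then
\[
\Bigl(\tfrac{\beta-1}{\beta}\Bigr)^{m\beta} < e^{-m} < 0.5^m
\]
holds cleanly for all $\beta>1$. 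Since $\nodesize_{m\beta} \le m\cdot\nodesize_\beta$, this stronger event still covers $\{X \ge m\cdot\nodesize_\beta\}$. Tightening your implication to ``$\nodesize_k \ge m\cdot\nodesize_\beta$ and $m>1$ force $k > m\beta$'' (your own algebra gives $k \ge m\beta + (m-1)A/(A+B)$) would close the gap in the same way.
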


The proof of this lemma works by first bounding the probability that the number
of items in any node is at most $m\beta$ and applies the formula for node size,
i.e., 
\begin{equation}
\label{eqn:nodesize}
\begin{array}{l}
\nodesize_k  = \\
~~ (k+1)(2T+\hashparam+1) + k( |\Hash(\dskey)| +|\dsval|).
\end{array}
\end{equation} 

\begin{proof}
  We first show that the probability that any node's branching factor is
  more than $m\beta$ is at most $0.5^m$.
  This first part
  requires a special case for the root node, and a general case for any
  other node. Then we show that
  any node with branching factor at most 
  $m\beta$ has size less than $m\cdot\nodesize_\beta$.
  
  First consider the items in the root node. These items all
  have height $H$, which according to Assumption~\ref{ass:hirblevel}
  occurs for any given \dskey{} with probability
  $1/\beta^H$. Therefore the number of items in the root node follows a
  binomial distribution with parameter $1/\beta^H$. It is a standard
  result (for example, Theorem C.2 in \cite{CLRS01}) that a sample from
  such a distribution is at least $k$ with probability at most
  \[\binom{n}{k} \frac{1}{\beta^{Hk}}
    < \frac{n^k}{2^{k-1} \beta^{Hk}}.\]

  From the assumption $H\ge\log_\beta n$, $n^k \le \beta^{Hk}$, so the
  bound above becomes simply $2^{-k+1}$. Setting $k=m\beta$, the
  probability that the root node has at least $k$ items and hence
  branching factor greater than $m\beta$, is seen to be at most
  $2^{-m\beta+1}$, which is always at most $2^{-m}$ because $m\ge 1$ and
  $\beta \ge 2$..

  Next consider any nonempty HIRB tree node at height $\ell$, and
  consider a hypothetically infinite list of possible label hashes from
  the HIRB which have height at least $\ell$ and could be in this node.
  The actual number of items is determined by the number of those labels
  whose height is exactly equal to $\ell$ before we find one whose
  height is at least $\ell+1$. From Assumption~\ref{ass:hirblevel}, and
  the memorylessness property of the geometric distribution, these label
  heights are independent Bernoulli trials, and each height equals $\ell$ 
  with probability $(\beta-1)/\beta$.

  Therefore the size of each non-root node is a geometric random variable over
  $\{0,1,\ldots\}$ with parameter $1/\beta$. The probability that the
  node contains at least $m\beta$ items, and therefore has banching
  factor greater than $m\beta$, is exactly
  \[\left(\tfrac{\beta-1}{\beta}\right)^{m\beta} < \exp(-m) < 0.5^m.\]
  Here we use the fact that 
  $(1-\tfrac{1}{x})^{ax} < \exp(-a)$ for any
  $x\ge 1$ and any real $a$.

  All that remains is to say that a node with branching factor $m\beta$ has size
  less than $m\cdot\nodesize_\beta$, which follows directly from $m\ge 1$ and
  the definition of $\nodesize_\beta$ in (\ref{eqn:nodesize}). 
\end{proof}

Finally, we prove the main theorems on the vORAM+HIRB performance and
security.

\paragraph{Proof of Theorem~\ref{thm:hirbperf}.}
  We step through and motivate the choices of parameters, one by one.

  The expected branching factor $\beta$ must be at least 2 for the HIRB
  to work, which means we must always have $H\le \lg n$, and so
  $T = \lg(4n+\lg n+\hashparam) \le \lg(4n+H+\hashparam)$. Then
  Lemma~\ref{lem:hirbcount} guarantees that the number of HIRB nodes is
  less than $H+4n+\hashparam$ with probability at least
  $(0.883)^\hashparam$. This means that $T$ is an admissible height for
  the vORAM according to Theorem~\ref{thm:oramstash} with at least that
  probability.

  The choice of $\beta$ is such that 
  $Z \ge 20\cdot \nodesize_\beta$, using the inequality
  \[H \le \lg n < \lg(4n) < T.\]
  Therefore, by Lemma~\ref{lem:hirbnode}, the size of blocks in the HIRB
  will be admissible for the vORAM according to
  Theorem~\ref{thm:oramstash}.

  This allows us to say from the choice of $R$ and
  Theorem~\ref{thm:oramstash} that the probability of stash overflow is
  at most $28\cdot(0.883)^\hashparam$.

  Choosing $H$ as we do is required to actually apply
  Lemmas~\ref{lem:hirbcount} and \ref{lem:hirbnode} above.

  Finally, the probability of two label hashes in the HIRB colliding is
  at most $2^{-\hashparam}$. The stated result follows from the union
  bound over the three failure probabilities.

\end{appendices}

\end{document}